\definecolor{azure}{rgb}{0.54, 0.17, 0.89}
\newcommand{\colorcomment}[1]{\Comment{ {\color{azure} #1}} }
\newcommand{\maincolorcomment}[1]{{\color{azure}// #1 } }
\def\thm@space@setup{\thm@preskip=2pt
\thm@postskip=2pt \itshape}
\newtheoremstyle{newstyle}      
{} 
{} 
{\mdseries} 
{} 
{\bfseries} 
{.} 
{ } 
{} 
\newtheorem{assumption}{Assumption}
\theoremstyle{definition}
\theoremstyle{remark}
\newcommand{\N}{{\mathcal{N}}}
\renewcommand{\H}{{\mathcal{H}}}
\newcommand{\A}{{\mathcal{A}}}
\newcommand{\C}{{\mathcal{C}}}
\newcommand{\Z}{{\mathcal{Z}}}
\newcommand{\T}{{\mathcal{T}}}
\newcommand{\cF}{{\mathcal{F}}}
\newcommand{\cN}{{\mathcal{N}}}
\renewcommand{\tt}{\mathcal{T}}
\newcommand{\zz}{\mathcal{Z}}
\newcommand{\I}{{\mathcal I}}
\newcommand{\E}{{\mathcal E}}
\newcommand{\W}{{\mathcal W}}
\newcommand{\J}{{\mathcal J}}
\renewcommand*{\@fnsymbol}[1]{\ensuremath{\ifcase#1\or { }\or \dagger\or \ddagger\or
    \mathsection\or \mathparagraph\or \|\or **\or \dagger\dagger
    \or \ddagger\ddagger \else\@ctrerr\fi}}
\newcommand{\pv}[1] {{\textcolor{purple}{PV: #1}}}
\newcommand{\sk}[1] {{\textcolor{red}{SK: #1}}}
\newcommand{\vb}[1] {{\textcolor{azure}{}}}
\title{Proof-of-Stake Longest Chain Protocols:\\ Security vs Predictability }
\author{
Vivek Bagaria$^\star$,
Amir Dembo$^\star$,
Sreeram Kannan$^\ddagger$,
Sewoong Oh$^\ddagger$,
David Tse$^\star$,
Pramod Viswanath$^\dagger$,
Xuechao Wang$^\dagger$,
Ofer Zeitouni$^+$
\thanks{The authors are listed alphabetically. 
Email: vbagaria@stanford.edu,
amir@math.stanford.edu,  
ksreeram@uw.edu, sewoong@cs.washington.edu, 
dntse@stanford.edu, 
pramodv@illinois.edu,
xuechao2@illinois.edu,
ofer.zeitouni@weizmann.ac.il.
Amir Dembo and Ofer Zeitouni
were partially supported  by a US-Israel BSF grant.}
}
\institute{
$^\dagger$University of Illinois Urbana-Champaign,
$^\star$Stanford University,\\
$^\ddagger$University of Washington,
$^+$Weizmann Institute of Science
}
\begin{document}
\sloppy
\authorrunning{Bagaria et al.}

\maketitle
\begin{abstract}
    The Nakamoto longest chain protocol is remarkably simple and has been proven to provide security against any adversary with less than $50\%$ of the total hashing power. 
    Proof-of-stake (PoS) protocols are an energy efficient alternative; 
    however existing protocols adopting Nakamoto's longest chain design achieve provable security only by allowing long-term predictability, subjecting the system to serious bribery attacks. In this paper, we prove that a natural longest chain PoS protocol with similar predictability as Nakamoto's PoW protocol can achieve security against any adversary with less than $1/(1+e)$ fraction of the total stake. Moreover we propose a new family of longest chain PoS  protocols that achieve security against a $50\%$ adversary, while only requiring short-term predictability. Our proofs present a new approach to analyzing the formal security of blockchains, based on a notion of {\em adversary-proof convergence}.

    


\end{abstract}

\section{Introduction}
\label{sec:intro}

Bitcoin is the original blockchain, invented by Nakamoto. At the core  is the   permissionless consensus problem, which Nakamoto solved 
with a remarkably simple but powerful scheme known as the longest chain protocol. 
It uses only basic cryptographic primitives (hash functions and digital signatures). 
In the seminal paper \cite{bitcoin} that introduced the original Bitcoin protocol, 
Nakamoto also showed that the protocol is secure against one specific attack, a private double-spend attack, if the fraction of adversarial hashing power, $\beta$, is less than half the hashing power of the network. This  attack is mounted by the adversary trying to grow a long chain over a long duration in private to replace the public chain.  Subsequently, the security of Bitcoin against {\em all} possible attacks is proven in \cite{backbone}, and further extended to a more realistic network delay model in \cite{pss16}. 

The permissionless design (robustness to Sybil attacks) of Bitcoin is achieved via a proof-of-work (PoW) mining process, but comes at the cost of large energy consumption. Recently proof-of-stake (PoS)  protocols have emerged as an energy-efficient alternative. 
When running a lottery to win the right to propose the next valid block on the blockchain, each node wins with probability proportional its stake  in the total pool.   
This replaces the resource intense mining process of PoW, while ensuring fair chances to contribute and claim rewards. 

There are broadly two families of PoS protocols: those derived from decades of research in Byzantine Fault Tolerant (BFT) protocols and those  inspired by the Nakamoto longest chain protocol. Attempts at blockchain design via the BFT approach include Algorand \cite{chen2016algorand,gilad2017algorand} and  Hotstuff \cite{yin2018hotstuff}.
The adaptation of these new  protocols into blockchains is an active area of research and engineering \cite{gilad2017algorand,baudet2018state}, with large scale permissionless deployment as yet untested. 

Motivated and inspired by the time-tested  Nakamoto longest chain protocol are the PoS designs of  Snow White \cite{bentov2016snow} and the Ouroboros family  of protocols \cite{kiayias2017ouroboros,david2018ouroboros,badertscher2018ouroboros}. 
The inherent energy efficiency of the PoS setting comes with the cost of  enlarging the space of adversarial actions. 
In particular, the attacker can ``grind'' on the various sources of the randomness, i.e., attempt multiple samples from the sources of randomness to find a favorable one. Since these multiple attempts are without any cost to the attacker this strategy is  also  known as a {\em nothing-at-stake} (NaS) attack.   
One way to prevent an NaS attack is to 
 rely on a source of randomness on which a consensus has been reached.  In Snow White \cite{bentov2016snow} and the Ouroboros family \cite{kiayias2017ouroboros,david2018ouroboros,badertscher2018ouroboros}, 
 this agreed upon randomness  is derived from the stabilized segment of the blockchain from a few epochs before. {Each epoch is a fixed set of consecutive PoS lottery slots that use the same source of agreed upon randomness.} 
 However, this comes at a price of 
 allowing each individual node  to simulate and predict in advance whether it is going to win the PoS lottery at a given slot and add a new block to the chain. Further, as the size of each epoch is proportional to the security parameter $\kappa$ (specifically, a block is confirmed if and
only if it is more than $\kappa$ blocks deep in the blockchain), higher security necessarily implies that the nodes can predict further ahead into the future. 
 This is a serious security concern, as predictability makes a protocol vulnerable against other types of attacks 
 driven by incentives, 
 such  as predictable selfish mining or bribing attacks  \cite{brown2019formal}. 
 



{\bf Nakamoto-PoS}.  A straightforward PoS adoption of Nakamoto protocol, which in contrast to Ouroboros and Snow White can update randomness every block,  runs as follows; we term the protocol as Nakamoto-PoS. 
The protocol proceeds in discrete time units called {\it slots}, during which each node runs the ``PoS lottery'',  a  leader election with a winning probability proportional to the stake owned by the node --   winners get to {\em propose} new blocks. Each node computes 
 $hash = H({\rm time}, \text{secret key}, {\rm parentBk}.hash)$, where  the hash function  $H$ is a {\em verifiable random function} (VRF) (formally defined in \S \ref{sec:vrf}), which enables the nodes to run leader elections with their secret keys (the output $hash$ is verified with the corresponding public key). 
The node $n$ is elected a leader if $hash$ is smaller than a threshold $\rho \times {\sf stake}_n$, that is proportional to its stake ${\sf stake}_n$, then the node $n$  proposes  a new block consisting of time, ${\rm parentBk}.hash$, public key and $hash$, and appends it to the parent block.  
A detailed algorithmic description of this protocol is in \S \ref{app:pseudocode} (with $c=1$). 
Following Nakamoto's protocol, each honest node runs {\em only
one} election, appending to
the last block in the longest chain in its local view.
Having the hash function depend on 
parentBk.$hash$ ensures that every appended block provides a fresh source of randomness, for the following elections. However, there is no consensus on the randomness used and the randomness is block dependent, giving opportunities for the adversary to mount a NaS attack by trying its luck at many different blocks.


The analysis of the security of the Nakamoto-PoS  protocol is first attempted  in \cite{fan2018scalable}. Just like  Nakamoto's original analysis, their analysis is on the security against a specific attack: the private double-spend attack. Due to the NaS  phenomenon, they showed the adversary can grow a private chain faster than just growing at the tip, as though its stake increases by a factor of $e$. This shows that the PoS longest chain protocol is secure against the private double spend against if the adversarial fraction of stake $\beta < 1/(1+e)$. The question of whether the protocol is secure against {\em all} attacks, or there are attacks more serious than the private double spend attack, remains open. This is not only an academic question, as well-known blockchain protocols like GHOST \cite{ghost} had been shown to be secure against the private attack, only to be shown not secure later \cite{natoli2016balance,kiffer2018better}.

{\bf Methodological Contribution}. In this paper, we show that, under a formal security model (\S\ref{sec:model}),  the Nakamoto-PoS  protocol is indeed secure against {\em all} attacks, i.e., it has persistence and liveness whenever $\beta < 1/(1+e)$. One can view our result as analogous to what \cite{backbone} proved for Nakamoto's PoW protocol. However, {\em how} we prove the result is based on an entirely different approach. Specifically, the security proofs of \cite{backbone} are based on {\em counting} the number of blocks that can be mined by the adversary over a long enough duration (see Fig.~\ref{fig:race}), and showing that the longest chain is secure because the number of such adversarial blocks is less than the number of honest blocks whenever $\beta < 0.5$. This proof approach does not give non-trivial security results for the PoS protocol in question, because the number of adversarial blocks is exponentially larger than the number of honest blocks, due to the NaS phenomenon. Rather, our proof takes a dynamic view of the evolution of the blockchain, and shows that, whenever $\beta < 1/(1+e)$,  there are infinite many time instances, which we call {\em adversary-proof convergence times},  in which no chains that the adversary can grow from the past can ever catch up to the longest chain any time in the future   (see Fig.~\ref{fig:race}).\footnote{The notion of adversary-proof convergence is not to be confused with the notion of {\em convergence opportunities} in \cite{pss16}. The latter refers to times in which the honest blocks are not mined at similar time so that honest nodes may have a split view of the blockchain. It says nothing about whether the adversary can launch an attack using the blocks it mined. That is what our notion focuses on.}  
 Whenever such an event occurs, the  current longest chain will remain as a prefix of any future longest chain.

  \begin{figure}
    \begin{center}
    \includegraphics[scale=0.3]{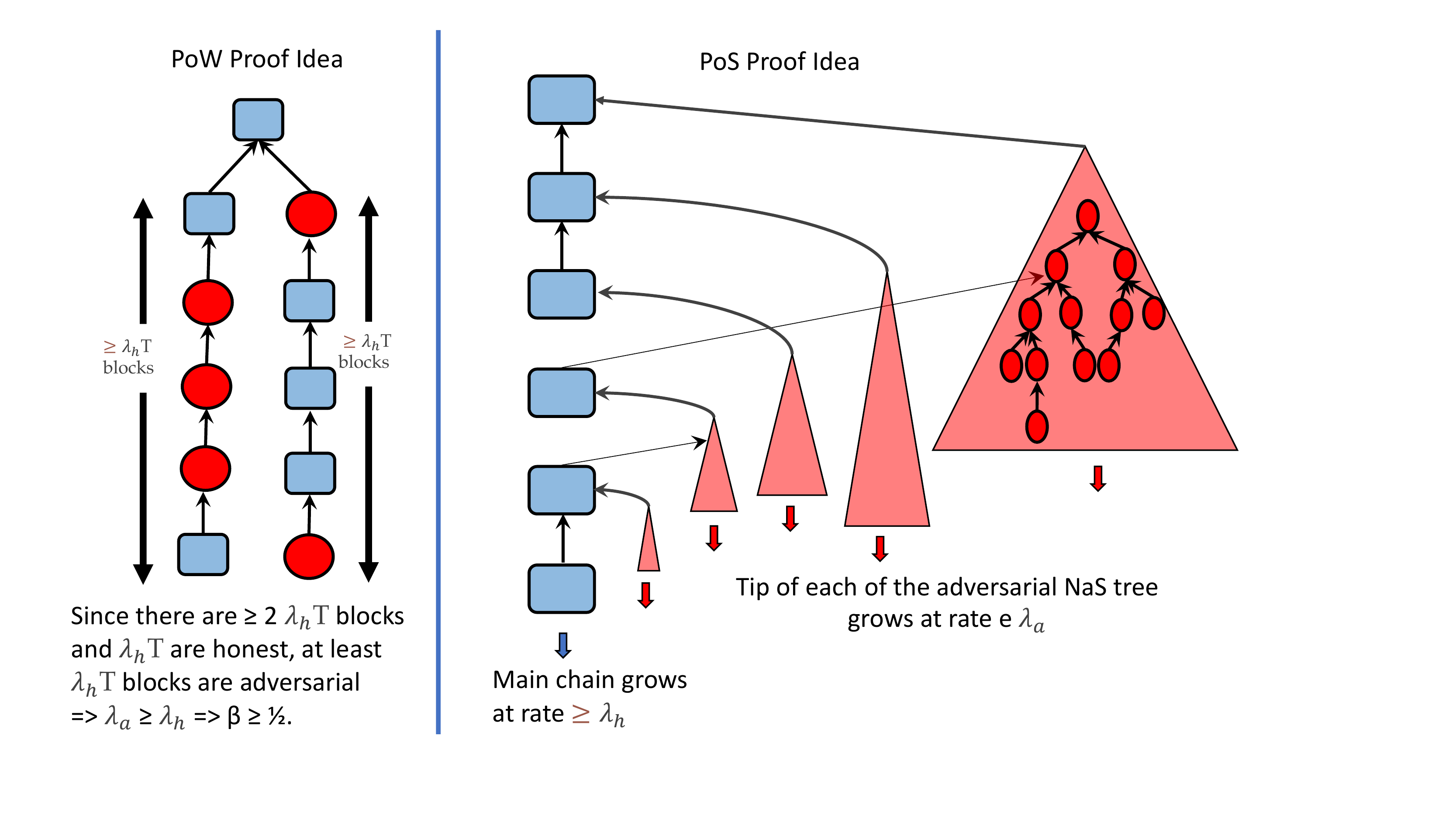}
    \end{center}
    \vspace{-0.1in}
  \caption{Notations: $\lambda_a, \lambda_h$ are the rates at which the adversary and the honest nodes can mine a block on a given block, and $T$ is the total duration.   {\bf Left}: In the PoW case, a counting argument shows that for the adversary to create a chain to match the longest chain, $\beta > 1/2$.  This proof fails to work in the PoS case because there is no conservation of work and the total number of adversarial blocks that can be generated over a time duration $T$ is exponentially larger than $\lambda_aT$. {\bf Right}: Our proof technique. Race between main chain and adversarial trees: adversary-proof convergence happens at a honest block if {\em none} of the previous NaS trees can ever catch up with the main chain downstream of the honest block. Security is proven by showing these events occur at a non-zero frequency.} \label{fig:race}
 \end{figure}

 Although the adversary can propose an exponentially large number of blocks, perhaps surprisingly, the protocol can still tolerate a {\em positive } fraction $\beta$ of adversarial stake. On the other hand, the fraction  that can be tolerated ($\frac{1}{1+e}$) is still less than the fraction  for the longest chain PoW protocol ($\frac{1}{2}$). In \cite{fan2018scalable} and \cite{fanlarge}, modifications of the longest chain protocol (called $g$-greedy and $D$-distance-greedy) are proposed, based on improvements to their security against the private double-spend attack. In  \S \ref{sec:attack}, we showed that, unlike the longest chain protocol, these protocols are subject to worse public-private attacks, and they not only do not exhibit true improvements in security than the longest chain protocol, but in many cases, they do far worse.  

{\bf New PoS Protocol Contribution}. Taking a different direction, we propose a new family of simple longest chain PoS protocols that we call $c$-Nakamoto-PoS (\S\ref{sec:protocols}); the fork choice rule remains the longest chain but the randomness update in the blockchain is controlled by a  parameter $c$, the larger the value of the parameter $c$, the slower the randomness is updated.  The common source of randomness used to elect a leader remains the same for $c$ blocks starting from the genesis and is
updated only when the current block to be generated is at a depth that is a multiple of $c$. When updating the randomness, the hash of that newly appended block is used as the source of randomness. The basic PoS Nakamoto protocol corresponds to $c = 1$, where the NaS attack is most effective. 
We can increase $c$ to gracefully reduce the potency of NaS attacks and increase the security threshold. 
To analyze the formal security of this family of protocols, we combine our  analysis for $c=1$ with results from the theory of branching random walks \cite{shi}; this allows us to characterize the largest adversarial fraction $\beta^*_c$ of stake that can be securely tolerated. As $c \rightarrow \infty$, $\beta^*_c \rightarrow 1/2$.  We should point out that the Ouroboros family of protocols \cite{david2018ouroboros,badertscher2018ouroboros} achieves security also by an infrequent update of the randomness; however, the update is much slower than what we are considering here, at the rate of once every constant multiple of $\kappa$, the security parameter. This is needed because the epoch must be long enough for the blockchain in the previous epoch to stabilize in order to generate the common randomness for the current epoch. Here, we are considering $c$ to be a fixed parameter independent of $\kappa$, and show that this is sufficient to thwart the NaS attack. 
Technically, we show that even if $c$ is small, there is no fundamental barrier to achieving any desired level of security $\kappa$. 
Hence, achieving a high level of security $\kappa$ should not come at the cost of longer  predictable window, and in this paper we introduce a natural adoption of Nakamoto protocol to achieve this. The practical implication of this result is shown in Fig.~\ref{fig:threshold_day}, where we can see that $c$-Nakamoto can achieve comparable security with a much smaller prediction window than a current implementation of Orouboros as part of the Cardano project \cite{stutz2020stake}. 

\begin{figure}
    \centering
    \includegraphics[width=0.7\textwidth]{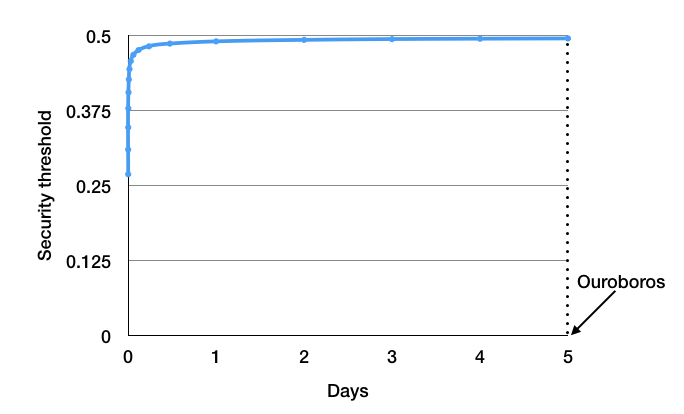}
    \caption{The security threshold $\beta^*_c$ of $c$-Nakamoto-PoS against the prediction window, equaling to $c$ times the inter-block time, which we set to be $20$s, to match the implementation of Orouboros in Cardano.  The Cardano project currently updates the common randomness every 5 days (21600 blocks, or $10 \kappa$), while the security threshold of $c$-Nakamoto-PoS can approach $1/2$ with much higher randomness update frequency.}
    \label{fig:threshold_day}
\end{figure}

 {\bf Organization}. In \S\ref{sec:prediction} we formally define prediction windows for any PoS protocol and discuss a class of bribery attacks enabled by the predictive feature of leader elections of many PoS protocols. The seriousness of these bribery attacks is underscored by the fatal nature of the attacks (double-spends and ledger rewrites) and  that they can be implemented by bribing a fraction of users possessing an arbitrarily small total stake. This sets the stage for the protocols discussed in this paper, Nakamoto-PoS and $c$-Nakamoto-PoS, which have low prediction windows and hence resistant to these bribery attacks. \S\ref{sec:model} discusses the formal security model we use to the analyze the  $c$-Nakamoto-PoS protocols (defined formally in \S \ref{sec:protocols}). The formal security analyses of  Nakamoto-PoS and $c$-Nakamoto-PoS protocols are conducted in \S\ref{sec:analysis}. This analysis, conducted in the static stake setting, is generalized to dynamic stake settings in \S\ref{sec:dynamic}.  

\section{Predictability in PoS Protocols}
\label{sec:prediction}

\begin{table}[ht]
\begin{center}
\begin{tabular}{ |c||c|c|c|c|c|c|c|c|c|c| } 
 \hline
 & 
 \multicolumn{6}{c|}{Longest Chain} & BFT & 
 \multicolumn{2}{c|}{Our results}
 \\\hline
 &  \cite{kiayias2017ouroboros}&
 \cite{david2018ouroboros}&  
 \cite{badertscher2018ouroboros}&  \cite{bentov2016snow} & \cite{fan2018scalable} & \cite{fanlarge} & Algorand \cite{chen2016algorand} & Nakamoto-PoS & $c$-Nakamoto-PoS  \\ \hline
 $W$ & $2\kappa$& $3\kappa$& $6\kappa$& $6\kappa$ & 1 &1& $\Theta(\kappa)$ & 1 & $c$ \\
 \hline
$\beta^*$ & 
\multicolumn{4}{c|}{$\frac{1}{2}$} &
\multicolumn{2}{c|}{$\mbox{unknown}$} & $\frac{1}{3}$  & $\frac{1}{1+e}$ & $\beta^*_c$  \\  \hline
\end{tabular}
\end{center}
\caption{ 
Our results decouple the prediction window $W$ and the security parameter $\kappa$, achieving any combination of $(W,\kappa)$. 
Prediction window $W$  for other PoS protocols are strongly coupled with  the  security parameter $\kappa = \log({1}/{P_{\rm failure}})$.  The maximum  threshold of adversarial stake that can be tolerated by the PoS protocols while being secure is $\beta^*$. Nakamoto-PoS is the most basic way of extending Nakamoto protocol to the PoS setting. This was originally introduced in \cite{fan2018scalable,fanlarge} but with an incomplete security analysis;
In \S\ref{sec:analysis} we show $\beta^*_c \approx {1}/{2} - \Theta(\sqrt{(1/c){\ln c}})$ and numerically tabulate $\beta^*_c$ (example: $\beta^*_c = 39\%$ for $c=10$).  
\cite{kiayias2017ouroboros,david2018ouroboros,badertscher2018ouroboros} are the Ouroboros family, \cite{bentov2016snow} is Snow White,  \cite{fan2018scalable} is $g$-Greedy, and  \cite{fanlarge} is $D$-Distance-Greedy. 
}
\label{tbl:intro}
\end{table} 

In PoW protocols such as  Bitcoin, no miner knows when they will get to propose the block until they solve the puzzle, and once they solve the puzzle, the block is inviolable (because the puzzle solution will become  invalid, otherwise). This causality is reversed in proof-of-stake (PoS) protocols: a node  eligible to propose a block knows {\em a priori} of its eligibility before proposing a block. This makes PoS protocols vulnerable to a new class of serious attacks not possible in the PoW setting. We briefly discuss these attacks here, deferring a detailed discussion to  \S\ref{app:prediction_bribery}.

\begin{definition}[$W$-predictable]
Given a PoS protocol $\Pi_{\rm PoS}$, let $\mathcal{C}$ be a valid blockchain ending with block $B$ with a time stamp $t$. We say a block $B$ enables $w$-length prediction, if
there exists a time $t_1>t$ and a block $B_1$ with a time stamp $t_1$ such that $(i)$ $B_1$ can be mined by miner (using its private state and the common public state) at time $t$; and 
$(ii)$ $B_1$ can be appended to $\mathcal{C'}$ to form a valid blockchain for any valid chain
$\mathcal{C'}$ that extends  $\mathcal{C}$ by appending  $w-1$ 
 valid blocks with time stamps  within the interval $(t,t_1)$. By taking the maximum over the prediction length over all blocks in $\Pi_{\rm PoS}$, we say $\Pi_{\rm PoS}$ is $W$-predictable.  $W$ is the size of the prediction window measured in units of number of blocks. 
\label{def:pred}
\end{definition}

 Informally, a longest-chain PoS protocol has $W$ length prediction window if it is possible for a miner to know that it is allowed to propose a block $W$ blocks downstream of the present blockchain. We note that our definition is similar to the definition of $W$-locally predictable protocols in \cite{brown2019formal}, where it has been pointed out that PoS protocols trade off between predictability and nothing-at-stake incentive attacks. In Table~\ref{tbl:intro}, we compare the prediction windows of various protocols. The longest-chain family of protocols of Ouroboros update randomness every epoch  have prediction window equal to the epoch length. Furthermore, they require the epoch length to be proportional to the security parameter $\kappa$, since one error event is that a majority of block producers in an epoch are not honest (and in that case, they can bias the randomness of all future slots).   

We note that in any PoS protocol with confirmation-depth $\kappa$ (the number of downstream blocks required to confirm a given block), a simple bribing attack is possible, where a briber requests the previous block producers to sign an alternate block for each of their previous certificates. However, such attacks are overt and easily detectable, and can be penalized with slashing penalties. If the prediction window $W$ is greater than the confirmation-depth  $\kappa$, then the following {\em covert} (undetectable) attack becomes possible. An adversary who wants to issue a double-spend can create a website where nodes that have future proposer slots post their leadership certificates for a bribe. If the adversary gets more than $\kappa+1$ miners to respond to this request, then the adversary can launch the following attack: (1) collect the $\kappa+1$ leadership certificates (2) issue a transaction that gets included in the upcoming honest block, (3) let the honest chain grow for $\kappa$ blocks to confirm the transaction and receive any goods in return, then (4) create a double spend against the previous transaction and (5) create a longer chain downstream of a block including the double-spend using the $\kappa+1$ certificates.  We note that this attack does not require participation from miners having a majority of the stake. Far from it, it only requires $\kappa+1$ out of the next $2 \kappa$ miners each holding a potentially infinitesimal fraction of stake.  Furthermore, this attack does not require miners to double-sign blocks, making it indistinguishable from unexpected network latency and providing plausible deniability for the miners who take the bribe. Thus, it is a serious covert attack on security requiring only participants with a net infinitesimal stake to participate in it. We note that this attack is not covered in the popular adaptive adversary model \cite{david2018ouroboros, gilad2017algorand}, since in that model, nodes are assumed not to have any agency and remain honest till the adversary corrupts them (based only on public state).

We note that it is {\em not possible} to mitigate the prediction issue by increasing the confirmation depth beyond the prediction window (which is equal to the epoch length). This is because the guarantees of existing protocols rely on the randomness of each epoch being unbiased and this guarantee fails  to hold when a majority of nodes in an epoch are bribed through the aforementioned mechanism in order to bias the randomness. 

While our discussion so far focused on longest-chain PoS protocols, we note that the prediction issue is even more serious in BFT based PoS protocols. PoS-based BFT protocols that work with the same committee or same proposer for many time-slots give raise to similar prediction based attacks.  Even in protocols such as Algorand \cite{chen2016algorand,gilad2017algorand}, which require a new committee for each step of the BFT protocol, the entire set of committees for all steps of the BFT protocol for a given block is known once the previous block is finalized. This leads to a similar type of bribing attack where once a $2/3$ majority of a BFT-step committee coordinate through a centralized website, they can sign a different block than the one the honest nodes agreed on. We note that since Algorand elects a small constant size committee (proportional to $\kappa$) for each round, a $2/3$ majority of the committee can comprise a negligible total stake.  Thus, in Algorand, even though the prediction window appears negligible, the confirmation delay is also small -- thus leading to the same type of attack (detailed discussion deferred to  \S\ref{app:prediction_bribery}).  A formal definition of  prediction window for BFT-based PoS protocols is in \S\ref{app:bftprediction} where we evaluate the prediction window $W$ for a canonical BFT based PoS protocol: Algorand \cite{chen2016algorand}.  There is a strong coupling between the security parameter and prediction window for Algorand, and is tabulated in Table~\ref{tbl:intro}. 

{\bf Summary}. We have demonstrated that both longest-chain and BFT based protocols are highly vulnerable to prediction-based security attacks when coordinating through an external bribing mechanism, thus compromising the persistence and liveness of the system. These attacks are covert, i.e., the deviant behavior is not detectable and punishable on the blockchain, and require only an infinitesimal fraction of the stake to collude, thus significantly weakening the security of the protocol. This motivates the study of Nakamoto-PoS (with a very small prediction window) and the design of a new PoS protocol that  has a prediction window much shorter than the confirmation depth and can be secure against adversaries with up to 50\% of the stake. This state of affairs, together with the security results we prove in this paper are summarized in Table~\ref{tbl:intro}. 

\section{Security Model}
\label{sec:model}

%
%
%
%
%
%
%
%
%


A blockchain protocol $\Pi$ is directed by an
\emph{environment} $\Z(1^\kappa)$, where $\kappa$ is the security parameter.
This environment
$(i)$ initiates a set of participating nodes $\N$;
$(ii)$ manages nodes through an adversary $\A$ which \emph{corrupts} a dynamically changing subset of nodes;
$(iii)$ manages all accesses of each node from/to the environment including broadcasting and receiving messages of blocks and transactions.

The protocol $\Pi$ proceeds in discrete time units called {\em slots}, each consisting of $\delta$ milliseconds (also called the slot duration), i.e.~the {\em time} argument in the input to the hash function should be in $\delta$ millisecond increments.
Each slot $sl_r$ is indexed by an integer $r \in \{1,2,\cdots\}$.
A {\em ledger} associates at most one block to each slot among those generated (or proposed) by participating nodes, each running a distributed protocol.
Collectively, at most one block per slot is selected to be included in the ledger according to a rule prescribed in the protocol $\Pi$. 
Similar to \cite{kiayias2017ouroboros}, we assume that the real time window for each slot satisfies that: (1)The time window of a slot is determined by a publicly-known and monotonically increasing function of current time; (2) every user has access to the current global time and any discrepancies between nodes’ local time are insignificant in comparison with the slot duration. 

We follow the security model of \cite{backbone,pss16,david2018ouroboros}
with an ideal functionality $\cF$.
This includes {\em diffuse} functionality and {\em key and transaction} functionality as described below.
With a  protocol $\Pi$,  adversary $\A$, environment $Z$, and security parameter $\kappa$, we denote by ${\sf VIEW}_{\Pi,\A,\Z}^{n,\cF}(\kappa)$ the view of a node $n\in \N$ who has access to an ideal functionality $\cF$.

We consider a semi-synchronous {\em network model} with bounded network delay similar to that of \cite{pss16,david2018ouroboros}
that accounts for adversarially controlled message delivery and immediate node corruption.
All broadcast messages are delivered by the adversary, with a bounded network delay $\Delta$ millisecond.
Let $\tau = \Delta/\delta$ be an integer.
We model this bounded network delay by allowing the adversary to selectively delay messages sent by honest nodes, with the following restrictions:
$(i)$ the messages broadcast in slot $sl_r$ must be delivered by the beginning of slot $sl_{r+\tau}$; and
$(ii)$ the adversary cannot forge or alter any message sent by an honest node.
This is the so called {\em delayed diffuse functionality} (denoted by {\sf DDiffuse}$_\tau$ in \cite{david2018ouroboros}).

The dynamically changing set of \emph{honest} (or uncorrupted) nodes $\H   \subseteq \N$  strictly follows the blockchain protocol $\Pi$.
The {\em key registration} functionality (from \cite{kiayias2017ouroboros}) is initialized
with the nodes $\cN$ and their respective stakes
$({\sf stake}_1,\ldots,{\sf stake}_{|\N|})$ such that the fraction of the initial stake owned by node $n$ is
${\sf stake}_n/\sum_{m\in \N} {\sf  stake}_m$.
At the beginning of each round, the adversary can dynamically
corrupt or uncorrupt any node $n\in\N$ , with a permission from
the environment $\Z$ in the form of a message
$({\sf Corrupt},n)$ or $({\sf Uncorrupt},n)$.
Even the corrupted nodes form a dynamically changing set, the total proportion of the adversarial stake is upper bounded by $\beta$ all the time.
For the honest nodes, the functionality can sample
a new public/secret key pair for each node and record them. For the corrupted nodes, if it is missing a public key, the adversary can set the node's public key, and
the public keys of corrupt nodes will be marked as such.\vb{The last two sentences are unclear: (i) when are these new key pairs sampled?, (ii) how does a (corrupted) node have a missing public key? when does the above actions take place? at the beginning of every round? }
When the adversary releases the control of a corrupted node, the node retrieves the current view of the  honest nodes at the beginning of the following round.

Any of the following actions are allowed to take place.
$(i)$ A node can retrieve its public/secret key pair from the functionality.
$(ii)$ A node can retrieve the whole database of public keys from the functionality.
$(iii)$ The environment can send a message ({\sf Create}) to spawn a new node\vb{do we require dynamically changing set $\mathcal{N}$? }, whose local view only contains the genesis block, and the functionality samples its public/secret key pair. 
$(iv)$ The environment can request a transaction, specifying its payer and recipient.
The functionality adjusts the stakes according to the transactions that make into the current ledger, as prescribed by the protocol $\Pi$.
The adversary has access to the state of a corrupt node $n$,
and will be activated in place of node $n$
with restrictions imposed by $\cF$.

\noindent{\bf Verifiable Random Function (VRF).}
Verifiable Random Functions (VRF), first introduced in \cite{vrf},
generates a pseudorandom number with a proof of its correctness.
A node with a secret key $sk$ can call {\sc VRFprove}$(\cdot,sk)$ to generates a pseudorandom {\em output}
$F_{sk}(\cdot)$ along with a {\em proof} $\pi_{sk}(\cdot)$. Other nodes that have the proof and the corresponding public key $pk$ can check that
the output has been generated by VRF, by calling {\sc VRFverify}$(\cdot,{\rm output},\pi_{sk}(\cdot),pk)$.
An efficient implementation of VRF was introduced in \cite{vrf2}, which formally satisfy Definition~\ref{def:vrf} in \S\ref{sec:vrf}.
This ensures that the output of a VRF is computationally indistinguishable from a random number even if the public key $pk$ and the function {\sc VRFprove} is revealed.

\noindent{\bf Key Evolving Signature schemes (KES).}
We propose using forward secure signature schemes \cite{bellare1999forward} to sign the transactions to be included in a generated block.
This prevents the adversary from
altering the transactions in the blocks mined in the past.
Efficient Key Evolving Signature (KES) schemes have been proposed in
\cite{itkis2001forward,david2018ouroboros} where keys are periodically erased and generated, while the new key is linked to the previous one.
This is assumed to be available to the nodes via the ideal functionality $\cF$. This ensures immutability of the contents of the blocks.
\section{Protocol description}
\label{sec:protocols}

We explain our protocol following   terminologies from \cite{david2018ouroboros}
and emphasize the differences as appropriate.  
The ideal functionality $\cF$ captures the resources available to the nodes  in order to securely execute the protocol.
When a PoS system is launched, a collection $\N$ of nodes are initialized.
Each node $n\in\N$ is initialized with a coin possessing stake ${\sf stake}_n$,
 a verification/signing key pair $({\sf KES}.vk_n,{\sf KES}.sk_n)$,
 and a public/secret key pair $({\sf VRF}.pk_n,{\sf VRF}.sk_n)$.
 The Key Evolving Signature key pair $({\sf KES})$ is used to sign and verify the content of a block,
 while the Verifiable Random Function key pair $({\sf VRF})$ is used to verify and elect leader nodes who generate new blocks.
All the nodes and the adversary know all public keys
$\{{\sf pk}_n=({\sf KES}.vk_n, {\sf VRF}.pk_n)\}_{n\in \N}$.
The {\em genesis} block contains all
public keys and initial stakes of all nodes, $ \{({\sf pk}_n,{\sf stake}_n)\}_{n\in \N}$, and also contains
a nonce in genesis.content.RandSource.
This nonce is  used as a seed for the randomness. 
The depth of a block in a chain is counted from the genesis (which is at depth zero).
We denote the time at the inception of the genesis block as zero (milliseconds),
such that the $i$-th slot starts at
 the time $\delta\cdot i$ milliseconds (since the inception of the genesis block).
 Nakamoto-PoS protocol is executed by the nodes and is assumed to run indefinitely. 
 At each slot a node starts with a local chain ${\mathcal C}$, which it tries to append new blocks on.

\medskip\noindent{\bf Proposer selection.}
At each slot,
a fresh subset of nodes are randomly elected to be the leaders, who
 have the  right to generate new blocks.
To be elected one of the leaders,
each node first decides on where to append the next block, in its local view of the blocktree.
This choice of a {\em parent block} is governed by the fork choice rule prescribed in the protocol.
For example, in {\sc BitCoin}, an honest node appends a new block to the deepest node in the local view of the blocktree.
This is known as Nakamoto protocol. 
We propose {\em $s$-truncated longest chain rule}
that includes the Nakamoto protocol as a special case,
which we define later in this section.

A random number of  leaders are elected in a single slot,
and the collective average block generation rate is controlled by a global parameter $\rho$ that is adaptively set by the ideal functionality $\cF$.
The individual block generation rate is proportional to the node's stake.
The stakes are updated continuously as the ledger is updated,  but only a coin $s$ blocks deep in the ledger can be used in the election
(the same parameter $s$ as used in the truncated longest chain rule),
and is formally defined  later in this section.

Concretely, at each slot, a node $n\in\N$  draws a number
distributed uniformly at random in a predefined range.
If this is less than the product of  its stake and a parameter $\rho$ (Algorithm \ref{alg:PoS} line \ref{algo:bias}),
the node is elected one of the leaders of the slot and
gains the right to generate a new block.
Ideally, we want to simulate such a random trial while ensuring that the outcome
$(i)$ is {\em verifiable} by any node after the block generation;
$(ii)$ is {\em unpredictable} by any node other than node $n$ before the generated block has been broadcast;  and
$(iii)$ is {\em independent} of any other events.
Verifiability in $(i)$ is critical
in ensuring consistency among untrusted pool of nodes.
Without unpredictability in $(ii)$, the adversary can easily take over the blockchain by
adaptively corrupting the future leaders.
Without independence in $(iii)$, a corrupted node might be able to grind on
the events that the simulator (and hence the outcome of the election) depends on,
until it finds one that favors its chances of generating future blocks.
Properties $(ii)$ and $(iii)$ are
challenges unique to PoS systems, as {\em predicting} and {\em grinding attacks} are  computationally costly in PoW systems.

To implement such a simulator
in a distributed manner among mutually untrusting nodes,
\cite{gilad2017algorand,david2018ouroboros} proposed using Verifiable Random Functions (VRFs), formally defined  in \S\ref{sec:vrf}.
In our proposed protocol, a node $n$ uses its secret key ${\sf VRF}.sk$ to generate a
pseudorandom {\em hash} and a {\em proof} of correctness (Algorithm~\ref{alg:PoS} line~\ref{algo:vrfproof}).
If node $n$ is elected a leader and broadcasts a new block, other nodes can verify the correctness with the corresponding public key {\sf VRF}.$pk$ and the proof (which is included in the block content).
This ensures unpredictability, as only node $n$ has access to its secret key, and
verifiability, as any node can access all public keys and verify that the correctness of the random leader election.

The pseudorandom hash generated by {\sc VRFprove}$(x,{\sf VRF}.sk)$,
depends on the external source of randomness, $(x,{\sf VRF}.sk)$, that is fed into the function.
Along with the secret key ${\sf VRF}.sk$, which we refer to as the {\em private} source of randomness,
we prescribe constructing a header $x$ that  contains the time (in a multiple of $\delta$ milliseconds) and a dynamically changing {\em common} source of randomness.
Including the time ensures that
the  hash is drawn
exactly once every slot.
Including the common source of randomness ensures that the random elections
cannot be predicted in advance, even by the owner of the secret key.
Such {\em private} predictability by the owner of the secret key leads to other security concerns that we discuss in~\S\ref{app:prediction_bribery}.

A vanilla implementation of such a protocol might
$(a)$ update stakes immediately and
$(b)$ use the hash of the previous block (i.e.~the parent of the newly generated block in the main chain as defined by the fork chain rule)
as the common source of randomness.
Each of these choices creates a distinct opportunity for an adversary to grind on, that could result in serious security breaches.
We explain the potential threats in the following
and propose how to update the randomness and the stake, respectively, to prevent each of the grinding attacks.
A formal analysis of the resulting protocol is provided in~\S\ref{sec:analysis}.

\medskip\noindent{\bf Updating the common source of randomness.}
One way to ensure unpredictability by even the owner of the secret key is to
draw randomness from the dynamically evolving blocktree.
For example, we could use the hash of the
parent block (i.e.~the block that a newly generated block will be appended to).
This hash depends only on the parent block proposer's
secret key, the time, and the source of randomness included in the header of the parent block.
In particular, this hash does not depend on the content of the parent block, to prevent an additional source of grinding attack.
However, such a frequent update of the source creates an opportunity for the adversary to grind on.
At every round, a corrupted node
can run as many leader elections as the number of blocks in the blocktree,
each appending to a different block as its parent.
To mitigate such grinding attacks,
we propose a new update rule for the source of randomness which we call {\em $c$-correlation}.

A parameter $c\in {\mathbb Z}$  determines how frequently we update.
The common source of randomness remains the same for $c$ blocks, and is
updated only when the current block to be generated is at a depth that is a multiple of $c$ (Algorithm~\ref{alg:PoS} line~\ref{algo:c-correlation}).
When updating, the hash of that newly appended block is used as the source of randomness.
When $c=1$, this recovers the vanilla update rule, where a grinding attack is most effective.
We can increase $c$ to gracefully increase the security threshold.
A formal analysis  is provided in~\S\ref{sec:analysis}.
When $c=\infty$, every block uses the nonce at the genesis block as the common source of randomness.
This makes the entire future leader elections predictable in private, by the owners of the secret keys.

\medskip\noindent{\bf Dynamic stake.}
The stake of a node $n$ (or equivalently that of the coin the node possesses) is
not only changing over {\em time}  as transactions are added to the blocktree,
but also over which {\em chain} we are referring to in the blocktree.
Different chains in the tree contain different sequences of transactions, leading to different stake allocations.
One needs to specify which chain we are referring to, when we access the stake of a node.
Such accesses are managed by the ideal functionality $\cF$ (Algorithm~\ref{alg:PoS} line~\ref{algo:stake}).

When running a random election to append a block to a parent block $b$ at depth $\ell-1$ in the blocktree,
a coin can be used for this election of creating a block with depth $\ell$ if and only if the coin is in the stake at the block with depth $\ell-s$ on the chain leading to block $b$.
Accordingly, a node $n$ has a winning probability proportional to ${\sf stake}_{n}(b)$ when mining on block $b$,
where ${\sf stake}_{n}(b)$
denotes the stake belonging to node $n$ as in the $(s-1)$-th block before $b$.
Starting from an initial stake distribution ${\sf stake}_n(b_{\rm genesis})$,
we add to or subtract from the stake according to all transactions
that $(i)$ involve node $n$ (or the coin that belongs to node $n$);
$(ii)$ are included in the chain of blocks from the genesis to the reference block $b$; and
$(iii)$ is included in the blockchain at least $s-1$ blocks before $b$.
Here, $s\in{\mathbb Z}$ is a global parameter. 

When $s$=1, the adversary can grind on  (the secret key ${\sf VRF}.sk$ of) the coin.
For example, once a corrupted node is elected as a leader at some time slot and proposed a new block,
it can include transactions in that block to transfer all stake to a coin that has a higher chance of winning the election at later time slots.
To prevent such a grinding on the coin, a natural attempt
is to use the stake in the block with depth $\ell-s$ when trying to create a block at depth $\ell$ on the main chain.  
However, there remains  a vulnerability, if we use the Nakamoto protocol  from  BitCoin as the fork choice rule.

Consider a corrupted node growing its own {\em private chain} from the genesis block (or any block in the blocktree).
A private chain is a blockchain that the corrupted node grows privately without broadcasting it to the network until it is certain that it can take over the
public blocktree.
Under the Nakamoto protocol,
this happens when the private chain is longer (in the number of blocks) than the longest chain in the public blocktree.
Note that the public blocktree grows at a rate proportional to $\rho$ and
the total stake of the nodes that append to the public blocktree.
With a  grinding attack, the private chain, which is entirely composed of the blocks generated by the corrupted node,
can eventually take over the public blocktree.

Initially, the private chain grows at a rate proportional to $\rho$ and the stake controlled by the corrupted node.
However, after $s$ blocks from the launch of the private chain,
the corrupted node can start grinding on the private key of the coin;
once a favorable coin is found,
it can transfer the stake to the favored coin by including transactions in the first ancestor block in the private chain.
This is possible as all blocks in the private chain belong to the corrupted node.
It can alter any content of the private chain and sign all blocks again.
With such a grinding attack (which we refer to as {\em coin grinding}), the corrupted node can potentially be elected a leader every slot in the private chain, eventually overtaking the public blocktree.
To prevent this private grinding attack, we propose using an $s$-truncation as the fork choice rule.

\medskip\noindent{\bf Fork choice rule.}
An honest node follows a fork choice rule prescribed in the protocol.
The purpose is to reach a consensus
on which chain of blocks to maintain, in a distributed manner.
Eventually, such chosen chain of blocks produces a final ledger of transactions.
Under the Nakamoto protocol, a node appends the next generated block to the longest chain in its local view of the blocktree.
Unlike PoW systems, Nakamoto protocol can lead to serious security issues for PoS systems as discussed above.
We propose using the following $s$-truncated longest chain rule, introduced in \cite{badertscher2018ouroboros,fan2018scalable}.

At any given time slot, an honest node keeps track of one main chain that it appends its next generated block to.
Upon receiving a new chain of blocks, it needs to decide which chain to keep.
Instead of comparing the length of those two chains, as in Nakamoto protocol,
we compare the creation time of the first $s$ blocks after the fork in truncated versions of those two chains (Algorithm~\ref{alg:PoS} line~\ref{algo:lc}).
Let $b_{\rm fork}$ be the block where those two chains fork.
The honest node counts how long it takes in each chain to create, up to
$s$ blocks after the fork.
The chain with shorter time for those $s$ blocks is chosen, and the next generated block will be appended to the newest block in that selected chain.
When $s=\infty$, the stake is fixed since the genesis block, which leads to a system that is secure but not adaptive. This is undesirable, as even
a coin with no current stake  can participate in block generation.
We propose using an appropriate
global choice of $s<\infty$, that scales linearly with the security parameter $\kappa$.
This ensures that the protocol meets the desired level of security,
while adapting to dynamic stake updates.
One caveat is that we only apply this $s$-truncation when comparing two chains that  both have at least $s$ blocks after those two chains forked. If one of the chain has less than $s$ blocks after forking, we use the longest chain rule to determine which chain to mine on. This is necessary in order to ensure that $s$-truncation is only applied to chains with enough blocks, such that our probabilistic analysis results hold.

\medskip\noindent{\bf Content of the block.}
Once a node is elected a leader,
all unconfirmed transactions in its buffer are added to the content (Algorithm~\ref{alg:PoS} line \ref{algo:sign}).
Along with the transactions, the content of the block also includes
the identity of the coin that won the election, and the $hash$ and $proof$ from {\sc VRFprove$(\cdot)$}.
This allows other nodes to verify the accuracy of the leader election.
A common source of randomness $RandSource$ is also included,
to be used in the next leader election.
The $state$ variable in the content contains the hash of parent block, which ensures that the content of the parent block cannot be altered.
Finally, the header and the content is signed with the forward secure signature ${\sf KES}.sk_n$.

Note that the content of the block is added after the leader election, in order to avoid any grinding on the content.
However, this allows the adversary to create multiple blocks with the same header but different content.
In particular,
after one leader election,
the adversary can create multiple blocks  appending to different parent blocks, as long as those parent blocks share the same common source of randomness.
Such copies of a block with the same header but different contents are
known as a ``forkable string'' in \cite{kiayias2017ouroboros} or ``non-core blocks'' in \cite{fan2018scalable}.
We show in the next section that the Nakamoto-PoS protocol is secure against all such variations of attacks.

\begin{table}[h]
\centering
\begin{tabular}{ |c l| }
\hline
    $\beta$ & total proportion of the adversarial stake\\
    $\delta$ & slot duration\\
    $\Delta$ & network delay \\
    $\kappa$ & security parameter  \\
    $c$ & correlation parameter  \\
    $s$ & parameter in the fork choice rule \\
    $\phi_c$ & maximum growth rate of a private tree \\
 \hline
\end{tabular}
\caption{The parameters used in our analysis.}
\label{tbl:notation}
\end{table}

\section{Security analysis}
\label{sec:analysis}

In this section, we provide formal security analysis for the longest chain PoS protocol. We focus on  $c=1$, and will discuss the general $c$ case in the last subsection and further in  \S\ref{app:growth_c}.  To simplify the expressions, we will consider the regime when the time slot duration $\delta$ is very small, so that the block generation processes can be modeled as Poisson. We will also assume the stake distribution is static in this section; the case of dynamic stake is discussed in \S \ref{sec:dynamic}.

We prove liveness and persistence of the protocol through  understanding when the longest chain {\em converges} as time passes, regardless of the adversarial strategy. We first analyze this convergence in the setting when the network delay $\Delta = 0$. This setting contains the core ideas of the proof, and allows us to explain it with the simplest notations. Then we extend it to the case of positive network delay. Finally, we use these results to prove high probability guarantees on the liveness and persistence of the protocol. 

\subsection{Warmup: $\Delta = 0$}
\label{sec: problem}

\subsubsection{Random processes}

In this setting, all the honest nodes have the same view of the blockchain, which can be modeled as  a random process $\{(\T(t), \C(t)): t \ge 0\}$. $\T(t)$ is a tree  and $\C(t)$ is the public longest chain at time $t$.  The tree $\T(t)$ is interpreted as consisting of all the blocks that are generated by both the adversary and the honest nodes up until time $t$, including blocks that are kept in private by the adversary. Note that $\T(t)$ consists of the honest blocks, mined at the tip of the longest chain, and all the blocks that the adversary {\em can} generate, by trying and winning the election lotteries at all possible locations of the blocktree. As such $\T(t)$ captures all the resources the adversary has at its disposal to attack at time $t$.

The longest chain protocol in \S \ref{sec:protocols} results in a process described as follows.

\begin{enumerate}
    \item $\T(0)= \C(0)$ is a single root block (the genesis block).

\item $\T(t)$ evolves as follows: there are independent Poisson processes of rate $\lambda_a$ at each block of $\T(t)$ (we call them the {\em adversary} processes), plus an additional independent Poisson process of rate $\lambda_h$ (we call it the {\em honest} process) arriving at the last block of the chain $\C(t)$, i.e. an aggregate Poisson process of rate $\lambda_a + \lambda_h$ at that block (the tip of the longest chain), and rate $\lambda_a$ at every other block of $\T(t)$. A new block is added to the tree at a certain block when a block is generated.  An arrival from the honest process is called an honest block. An arrival from the adversary process is called an adversarial block.

\item The chain $\C(t)$ is updated in two possible ways : 1) an additional honest block is added to $\C(t)$ if an arrival from the honest process occurs; 2) an adversary can replace $\C(t^-)$ by another chain $\C(t)$ from $\T(t)$ which is equal or longer in length than $\C(t^-)$.\footnote{All jump processes are assumed to be right-continuous with left limits, so that $\C(t),\T(t)$, etc include the new arrival if there is a new arrival at time $t$.}  The adversary's decision has to be based on the current state of the process. 

\end{enumerate}

The longest chain protocol means that the honest nodes always propose on the tip of the current public longest chain $\C(t)$ (at rate $\lambda_h$, proportional to their stake). The adversary can propose on any block (at rate $\lambda_a$, again proportional to its stake). The adversary can change where the honest nodes act by broadcasting an equal or longer length chain using the blocks it has succeeded in proposing. Since the adversary can change where the honest nodes can propose even with an equal length new chain, that means the adversary is given the ability to choose where the honest nodes propose when there are more than one longest public chain.

Proving the liveness and persistence of the protocol boil down to  providing a guarantee that the chain $\C(t)$ converges as $t \rightarrow \infty$ regardless of the adversary's strategy. We will show that this happens provided that $\lambda_a < \lambda_h/e$, i.e. 
$$\beta := \frac{\lambda_a}{\lambda_a + \lambda_h} < \frac{1}{1+e}$$
Our key contribution here is defining an appropriate notion of {\em adversary-proof convergence} and analyzing how frequently it occurs.

\subsubsection{Adversary-proof convergence }

We first define several basic random variables and random processes which are constituents of the processes $\T(\cdot)$ and $\C(\cdot)$. Then we will use them to define the notion of adversary-proof convergence event, and prove that indeed once the event occurs, convergence of the longest chain will occur regardless of what the adversarial strategy is.  

\begin{enumerate}
    \item $\tau_i$ = generation  time of the $i$-th honest block; $\tau_0 = 0$ is the generation time of the genesis block, $\tau_{i+1} - \tau_i$ is exponentially distributed with mean $1/\lambda_h$, i.i.d. across all $i$'s.  
    \item $A_h(t)$ = number of honest blocks generated from time $0$ to $t$. $A_h(t)$ increases by $1$ at each time $\tau_i$. $A_h(\cdot)$ is a Poisson process of rate $\lambda_h$.
    \item $L(t)$ is the length of $\C(t)$. $L(0) = 0$. Note that since the chain $\C(t)$ increments by $1$ for every honest block generation, it follows that for all $i$ and for all $t > \tau_i$,  
    \begin{equation} L(t)- L(\tau_i) \ge A_h(t) - A_h(\tau_i).
    \label{eq:lc}
    \end{equation}
    \item $\tt_i = \{\T_i(s) : s \ge 0\}$ is the random tree process generated by the adversary starting from the $i$-th honest block. $\T_i(0)$ consists of the $i$-th honest block and $\T_i(s)$ consists of all adversarial blocks grown on the $i$-th honest block from time $\tau_i$ to $\tau_i + s$. Note that the $\tt_i$'s are i.i.d. copies of the pure adversarial tree $\tt^a$ (i.e. the tree $\T(t)$ when $\lambda_h = 0$).
    \item $D_i(s)$ is the depth of the adversarial tree $\T_i(s)$. 
\end{enumerate}

Note that the overall tree $\T(t)$ is the composition of the adversarial trees $\T_0(t), \T_1(t-\tau_1), \ldots \T_i(t - \tau_i)$ where the $i$-th honest block is the last honest block that was generated before time $t$. How these trees are composed to form $\T(t)$ depends on the adversarial action on when to release the private chains.

Let us define the events:
\begin{equation}
    E_{ij} = \mbox{event that $D_i(t - \tau_i)  < A_h(t) - A_h(\tau_i)$ for all $t > \tau_j$}
\end{equation}
and 
\begin{equation}
F_j = \bigcap_{i = 0}^{j-1} E_{ij}.
\end{equation}

These events can be interpreted as about a fictitious system where there is a growing chain consisting of only honest blocks. The event $E_{ij}$ is the event that the adversarial tree rooted at the $i$-th honest block does not catch up with the honest chain any time after the generation of the $j$-th honest block. Such a tree can be interpreted as providing resource for a possible attack at the honest chain. If $E_{ij}$ occurs, then there is not enough resource for the $i$-th tree to attack after the $j$-th block. If $F_j$ occurs, there is not enough resource for {\em any} of the previous trees to attack the honest chain. 

Even though the events are about a fictitious system with a purely honest chain and the longest chain in the actual system may consist of a mixture of adversarial and honest blocks, intuitively the actual chain can only grow faster than the fictitious honest chain, and so we have the following key lemma. This lemma justifies us giving the name {\em adversary-proof convergence event} to $F_j$.

\begin{lemma}
\label{lem:regen}
If $F_j$ occurs, then the chain $\C(\tau_j)$ is a prefix of any future chain $\C(t)$, $t > \tau_j$. Equivalently, the $j$-th honest block will be in $\C(t)$ for all $t > \tau_j$. 
\end{lemma}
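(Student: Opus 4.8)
The plan is to prove the contrapositive-style structural claim directly: assuming $F_j$ occurs, I will show that at every time $t > \tau_j$ the longest chain $\C(t)$ must contain the $j$-th honest block, and moreover the first $L(\tau_j)$ blocks of $\C(t)$ coincide with $\C(\tau_j)$. The key conceptual point is to relate the \emph{actual} blockchain dynamics (a mix of honest and adversarial blocks) to the \emph{fictitious} purely-honest chain appearing in the definition of $E_{ij}$, using inequality \eqref{eq:lc}: the actual chain $\C(t)$ grows by at least one block for every honest arrival, so $L(t) - L(\tau_i) \ge A_h(t) - A_h(\tau_i)$.

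First I would argue by contradiction: suppose $F_j$ occurs but there is some $t > \tau_j$ at which $\C(t)$ does \emph{not} contain the $j$-th honest block. Let $t^*$ be the first such time (using right-continuity of the jump processes, this infimum is attained). At time $t^{*-}$ the chain still contained the $j$-th honest block, and at $t^*$ the adversary replaces it by a new chain $\C(t^*)$ of length $L(t^*) \ge L(t^{*-})$ that branches off before the $j$-th honest block. Let $b_{\rm fork}$ be the last block common to $\C(t^{*-})$ and $\C(t^*)$; since the $j$-th honest block is on $\C(t^{*-})$ but not on $\C(t^*)$, the fork point $b_{\rm fork}$ lies strictly before the $j$-th honest block on $\C(t^{*-})$. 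Now here is the crucial observation: every honest block ever produced lies on the public longest chain at the moment it is produced (honest nodes mine on the tip of $\C$), so in particular $b_{\rm fork}$ sits on the chain $\C(\tau_i)$ for the appropriate index, and the portion of $\C(t^*)$ strictly after $b_{\rm fork}$ consists \emph{entirely of adversarial blocks} — because any honest block after $b_{\rm fork}$ would have been mined on a then-current longest chain passing through $b_{\rm fork}$, forcing it onto $\C(t^{*-})$ as well, contradicting that the chains diverge at $b_{\rm fork}$. (I should be a little careful to phrase this so it handles the possibility that $b_{\rm fork}$ is itself the $i$-th honest block or an earlier adversarial block; in all cases, let $i$ be the largest index such that the $i$-th honest block is on $\C(t^*)$, so $i \le j-1$, and the $i$-th honest block lies at or before $b_{\rm fork}$.)

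With that structure in hand, the segment of $\C(t^*)$ from the $i$-th honest block onward is an adversarial chain grown starting from the $i$-th honest block, hence its length minus $L(\tau_i)$ is at most the depth $D_i(t^* - \tau_i)$ of the adversarial tree $\tt_i$ at elapsed time $t^* - \tau_i$. Therefore
\begin{equation}
L(t^*) - L(\tau_i) \le D_i(t^* - \tau_i).
\end{equation}
On the other hand, since $\C(t^*)$ is a valid replacement it is at least as long as $\C(t^{*-})$, and $\C(t^{*-})$ still contains all honest blocks up through the $j$-th, so by \eqref{eq:lc} applied at time $t^{*-}$ (then taking the right limit) we get $L(t^*) \ge L(t^{*-}) \ge L(\tau_i) + (A_h(t^*) - A_h(\tau_i))$, i.e.
\begin{equation}
L(t^*) - L(\tau_i) \ge A_h(t^*) - A_h(\tau_i).
\end{equation}
Combining the two displays gives $D_i(t^* - \tau_i) \ge A_h(t^*) - A_h(\tau_i)$ with $t^* > \tau_j$ and $i \le j-1$, which directly contradicts the event $E_{ij} \subseteq F_j$. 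Hence no such $t^*$ exists, so the $j$-th honest block — and by the same reasoning every block of $\C(\tau_j)$ — remains on $\C(t)$ for all $t > \tau_j$, which is exactly the claim.

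The main obstacle I anticipate is the second paragraph: making fully rigorous the claim that the post-fork segment of any adversarially-offered chain consists purely of adversarial blocks, and correctly identifying the index $i$ (the last honest block surviving on the new chain). This requires a careful induction on jump times / minimality of $t^*$ and a precise use of the modeling assumption that honest arrivals attach to the tip of the current $\C(t)$; once that bookkeeping is pinned down, the inequality chasing with \eqref{eq:lc} and the definition of $D_i$ is routine. A secondary subtlety is the edge behavior at jump times (right-continuity conventions, equal-length replacements), which the footnotes in the excerpt already address and which I would invoke rather than re-derive.
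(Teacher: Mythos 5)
Your proposal is correct and takes essentially the same route as the paper's proof: a minimal violating time $t^*$, identification of the last honest block on the offending chain (your index $i$, the paper's $b_h$), whose index must be below $j$ by minimality, followed by the two bounds $L(t^*) \le L(\tau_i) + D_i(t^*-\tau_i)$ and \eqref{eq:lc}, which together contradict $E_{ij}$ (implied by $F_j$). The only caveat is that your intermediate claim that everything after $b_{\rm fork}$ is adversarial is both unnecessary and not true in general (honest blocks mined before $\tau_j$ on since-abandoned branches can sit past the fork point), but your parenthetical fallback of anchoring at the largest honest index on $\C(t^*)$ is exactly the paper's argument, so the proof goes through.
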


\begin{proof}
We will argue by contradiction. Suppose $F_j$ occurs and let $t^* > \tau_j$ be the smallest $t$ such that $\C(\tau_j)$ is not a prefix of $\C(t)$. Let $b_h$ be the last honest block on $\C(t^*)$ (which must exist, because the genesis block is by definition honest.) If $b_h$ is generated at some time $t_1> \tau_j$, then $\C(t_1^-)$ is the prefix of $\C(t^*)$ before block $b_h$, and does not contain $\C(\tau_j)$ as a prefix, contradicting the minimality of $t^*$. So $b_h$ must be generated before $\tau_j$, and hence $b_h$ is the $i$-th honest block for some $i < j$. The part of $\C(t^*)$ after block $b_h$ must lie entirely in the adversarial tree $T_i(t^*-\tau_i)$ rooted at $b_h$. Hence,
\begin{equation}
    D_i(t^*-\tau_i) < A_h(t^*)-A_h(\tau_i) \le L(t^*) - L(\tau_i),
\end{equation}
where the first inequality follows from the fact that $F_j$ holds, and the second inequality follows from the longest chain policy (eqn. (\ref{eq:lc})). From this we obtain that 
\begin{equation}
    L(\tau_i) + D_i(t^*-\tau_i) < L(t^*)
\end{equation}
which is a contradiction  since $L(t^*) \le L(\tau_i) + D_i(t^*-\tau_i)$.
\end{proof}

We will show that the adversary-proof convergence event occurs infinitely number of times if $\beta < 1/(1+e)$, and will also give an estimate on how frequently that happens. This will imply persistence and liveness of the protocol with high probability guarantees.

Since the occurrence of the event $F_j$ depends on whether the adversarial trees from the previous honest blocks can catch up with the (fictitious) honest chain, we next turns to an analysis of the growth rate of an adversarial tree.\footnote{The mean growth rate of this tree was analysed in \cite{fan2018scalable} using difference equations. Here, we are using the machinery of branching random walks, which not only gives us tail probabilities but also allow the extension to the $c$-correlated protocol, $c > 1$, easily.}


\subsubsection{The adversarial tree via branching random walks}

\label{sec-BRW}

The adversarial tree $\tt^a(t)$ is the tree $\T(t)$ when $\lambda_h = 0$, i.e. honest nodes not acting. Let the depth of the tree $\tt^a(t)$ be denoted by $D(t)$ and defined as the maximum depth of its blocks. The genesis block is always at depth $0$ and hence $\tt^a(0)$ has depth zero.

We give a description of the (dual of the)
adversarial tree in terms of a 
Branching Random Walk (BRW). Such a representation appears already in \cite{Drmota}, but we use here 
the standard language from, e.g., \cite{aidekon,shi}.

Consider the collection of $k$ tuples of positive integers, 
$\I_k =\{(i_1,\ldots,i_k)\}$, and  set $\I=\cup_{k>0} 
\I_k$. We consider elements of $\I$ as labelling the vertices of a rooted 
infinite tree, with $\I_k$ labelling the vertices at generation $k$ as follows:
the vertex $v = (i_1,\ldots,i_k)\in \I_k$ 
is the $i_k$-th child of vertex $(i_1,\ldots,i_{k-1})$
at level $k-1$.
An example of labelling is given in Fig.~\ref{fig:label_tree}.
For such $v$ we also let $v^j=(i_1,\ldots,i_j)$, 
$j=1,\ldots,k$, denote the 
ancestor of $v$ at level $j$, with $v^k=v$. 
For notation convenience, we set $v^0=0$ as the root of the tree.

\begin{figure}[h]
    \centering
\includegraphics[width=0.6\textwidth]{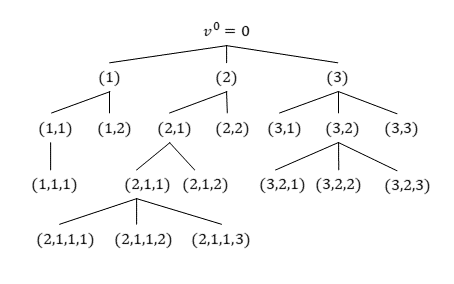}
\caption{Labelling the vertices of a rooted infinite tree.} \label{fig:label_tree}
\end{figure}

Next, let $\{\E_v\}_{v\in \I}$ be an i.i.d. family of exponential 
random variables
of parameter $\lambda_a$. For $v=(i_1,\ldots,i_k)\in \I_k$,
let $\W_v=\sum_{j\leq i_k} \E_{(i_1,\ldots,i_{k-1},j)}$ and
let $S_v=\sum_{j\leq k} \W_{v^j}$. This creates a labelled tree, with the following
interpretation: for $v=(i_1,\ldots,i_j)$,  the $W_{v^j}$ are the waiting
for $v^j$ to appear, measured from the appearance of $v^{j-1}$, and 
$S_v$ is the appearance time of $v$. A moments thought ought to convince the 
reader that the tree $S_v$ is a description of the adversarial tree,
sorted by depth. 

Let $S^*_k=\min_{v\in \I_k} S_v$. Note that $S^*_k$ is the time of appearance
of a block at level $k$ and therefore we have
\begin{equation}
  \label{eq-Ofer1}
  \{D(t)\leq k\}=\{S^*_k\geq  t\}.
\end{equation}

$S^*_k$ is the minimum of a standard BRW. Introduce, for $\theta<0$,
the moment generating 
function
\begin{eqnarray*}
\Lambda(\theta)&=&\log \sum_{v\in \I_1} E(e^{\theta S_v})=
\log \sum_{j=1}^\infty E (e^{\sum_{i=1}^j \theta \E_i})=
\log \sum_{j=1}^\infty (E(e^{\theta \E_1}))^j\\&=&
\log \frac{E(e^{\theta \E_1})}{1-E(e^{\theta\E_1})}.
\end{eqnarray*}
Due to the exponential law of $\E_1$,
$E(e^{\theta \E_1})= \frac{\lambda_a}{\lambda_a-\theta}$ and therefore
$\Lambda(\theta)=\log(-\lambda_a/\theta)$. 

An important role is played by $\theta^*=-e\lambda_a$, for which 
$\Lambda(\theta^*)=-1$ and 
$$\sup_{\theta<0} \left(\frac{\Lambda(\theta)}{\theta}\right)= \frac{\Lambda(\theta^*)}{\theta^*}=\frac{1}{\lambda_a e}=\frac{1}{|\theta^*|}.$$
Indeed, 
see 
e.g \cite[Theorem 1.3]{shi}, we have the following.
\begin{lemma}
  \label{prop-1}
$$\lim_{k\to\infty} \frac{S^*_k}{k}= \sup_{\theta<0} \left(\frac{\Lambda(\theta)}{\theta}\right)=\frac{1}{|\theta^*|}, \quad a.s.$$
\end{lemma}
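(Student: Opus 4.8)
This is the classical law of large numbers for the leftmost particle (the ``first birth'') of a branching random walk, and my plan is to obtain it by verifying that our specific BRW satisfies the hypotheses of \cite[Theorem 1.3]{shi}, after which the conclusion is immediate; I also indicate the Hammersley--Kingman--Biggins skeleton behind that theorem. The starting point is the many-to-one identity: by the branching structure (subtrees rooted at distinct vertices of a given generation are i.i.d.\ copies of the whole BRW) together with the computation of $\Lambda$ already carried out in the excerpt, for every $\theta<0$ and $k\ge 1$
\[
E\Big[\sum_{v\in\I_k} e^{\theta S_v}\Big]=e^{k\Lambda(\theta)},\qquad \Lambda(\theta)=\log(-\lambda_a/\theta),
\]
so that $W_k(\theta):=e^{-k\Lambda(\theta)}\sum_{v\in\I_k}e^{\theta S_v}$ is a non-negative mean-one martingale in $k$. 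The excerpt has already established that $\Lambda$ is finite, smooth and strictly convex on $(-\infty,0)$ (indeed $\Lambda'(\theta)=-1/\theta$), and that $\theta\mapsto\Lambda(\theta)/\theta$ attains its supremum over $\theta<0$ at the interior point $\theta^*=-e\lambda_a$ with value $1/|\theta^*|=1/(e\lambda_a)$; these are exactly the non-degeneracy conditions required by \cite[Theorem 1.3]{shi}, which then gives $S^*_k/k\to 1/|\theta^*|$ a.s. (Note that the minimum defining $S^*_k$ is attained a.s.\ and $S^*_k<\infty$, because along the first generation $\W_{v^1}=\sum_{j\le i_1}\E_{(j)}\to\infty$, so only finitely many $v\in\I_k$ have $S_v\le t$ for any $t$.)

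To keep the paper self-contained I would also record why $\gamma:=1/|\theta^*|$ is the right constant. \emph{Lower bound.} Fix $a<\gamma$ and choose $\theta<0$ with $\Lambda(\theta)/\theta>a$, equivalently $\Lambda(\theta)-\theta a<0$. A union bound, Markov's inequality (using $\mathbf{1}_{\{S_v\le ak\}}\le e^{\theta(S_v-ak)}$ since $\theta<0$) and the many-to-one identity give
\[
P(S^*_k\le ak)\le E\big[\#\{v\in\I_k:S_v\le ak\}\big]\le e^{-\theta ak}E\Big[\sum_{v\in\I_k}e^{\theta S_v}\Big]=e^{k(\Lambda(\theta)-\theta a)},
\]
which is summable in $k$; Borel--Cantelli yields $\liminf_k S^*_k/k\ge a$ a.s., and letting $a\uparrow\gamma$ gives $\liminf_k S^*_k/k\ge\gamma$. \emph{Upper bound.} First, $S^*_k/k$ converges a.s.\ to a deterministic constant $\gamma'$: grafting an independent copy of the BRW at the level-$k$ minimizer shows $E[S^*_{k+\ell}]\le E[S^*_k]+E[S^*_\ell]$, so $E[S^*_k]/k$ has a limit by Fekete's lemma, and the a.s.\ convergence to that same limit is Biggins' first-birth theorem (or Kingman's subadditive ergodic theorem applied along an optimal line of descent). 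It remains to show $\gamma'\le\gamma$: for $a>\gamma$ one runs a truncated (barrier) second-moment argument on $Z_k:=\#\{v\in\I_k: S_{v^j}\le aj+O(1)\ \text{for all } j\le k\}$, showing $E[Z_k]\to\infty$ and $E[Z_k^2]\le C\,E[Z_k]^2$, whence $P(Z_k>0)\ge 1/C$ by the Paley--Zygmund inequality; hence $P(S^*_k\le ak)$ does not tend to $0$, which together with $S^*_k/k\to\gamma'$ forces $\gamma'\le a$. Letting $a\downarrow\gamma$ completes the identification $\gamma'=\gamma$.

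The only genuinely delicate step is the second-moment/barrier estimate in the upper bound: two generation-$k$ particles typically share a long common ancestral segment, so the naive second moment is too large and one must restrict to descendants whose ancestral path stays below the line $j\mapsto aj$. In our setting this is routine because the offspring point process is explicit --- the spacings $\W_v$ are sums of i.i.d.\ exponentials of rate $\lambda_a$, so $\Lambda$, its Legendre transform, and all the associated change-of-measure computations are in closed form --- and, crucially, the hypotheses of \cite[Theorem 1.3]{shi} reduce to the one-line verification of $\Lambda(\theta)=\log(-\lambda_a/\theta)$ already done. We therefore invoke that theorem for the statement, while keeping the martingale $W_k(\theta)$ and the associated spine decomposition at hand for the sharper (logarithmic-order) control of $S^*_k-\gamma k$ that may be needed elsewhere in the analysis.
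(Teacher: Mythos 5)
Your proposal takes essentially the same route as the paper: the lemma is obtained by computing $\Lambda(\theta)=\log(-\lambda_a/\theta)$, identifying the interior maximizer $\theta^*=-e\lambda_a$ of $\Lambda(\theta)/\theta$, and then invoking \cite[Theorem 1.3]{shi}, which is exactly what the paper does. Your additional self-contained sketch (first-moment/Chernoff lower bound, subadditivity plus a truncated second-moment argument for the upper bound) is a correct outline of the classical Hammersley--Kingman--Biggins proof, but it is supplementary rather than a different approach.
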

In fact, much more is known, see e.g. \cite{hushi}. 
\begin{lemma}
  \label{prop-2}
  There exist explicit constants $c_1>c_2>0$ 
so that the sequence
$ S^*_k-k/\lambda_a e-c_1\log k$ is tight,
and
$$\liminf_{k\to \infty} S^*_k-k/\lambda_a e-c_2\log k=\infty, a.s.$$
\end{lemma}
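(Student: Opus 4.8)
\medskip\noindent\textbf{Proof plan.} The doubly-indexed family $\{S_v\}_{v\in\I}$ is a branching random walk (BRW): every vertex has children whose displacements form the point process $\mathcal Z_0=\{S_v:v\in\I_1\}=\{\E_1,\ \E_1+\E_2,\ \E_1+\E_2+\E_3,\dots\}$, i.e.\ the points of a rate-$\lambda_a$ Poisson process on $(0,\infty)$, and $S^*_k=\min_{v\in\I_k}S_v$ is its minimal position at generation $k$. Both assertions are the standard ``$\tfrac32\log$-correction'' statements for the minimum of a boundary-case BRW, and the plan is to reduce to that normalization and invoke \cite{aidekon,hushi}. Using the quantities assembled before Lemma~\ref{prop-1} --- $\theta^*=-e\lambda_a$, $\Lambda(\theta^*)=-1$, $v^*:=\Lambda(\theta^*)/\theta^*=1/(\lambda_a e)$, and $\Lambda'(\theta^*)=v^*$ by the first-order condition for the supremum --- I would set
\[
  m_k\ :=\ -\theta^*\bigl(S^*_k-v^*k\bigr)\ =\ \lambda_a e\,S^*_k-k ,
\]
which is $\min_{v\in\I_k}V(v)$ for the BRW whose generation-one point process is $\mathcal Z:=\{-\theta^*z-1:z\in\mathcal Z_0\}$. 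The excerpt's computation $\sum_{v\in\I_1}E(e^{\theta^*S_v})=-\lambda_a/\theta^*=e^{\Lambda(\theta^*)}$, re-read in the variable $z=-\theta^*S_v-1$, is exactly $E\bigl[\sum_{z\in\mathcal Z}e^{-z}\bigr]=1$, and $\Lambda'(\theta^*)=v^*$ is exactly $E\bigl[\sum_{z\in\mathcal Z}z\,e^{-z}\bigr]=0$; so $(m_k)$ is a BRW in the boundary case, with mean-zero many-to-one walk of finite positive variance $\sigma^2=E[\sum_{z\in\mathcal Z}z^2e^{-z}]$.

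I would then check the remaining hypotheses for $\mathcal Z$. It is non-lattice (its points have densities); it is supercritical (indeed $\mathcal Z$ has infinitely many points a.s., the admissible regime); and the integrability conditions on $X:=\sum_{z\in\mathcal Z}e^{-z}$ and $\widetilde X:=\sum_{z\in\mathcal Z}z_+e^{-z}$ hold with room to spare: writing $\E_i=\E_i'/\lambda_a$ with $\E_i'$ i.i.d.\ standard exponentials, $X=e\sum_{j\ge1}e^{-e(\E_1'+\cdots+\E_j')}$ is a shot-noise sum over the points of a rate-$1$ Poisson process, so $E[e^{sX}]<\infty$ for every $s$; hence $X$ (and likewise $\widetilde X$) has all polynomial moments, in particular $E[X(\log_+X)^2]<\infty$, $E[\widetilde X\log_+\widetilde X]<\infty$, and the derivative martingale is non-degenerate. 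With all assumptions met, \cite{aidekon} gives that $m_k-\tfrac32\log k$ converges in law (to a random shift of a Gumbel), hence is tight; undoing the affine map,
\[
  S^*_k-\frac{k}{\lambda_a e}-c_1\log k\ \ \text{is tight with}\ \ c_1=\frac{3}{2\lambda_a e}=\frac{3}{2|\theta^*|}.
\]

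For the almost-sure bound I would use the sharp upper-tail estimate for the boundary-case minimum (from \cite{hushi}): there is $C<\infty$ with $P\bigl(m_k\le\tfrac32\log k-y\bigr)\le C(1+y)e^{-y}$ for all $k\ge1$, $0\le y\le\tfrac32\log k$. Taking $y=(\tfrac32-c')\log k$ for a fixed $c'\in(\tfrac13,\tfrac12)$ gives $P(m_k\le c'\log k)\le C\bigl(1+\tfrac32\log k\bigr)k^{-(3/2-c')}$, summable in $k$; Borel--Cantelli then yields $m_k>c'\log k$ for all large $k$, a.s. Translating back, $S^*_k-\tfrac{k}{\lambda_a e}-c_2\log k\to\infty$ a.s.\ for any $c_2<c'/(\lambda_a e)$, e.g.\ $c_2=\tfrac{1}{3\lambda_a e}$, and since $c_1=\tfrac{3}{2\lambda_a e}>\tfrac{1}{3\lambda_a e}=c_2>0$ the lemma follows. (Equivalently one may quote the a.s.\ identity $\liminf_k m_k/\log k=\tfrac12$ of \cite{hushi}.)

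No genuinely new argument is created: everything is contained in \cite{aidekon,hushi}, and the only real work is (i) bookkeeping the affine normalization so that the coefficient of $\log k$ emerges as $\tfrac{1}{2|\theta^*|}=\tfrac1{2\lambda_a e}$ (it is the factor $e$, not merely $\lambda_a$, that is easy to lose), and (ii) verifying the non-lattice and moment hypotheses for the Poisson offspring law. The one point worth flagging is that the naive first-moment (many-to-one) bound over-counts by a factor $\sim k^{3/2}$ and unaided only controls $P(m_k\le-\gamma\log k)$ --- the wrong sign for the statement; it is the barrier/ballot refinement behind the $C(1+y)e^{-y}$ tail bound that makes an a.s.\ lower bound with a \emph{positive} coefficient of $\log k$ possible.
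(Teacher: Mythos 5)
Your proposal is correct and follows essentially the same route as the paper, which proves nothing beyond observing that $S^*_k$ is the minimum of a boundary-case branching random walk and citing \cite{hushi} (and \cite{aidekon}) for the $\tfrac{3}{2}\log k$ tightness and the a.s.\ $\liminf m_k/\log k=\tfrac12$ statement. Your contribution is simply to make the citation rigorous---carrying out the affine normalization $m_k=\lambda_a e\,S^*_k-k$, checking the boundary-case identities $E[\sum_{z\in\mathcal Z}e^{-z}]=1$, $E[\sum_{z\in\mathcal Z}z e^{-z}]=0$, and the non-lattice/moment hypotheses---and to extract the explicit constants $c_1=\tfrac{3}{2\lambda_a e}$ and any $c_2<\tfrac{1}{2\lambda_a e}$, all of which is accurate.
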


Note that Lemmas \ref{prop-1}, \ref{prop-2}
and \eqref{eq-Ofer1} imply in particular
that  $D(t)\leq e\lambda_a t$ for all large $t$, a.s., and also that
\begin{equation}
  \label{eq-Ofer2}
\mbox{\rm  if
$e\lambda_a >\lambda_h$ then } D(t)> \lambda_ht\; \mbox{\rm for all large $t$, a.s.}.
\end{equation}

We will need also tail estimates for the event
$D(t)>e\lambda_a t+x$. While such estimates can be read from 
\cite{shi}, we bring instead a quantitative statement suited 
for our
needs.
\begin{lemma}
   \label{theo-tail}
  For $x>0$ so that $e\lambda_a t+x$ is
  an integer,
  \begin{equation}
    \label{eq:Ofertheo}
    P(D(t)\geq e\lambda_a t+x)\leq  e^{-x}.
  \end{equation}
\end{lemma}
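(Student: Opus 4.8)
The statement to prove is the tail bound $P(D(t)\ge e\lambda_a t + x)\le e^{-x}$ for the depth $D(t)$ of the pure adversarial tree. The plan is to use the BRW representation established just above: by \eqref{eq-Ofer1}, $\{D(t)\ge k\}=\{S^*_k\le t\}$, so with $k=e\lambda_a t+x$ (an integer by hypothesis) it suffices to show $P(S^*_k\le t)\le e^{-x}$. The natural tool is an exponential Markov (Chernoff) bound at the optimal tilt $\theta^*=-e\lambda_a$ combined with a union bound over the $k$-th generation of the tree — this is the standard first-moment upper bound for the minimum of a branching random walk.

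\textbf{Main steps.} First, for any $\theta<0$ and any threshold $t$, write
\[
P(S^*_k\le t) = P\Big(\min_{v\in\I_k} S_v \le t\Big) \le \sum_{v\in\I_k} P(S_v\le t) = \sum_{v\in\I_k} P\big(e^{\theta S_v}\ge e^{\theta t}\big)\le e^{-\theta t}\sum_{v\in\I_k} E\big[e^{\theta S_v}\big].
\]
Second, compute the generation-$k$ partition function. Because the increments $\W_{v^j}$ along a path are built from disjoint blocks of the i.i.d.\ family $\{\E_v\}$, the multiplicative structure of the BRW gives $\sum_{v\in\I_k} E[e^{\theta S_v}] = e^{k\Lambda(\theta)}$, where $\Lambda$ is exactly the moment generating function introduced before the statement; one verifies this by induction on $k$, using that summing over the $i_k$-th child contributes the factor $\sum_{j\ge 1}(E[e^{\theta\E_1}])^j = E[e^{\theta\E_1}]/(1-E[e^{\theta\E_1}])$ at each level. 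Hence $P(S^*_k\le t)\le \exp(k\Lambda(\theta)-\theta t)$. Third, specialize to $\theta=\theta^*=-e\lambda_a$, for which $\Lambda(\theta^*)=-1$ as already noted. Then the exponent becomes $-k + e\lambda_a t = -(e\lambda_a t + x) + e\lambda_a t = -x$, giving $P(S^*_k\le t)\le e^{-x}$, which is the claim. The choice $\theta^*$ is exactly the minimizer of $\Lambda(\theta)/\theta$, so this is the tightest bound this method yields and it is what makes the constant in front of $t$ equal to $e\lambda_a$.

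\textbf{Anticipated obstacle.} The computation is essentially routine once the BRW dictionary is in place; the only point requiring care is the second step — justifying $\sum_{v\in\I_k}E[e^{\theta S_v}]=e^{k\Lambda(\theta)}$ rigorously, i.e.\ checking that the branching structure $S_v = S_{v^{k-1}} + \W_v$ together with the independence of $\W_v$ (over new $\E$'s) from the history really does factor the expectation, and that the infinite sum over the (unboundedly many) children at each node converges for $\theta<0$ — it does, since $E[e^{\theta\E_1}]=\lambda_a/(\lambda_a-\theta)<1$. One should also note the minor bookkeeping that $k=e\lambda_a t+x$ being a positive integer is used so that $\I_k$ is the correct generation and \eqref{eq-Ofer1} applies verbatim. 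No concentration or second-moment argument is needed for this direction; the matching lower-bound behavior (Lemmas~\ref{prop-1}, \ref{prop-2}) is cited, not reproved.
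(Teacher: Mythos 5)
Your proposal is correct and follows essentially the same route as the paper: a union bound over generation $k=e\lambda_a t+x$ of the BRW via \eqref{eq-Ofer1}, then a Chernoff (exponential Chebyshev) bound at the tilt $\theta^*=-e\lambda_a$, where the paper computes $\sum_{v\in\I_m}(1/(1+e))^{|v|}=e^{-m}$ explicitly while you package the same geometric-series computation as $\sum_{v\in\I_k}E[e^{\theta S_v}]=e^{k\Lambda(\theta)}$ with $\Lambda(\theta^*)=-1$. The two are the same argument up to presentation.
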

\begin{proof}
  We use a simple upper bound. Write $m=e\lambda_a t+x$.
  Note that by \eqref{eq-Ofer1},
  \begin{equation}
    \label{eq-Ofer22}
    P(D(t)\geq m)=P(S^*_{ m}\leq t)
  \leq \sum_{v\in \I_{m}} P(S_v\leq t).
\end{equation}
  For $v=(i_1,\ldots,i_k)$, set $|v|=i_1+\cdots+i_k$. Then,
  we have that $S_v$ has the same law as 
  $\sum_{j=1}^{|v|} \E_j$. Thus, by Chebycheff's inequality,
  for $v\in \I_{m}$,
  \begin{equation}
    \label{eq-Ofer3}
    P(S_v\leq  t)\leq Ee^{\theta^* S_v} e^{-\theta^*t}=\left(\frac{\lambda_a}
  {\lambda_a-\theta^*} \right)^{|v|} e^{-\theta^* t}=\left(\frac{1}{1+e}\right)^{|v|} e^{e\lambda_a t}.
\end{equation}
  But 
  \begin{equation}
    \label{eq-Ofer4}\sum_{v\in \I_{m}}\left(\frac{1}{1+e}\right)^{|v|}
  =\sum_{i_1\geq 1,\ldots,i_{m}\geq 1} 
  \left(\frac{1}{1+e}\right)^{\sum_{j=1}^m i_j}=\left(\sum_{i\geq 1}\left(\frac{1}{1+e}\right)^i\right)^m= e^{-{m}}.
\end{equation}
Combining \eqref{eq-Ofer3}, \eqref{eq-Ofer4} and \eqref{eq-Ofer22}
yields \eqref{eq:Ofertheo}.
\end{proof}

\subsubsection{Occurrence of adversary-proof convergence }

If the growth rate of the adversarial tree is greater than $\lambda_h$, then the adversary can always attack the honest chain by growing a side chain at a rate faster than the honest chain's growth rate and replace it at will. (\ref{eq-Ofer2}) immediately shows that if  $\lambda_a  > \lambda_h/e$, i.e. when the adversarial fraction $\beta > 1/(1+e)$, the growth rate of the adversarial tree is at least $1$, and hence the private attack is successful. This is what \cite{fan2018scalable} showed. The question we want to answer is what happens when $\beta < 1/(1+e)$? Will another attack work? 
We show below that in this regime,  the adversary-proof convergence event $F_j$ has a non-zero probability of occurrence, and this implies that no attack works. 


\begin{lemma}
\label{lem:infinite_E}
If $\lambda_a < \lambda_h/e$, i.e. $\beta < 1/(1+e)$, then there exists a strictly positive constant $p> 0 $ such that $P(F_j) \ge p$ for all $j$. Also, with probability $1$, the event $F_j$ occurs for infinitely many $j$'s.
\end{lemma}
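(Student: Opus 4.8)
The plan is to first establish the uniform lower bound $P(F_j) \ge p$, and then upgrade it to the infinitely-often statement by a Borel–Cantelli type argument along a sparse subsequence. For the lower bound, fix $j$ and recall $F_j = \bigcap_{i=0}^{j-1} E_{ij}$, where $E_{ij}$ says that the adversarial tree $\T_i$ rooted at the $i$-th honest block never catches up with the fictitious honest chain after time $\tau_j$. The key point is that the events $E_{ij}$ are highly correlated in a favorable direction: the honest chain grows linearly at rate $\lambda_h$ (by the law of large numbers for the Poisson process $A_h$), while each adversarial tree $\T_i$ grows only at rate $e\lambda_a < \lambda_h$ in depth by Lemma~\ref{prop-1}. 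So I would argue that on the event that the honest chain stays comfortably above its typical trajectory and each individual tree stays below its typical trajectory, \emph{all} the $E_{ij}$ hold simultaneously. Concretely, condition on the honest block times; then the trees $\T_0,\ldots,\T_{j-1}$ are independent, and I want
\[
D_i(t - \tau_i) < A_h(t) - A_h(\tau_i) \quad \text{for all } t > \tau_j, \ \text{all } i < j.
\]
Since $A_h(t) - A_h(\tau_i) = A_h(t) - A_h(\tau_j) + (j - i)$, the margin $(j-i)$ coming from the $(j-i)$ honest blocks between time $\tau_i$ and $\tau_j$ grows as $i$ decreases — exactly the trees that have had the most time to grow are the ones we grant the largest head start to overcome. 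This is what makes a \emph{uniform-in-$j$} bound plausible.

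The technical heart is a union bound over $i$ using the tail estimate of Lemma~\ref{theo-tail}. For a fixed $i$, the process $t \mapsto D_i(t-\tau_i) - (A_h(t) - A_h(\tau_i))$ must be shown to be negative for all $t > \tau_j$ with probability close to $1$, uniformly in the gap $j - i$. One would discretize: at integer depths $k$, use $\{D_i(s) \ge k\} = \{S^*_k(i) \le s\}$ and Lemma~\ref{theo-tail} to bound $P(D_i(s) \ge e\lambda_a s + x) \le e^{-x}$, choose $x$ proportional to the honest-chain lead so that the bound is summable over $k$, and separately control the honest process $A_h$ deviating downward from its mean via a standard Poisson concentration (Chernoff) bound. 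Combining, $P(E_{ij}^c \mid \text{honest times}) \le C \rho^{\,j-i}$ for some $\rho < 1$ depending only on the gap $\lambda_h - e\lambda_a$, so $P(F_j^c) \le \sum_{i=0}^{j-1} C\rho^{\,j-i} \le C\rho/(1-\rho) =: 1-p$, which is $< 1$ once we also verify the $i = j-1, j-2$ boundary terms (small gaps) contribute a controlled amount — possibly after absorbing them by noting $F_j$ still has positive probability since each $E_{ij}$ individually has probability bounded below by Lemma~\ref{prop-1} applied on the almost-sure event, and finitely many of them intersected still have positive probability. The main obstacle I anticipate is precisely making the union bound uniform in $j$ rather than merely giving $P(F_j) > 0$ for each fixed $j$: the small-gap terms $i$ close to $j$ do not benefit from an exponentially large honest lead, so one must argue that their \emph{total} contribution (not each term) is bounded away from $1$, which requires the summability $\sum_{k \ge 1}\rho^k < \infty$ to kick in quickly enough — equivalently, that the rate gap $\lambda_h - e\lambda_a > 0$ translates into a genuine geometric decay rate $\rho$ with an explicit constant.

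For the second assertion — $F_j$ occurs for infinitely many $j$ almost surely — I would \emph{not} use independence of the $F_j$ directly (they are not independent), but instead invoke a renewal/regeneration structure: by Lemma~\ref{lem:regen}, once $F_j$ occurs the chain $\C(\tau_j)$ is permanently a prefix, so the process effectively restarts from the $j$-th honest block. More precisely, I would show that from \emph{any} honest block, within a bounded (in expectation) number of subsequent honest blocks there is a positive probability $p' > 0$ — uniform over the past — that an adversary-proof convergence occurs; this follows by the same estimate applied to the trees rooted after that block, using that the pre-$\tau_j$ trees are irrelevant once $F_j$ has been "reset." Then a standard argument (conditional Borel–Cantelli, or the second Borel–Cantelli lemma applied to a sequence of conditionally-independent positive-probability events spaced far enough apart) gives infinitely many occurrences with probability $1$. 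Alternatively, and perhaps more cleanly, one can combine the almost-sure statement $D_i(s) < \lambda_h s$ for all large $s$ (from Lemma~\ref{prop-2}, equation~\eqref{eq-Ofer2} run in the favorable direction) with the strong law $A_h(t)/t \to \lambda_h$ to show directly that for all large $j$ the "bad" trees rooted far in the past cannot catch up, reducing $F_j^c$ to a condition on only the $O(\log j)$ most recent trees, whose combined failure probability is summable in $j$ — then Borel–Cantelli on $F_j^c$ finishes it.
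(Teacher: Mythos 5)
Your plan for the lower bound has the right ingredients (condition on the honest arrival times, use Lemma~\ref{theo-tail} per tree, treat small and large gaps $j-i$ separately), but the step you yourself flag as the obstacle is never actually closed, and the patch you offer is not valid: a crude union bound $P(F_j^c)\le\sum_i P(E_{ij}^c)$ need not be $<1$ (near the threshold $\beta\uparrow 1/(1+e)$ the geometric rate $\rho$ is close to $1$ and the sum blows up), and the fallback ``each $E_{ij}$ has positive probability, so finitely many intersected still do'' is false without an independence or positive-correlation structure — you mention conditional independence of the trees given the honest times earlier, but you do not use it where it is needed. The paper resolves exactly this point by an explicit construction: it passes to a two-sided stationary extension (fictitious honest blocks and trees indexed by $i<0$), defines the stronger shift-covariant event $E_j=\bigcap_{i<j}E_{ij}\subset F_j$ so that $P(E_j)=P(E_0)$ is automatically independent of $j$, and proves $P(E_0)>0$ by conditioning on the positive-probability event $G_n$ that the $2n$ inter-arrival gaps around the origin are all shorter than $1/n$; under $G_n$ the nearby trees have time only $O(n^{-3/4})$ to produce even one block, which kills precisely the small-gap terms, and the far terms are handled by Lemma~\ref{theo-tail} plus exponential/Poisson concentration in three regimes of $(i,k)$.

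The second assertion is where the gap is most serious. Your ``cleaner'' alternative — reduce $F_j^c$ to the $O(\log j)$ most recent trees and apply Borel--Cantelli to $F_j^c$ — cannot work: the failure probability contributed by the single tree rooted at the $(j-1)$-th honest block is bounded away from zero uniformly in $j$ (the adversary mines two quick blocks with constant probability), so $\sum_j P(F_j^c)=\infty$, and the conclusion that route would deliver ($F_j$ holds for all large $j$) is simply false — $F_j^c$ also occurs infinitely often. Your regeneration sketch is closer in spirit (and the monotonicity $E_{ij}\subset E_{ij'}$ for $j<j'$ does give the restart structure you want), but conditioning on $F_j$ biases the law of the post-$\tau_j$ honest process, so the claimed uniform conditional bound needs a real argument. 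The paper avoids all of this: since $E_j$ is a fixed measurable functional of the shifted i.i.d.\ sequence $\zz_i=(R_i,\T_i)$, the ergodic theorem gives that the long-run fraction of indices $j$ with $E_j$ occurring equals $P(E_0)=p>0$ almost surely, which yields both the uniform bound $P(F_j)\ge p$ and the infinitely-often statement in one stroke. You would need to either adopt that stationarity-plus-ergodicity device or fully work out the conditional (Lévy-type) Borel--Cantelli argument; as written, the proposal does not establish either claim.
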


The proof of this result can be found in \S\ref{app:proof_0_delay}.

\subsubsection{Waiting time for convergence}
 
 In the previous section, we established the fact that the event $F_j$ has  $P(F_j) > p > 0$ for all $j$. This implies that the event $F_j$ occurs infinitely often. But how long do we need to wait for such an event to occur? We answer this question in the present section.

More specifically, we would like to get a bound on the probability that in a time interval $[s,s+t]$, there are no adversary-proof convergence events, i.e.  a bound on:
$$q[s,s+t] :=P(\bigcap_{j: \tau_j \in [s,s+t]} F_j^c).$$

\begin{lemma}
\label{lem:time-zero-delay}
If $\lambda_a< \lambda_h/e$, i.e. $\beta < 1/(1+e)$ then there exist constants $a_2,A_2$ so that, for any $s,t\geq 0$, 
\begin{equation}
\label{eq-satofer1}
q[s,s+t] \;\; \leq \;\; A_2\exp(-a_2\sqrt{t}).
\end{equation}
\end{lemma}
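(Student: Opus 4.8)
The plan is to control $q[s,s+t]$ by a union-bound over honest blocks generated in $[s,s+t]$, combined with a careful estimate of $P(F_j^c)$ that decays fast enough to be summed. First I would recall that $F_j = \bigcap_{i=0}^{j-1} E_{ij}$, so $F_j^c = \bigcup_{i<j} E_{ij}^c$, where $E_{ij}^c$ is the event that the adversarial tree $\tt_i$ rooted at the $i$-th honest block catches up with the fictitious honest chain at some time after $\tau_j$, i.e. $D_i(t-\tau_i) \ge A_h(t) - A_h(\tau_i)$ for some $t > \tau_j$. Since $A_h(t) - A_h(\tau_i)$ grows linearly at rate $\lambda_h$ while, by Lemma~\ref{theo-tail}, $D_i(s)$ exceeds $e\lambda_a s$ only with exponentially small probability, and $e\lambda_a < \lambda_h$ by hypothesis, the ``gap'' $A_h(t)-A_h(\tau_i) - D_i(t-\tau_i)$ at time $\tau_j$ is already of order $(\tau_j - \tau_i)(\lambda_h - e\lambda_a)$ with overwhelming probability, and it keeps growing. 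Thus $P(E_{ij}^c)$ should decay exponentially in $j - i$ (up to the fluctuations of the Poisson clock), which makes $P(F_j^c) \le \sum_{i<j} P(E_{ij}^c)$ a bounded geometric-type sum --- this is essentially the content of Lemma~\ref{lem:infinite_E}, $P(F_j) \ge p$.

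The quantitative step is to see why the waiting time bound has the stretched-exponential form $\exp(-a_2\sqrt t)$ rather than $\exp(-a_2 t)$. Here I would partition $[s,s+t]$ into roughly $\sqrt t$ disjoint sub-blocks each of length $\sqrt t$, and argue that in order for no adversary-proof convergence event to occur throughout $[s,s+t]$, it must be that in each sub-block no $F_j$ occurs. The events across sub-blocks are not independent --- an adversarial tree rooted before $s$ may still be "alive" and threatening deep into the interval --- but the key observation is that after a sub-block of length $\ell = \sqrt t$ has elapsed, an adversarial tree rooted near the start of it has, except with probability $\le e^{-c\ell}$ (again by Lemma~\ref{theo-tail}, since $e\lambda_a < \lambda_h$ gives a linear deficit of order $(\lambda_h - e\lambda_a)\ell$), fallen behind the honest chain by an amount that is at least $\Theta(\ell)$ and which it can never recover. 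So conditionally on a "good" event of probability $\ge 1 - e^{-c\ell}$ at the end of each sub-block, the situation effectively regenerates: within the next sub-block there is at least probability $p$ (by Lemma~\ref{lem:infinite_E}, applied to the honest blocks in that sub-block, of which there are $\Theta(\ell)$ with high probability, so in fact the per-sub-block success probability can be boosted to $1 - (1-p)^{\Theta(\ell)}$) of seeing an adversary-proof convergence event whose validity is not destroyed by older trees. Multiplying the failure probabilities over the $\sqrt t$ sub-blocks yields a bound of the form $(\text{const})^{\sqrt t}$, i.e. $A_2 \exp(-a_2 \sqrt t)$, after also absorbing the $O(\sqrt t)$ terms $e^{-c\ell} = e^{-c\sqrt t}$ coming from the "good" conditioning events and the Poisson fluctuations in the number of honest blocks per sub-block.

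Concretely, the steps in order would be: (1) write $q[s,s+t] \le P(\text{no }F_j\text{ occurs in each of the }N = \lfloor\sqrt t\rfloor\text{ sub-intervals }J_1,\dots,J_N\text{ of length }\ell=\sqrt t)$; (2) define, for each sub-interval $J_m$, a "localized" convergence event $\tilde F_m$ that depends only on honest blocks and adversarial trees rooted within $J_{m-1}\cup J_m$, together with a "no old tree interferes" event $G_m$ controlled via Lemma~\ref{theo-tail} with failure probability $\le C e^{-c\ell}$; (3) show $\tilde F_m \cap G_m$ implies an actual adversary-proof convergence $F_j$ for some $\tau_j \in J_m$, using the prefix argument of Lemma~\ref{lem:regen}; (4) lower-bound $P(\tilde F_m \mid \mathcal{F}_{J_{m-1}}) \ge p' > 0$ uniformly, mimicking the proof of Lemma~\ref{lem:infinite_E} but on an interval of fixed length proportional to $\ell$, and note these localized events across even-indexed (or suitably spaced) sub-intervals can be made conditionally independent given the past; (5) combine via the chain rule: $q[s,s+t] \le \prod_{m}\big(1 - p' + Ce^{-c\ell}\big) + (\text{sum of }Ce^{-c\ell}\text{ terms}) \le (1-p'/2)^{N/2} + N C e^{-c\ell}$, which is $\le A_2 e^{-a_2\sqrt t}$ for suitable constants.

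The main obstacle, as I see it, is step (2)–(4): making precise the "regeneration" despite the long-range dependence coming from adversarial trees rooted in the distant past (even before time $s$). One must quantify how quickly such an old tree, conditioned to have survived (i.e. not yet fallen hopelessly behind) up to the start of a sub-interval, becomes harmless --- which is exactly a statement about the residual-gap process $A_h(t)-A_h(\tau_i)-D_i(t-\tau_i)$ having a positive drift $\lambda_h - e\lambda_a$ and good tail bounds (Lemma~\ref{theo-tail}), but one has to handle the supremum over all past roots $i$ simultaneously via a union bound that is summable. The choice of sub-interval length $\ell \sim \sqrt t$ is precisely what balances the two error sources --- the $(1-p')^{t/\ell}$ "no convergence in any block" term, which wants $\ell$ small, against the $(t/\ell)\cdot e^{-c\ell}$ "old tree interference" term, which wants $\ell$ large --- and optimizing $e^{-c\ell} \approx e^{-(t/\ell)\log(1/(1-p'))}$ gives $\ell \asymp \sqrt t$ and the stretched-exponential rate.
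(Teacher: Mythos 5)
Your proposal is correct and follows essentially the same route as the paper's proof: a decomposition of $[s,s+t]$ into $\sqrt{t}$ blocks of length $\sqrt{t}$, a localized convergence event per block whose failure probability is bounded away from $1$ via Lemma~\ref{lem:infinite_E}, independence across suitably spaced blocks (the paper uses every third sub-interval where you invoke a chain-rule/conditional-independence argument), and exponential control of long-range catch-ups by trees rooted far in the past or catching up far in the future (the paper's events $B$ and $\tilde B$, playing the role of your $G_m$), with the $\sqrt{t}$ scale arising from exactly the trade-off you describe. The only cosmetic difference is that the paper bounds each block's failure probability simply by $P(\mbox{no arrival})+1-p$ rather than boosting it as you suggest, which is unnecessary for the stated stretched-exponential bound.
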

\noindent The bound in \eqref{eq-satofer1} is not optimal, see Remark \ref{rem-1} below.
\begin{proof}
Define $R_j = \tau_{j+1}- \tau_j$, and let 
\begin{equation}
    B_{ik} = \mbox{event that $D_i(\sum_{m = i}^{k-1} R_m) \; \ge \;  (k - i-1)
    $}.
\end{equation}
(Notation as in \eqref{eq-Bik}.)
Note that from Lemma~\ref{theo-tail} we have 
\begin{eqnarray}
\label{eq-PBik}
  P(B_{ik})
    &\le & P\left (B_{ik}|\sum_{m = i}^{k-1} R_m < (k-i-1) \frac{\lambda_h+\lambda_a e}{2\lambda_a e}\frac{1}{\lambda_h} \right)\nonumber\\
    &&\qquad \qquad + P\left(\sum_{m = i}^{k-1} R_m \geq (k-i-1) \frac{\lambda_h+\lambda_a e}{2\lambda_a e}\frac{1}{\lambda_h} \right) \nonumber\\
    & \le &  e^{-\frac{\lambda_h-\lambda_a e}{2\lambda_h}(k-i-1)} + A_1 e^{-\alpha_1 (k-i-1)}
\end{eqnarray}
for some positive  constants $A_1, \alpha_1$ independent of $k,i$. The first term in the last inequality follows from (\ref{eq:Ofertheo}), and the second term follows from the fact that $(\lambda_h+\lambda_a e)/(2\lambda_a e) > 1$ and the $R_i$'s are iid exponential random variables  of mean $1/\lambda_h$. 
Then 
\begin{equation}
    F_j^c = \bigcup_{(i,k): i < j, k >j} B_{ik}.
\end{equation}
Divide $[s,s+t]$ into $\sqrt{t}$ sub-intervals of length $\sqrt{t}$, so that the $r$ th sub-interval is:
$$\J_r : = [s+  (r-1) \sqrt{t}, s+ r\sqrt{t}].$$

Now look at the first, fourth, seventh, etc sub-intervals, i.e. all the $r = 1 \mod 3$ sub-intervals. Introduce the event that in the $\ell$-th $1 \mod 3$th sub-interval,
an adversarial tree that is rooted at a honest block arriving in that sub-interval or in the previous ($0 \mod 3$) sub-interval catches up with a honest block in that sub-interval or in the next ($2 \mod 3$) sub-interval. 
Formally,
$$C_{\ell}=\bigcap_{j: \tau_j \in \J_{3\ell+1}}
\bigcup_{(i,k): \tau_j - \sqrt{t} < \tau_i < \tau_j, \tau_j < \tau_k < \tau_j +\sqrt{t} } B_{ik}.$$
Note that for distinct $\ell$, the events $C_\ell$'s  are independent. Also, we have
\begin{equation}
    \label{eq-qq2}
    P(C_{\ell})\leq P(\mbox{no arrival in $\J_{3\ell+1}$}) + 1-p < 1
    \end{equation}
for large enough $t$. 

Introduce the atypical events:
\begin{eqnarray}
    B &=& \bigcup_{(i,k): \tau_i \in [s,s+t] \mbox{~or~} \tau_k \in [s,s+t], i <k, \tau_k - \tau_i >  \sqrt{t}} B_{ik} \;, \text{ and }\\
    \tilde{B} &=& 
    \bigcup_{(i,k):\tau_i<s,s+t<\tau_k} B_{ik}\;.
\end{eqnarray}
The events $B$ and $\tilde{B}$ are the events  that an adversarial tree catches up with an honest block far ahead. 
Consider also the events
\begin{eqnarray}
D_1&=&\{\# \{i: \tau_i\in (s-\sqrt{t},s+t+\sqrt{t})\} >2\lambda_h t\} \label{D1}\\
D_2&=& \{ \exists i,k: \tau_i  \in (s,s+t), (k-i)<\sqrt{t}/2\lambda_h,
\tau_k-\tau_i>\sqrt{t}\} \label{D2}\\
D_3&=& \{ \exists i,k: \tau_k \in (s,s+t), (k-i)<\sqrt{t}/2\lambda_h,
\tau_k-\tau_i>\sqrt{t}\} \label{D3}
\end{eqnarray}
In words, $D_1$ is the event of atypically many honest arrivals in $(s-\sqrt{t},s+t+\sqrt{t})$ while $D_2$ and $D_3$ are the events that there exists an interval 
of length $\sqrt{t}$ with at least one endpoint 
inside $(s,s+t)$ with atypically small number of 
arrivals. Since the number of honest arrivals in $(s,s+t)$ is Poisson with parameter $\lambda_h t$, we have from the memoryless property of the Poisson process that  $P(D_1)\leq e^{-c_0t}$ for some constant $c_0=c_0(\lambda_a,\lambda_h)>0$.  On the other hand, using the memoryless property and a union bound, and decreasing $c_0$ if needed, we have
that $P(D_2)\leq e^{-c_0 \sqrt{t}}$. Similarly, using time reversal, $P(D_3)\leq e^{-c_0\sqrt{t}}$. Therefore,
again using the memoryless property of the Poisson process,
\begin{eqnarray}
P(B)&\leq & P(D_1\cup D_2\cup D_3)+ P(B\cap D_1^c\cap D_2^c\cap D_3^c)\nonumber\\
&\leq & e^{-c_0 t} + 2e^{-c_0\sqrt{t}}+\sum_{i=1}^{2\lambda_h t} \sum_{k: k-i>\sqrt{t}/2\lambda_h}
P(B_{ik})\leq c_1e^{-c_2\sqrt{t}},
\label{eq-thursday8}
\end{eqnarray}
where $c_1,c_2>0$ are constants that may depend on
$\lambda_a,\lambda_h$ and the last inequality is due to \eqref{eq-PBik}.
We next claim that there exists a constant $\alpha>0$ so that,
for all $t$ large,
\begin{equation}
    \label{eq-thursday1}
    P(\tilde B)\leq e^{- \alpha t}.
    \end{equation}
    Indeed, we have that
    \begin{eqnarray}
    P(\tilde B)&=&
    \sum_{i<k} \int_0^s P(\tau_i\in d\theta)
    P(B_{ik}, \tau_k-\tau_i>s+t-\theta)\nonumber \\
    &\leq & \sum_i \int_0^s 
    P(\tau_i\in d\theta) \sum_{k:k>i} P(B_{i,k})^{1/2}
    P(\tau_k-\tau_i>s+t-\theta)^{1/2}.
    \label{eq-thursday2}
    \end{eqnarray}
    By \eqref{eq-PBik}, there exists $c_3>0$ so that
    \begin{equation}
        \label{eq-thursday1bis}
    P(B_{i,k})\leq e^{-c_3(k-i-1)},
    \end{equation}
    while the tails of the
    Poisson distribution yield the existence of constants
    $c,c'>0$ so that
    \begin{equation}
        \label{eq-thursday3}
        P(\tau_k-\tau_i>s+t-\theta)=
        P(\tau_{k-i}>s+t-\theta)
        \leq \left\{
        \begin{array}{ll}
        1,& (k-i)>c(s+t-\theta)\\
        e^{-c'(s+t-\theta)},& (k-i)\leq c(s+t-\theta).
        \end{array}\right.
    \end{equation}
    Combining \eqref{eq-thursday1bis} with \eqref{eq-thursday3}
   yields that there exists a constant $\alpha>0$ so that
    \begin{equation}
        \label{eq-thursday4}
        \sum_{k: k>i} P(B_{i,k})^{1/2}P(\tau_k-\tau_i>s+t-\theta)^{1/2}
        \leq e^{-2\alpha(t+s-\theta)}.
    \end{equation}
    Substituting this bound in \eqref{eq-thursday2} and using that $\sum_i P(\tau_i\in d\theta)=d\theta$ gives
    \begin{eqnarray}
    \label{eq-thursday5}
    P(\tilde B)&\leq &
    \sum_{i} \int_0^s 
    P(\tau_i\in d\theta) e^{-2\alpha (t+s-\theta)}\nonumber\\
    &\leq& \int_0^s e^{-2\alpha (t+s-\theta)} d\theta
    \leq e^{-\alpha t},
    \end{eqnarray}
    for $t$ large, proving \eqref{eq-thursday1}.
    
    Continuing with the proof of the lemma, we have: 
\begin{eqnarray}
\label{eq-qqq2}q[s,s+t] &\leq& P(B)+P(\tilde B)+P(\bigcap_{\ell=0}^{\sqrt{t}/3} C_{\ell})
=
P(B)+ P(\tilde B)+(P(C_{\ell}))^{\sqrt{t}/3}\nonumber \\
&\leq& c_1 e^{-c_2\sqrt{t}}+ e^{-\alpha t} + (P(C_\ell))^{\frac{\sqrt{t}}{3}}
\end{eqnarray}
where the equality is due to independence, and in
the last inequality we used \eqref{eq-thursday8} and \eqref{eq-thursday1}. The lemma follows from
\eqref{eq-qq2}.
\end{proof}
\begin{remark}
\label{rem-1}
Iterating the proof above  (taking  longer 
blocks and using the bound of Lemma \ref{lem:time-zero-delay} to improve on $P(C_\ell)$ in \eqref{eq-qq2} by replacing $p$ with the bound from 
\eqref{eq-satofer1})
shows that \eqref{eq-satofer1} can be improved to the statement
that for any $\theta>1$ there exist constants $a_\theta,A_\theta$ so that, for any $s,t>0$,
\begin{equation}
\label{eq-satofer2}
q[s,s+t] \;\; \leq \;\; A_\theta\exp(-a_\theta t^{1/\theta}).
\end{equation}
\end{remark}

\subsection{Nonzero network delay: $\Delta > 0$}

We will now extend the analysis in the above subsection to the case of non-zero delay.

In the case of zero network delay, the power of the adversary is in the adversarial blocks that it can generate by winning lotteries. We show that if $\beta < 1/(1+e)$, regardless of the adversarial strategy, there will be adversary-proof convergence events happening in the system once in a while. When the network delay is non-zero, the adversary has the additional power to delay delivery of honest blocks to create split view among the honest nodes. In the context of the security analysis of Nakamoto's PoW protocol, the limit of this power is quantified by the notion of {\em uniquely successful} round in \cite{backbone} in the lock-step synchronous round-by-round model, and extended to the notion of {\em convergence opportunity} in \cite{pss16} in the semi-synchronous model. (This notion is further used in \cite{ren} to provide a simpler security proof for Nakamoto's protocol.)  They show that during these convergence opportunities, the adversary cannot create split view between honest nodes, because only one honest block is generated during a sufficiently long time interval. We combine our notion of adversary-proof convergence event with the notion of convergence opportunity to define a stronger notion of adversary-proof convergence event for the non-zero delay case.

\subsubsection{Random processes}

We consider the network model in \S \ref{sec:model} with bounded communication delay, where all broadcast blocks are delivered by the adversary with maximum delay $\Delta$.
With this network model, the evolution of the blockchain can be modeled as  a random process $\{(\T(t), \C(t), \T^{(p)}(t), \C^{(p)}(t): t \ge 0, 1 \leq p \leq n\}$, where $n$ is the number of honest nodes, $\T(t)$ is a tree, $\T^{(p)}(t)$ is an induced sub-tree of $\T(t)$ in the view of the $p$-th honest node at time $t$, and $\C^{(p)}(t)$ is the longest chain in the $p$-th tree.  Then let $\C(t)$ be the common prefix of all the local honest chains $\C^{(p)}(t)$ for $1 \leq p \leq n$.  The tree $\T(t)$ is interpreted as consisting of all the blocks that are generated by both the adversary and the honest nodes up until time $t$, including blocks that are kept in private by the adversary. The chain $\C^{(p)}(t)$ is interpreted as the longest chain in the local view of the $p$-th honest node at time $t$. The process is described as follows.

\begin{enumerate}
    \item $\T(0)= \T^{(p)}(0) = \C(0) = \C^{(p)}(0), 1\leq p\leq n$ is a single root block (the genesis block).

    \item $\T(t)$ evolves as follows: there are independent Poisson processes of rate $\lambda_a$ at each block of $\T(t)$ (we call them the {\em adversary} processes), plus an additional independent Poisson process of rate $\lambda^{(p)}_h$ (we call it the {\em honest} process) arriving at the last block of the chain $\C^{(p)}(t)$ (the tip of the local longest chain) for each $1 \leq p \leq n$, with $\sum_{p=1}^n \lambda^{(p)}_h = \lambda_h$. A new block is added to the tree at a certain block when an arrival event occurs at that node. An arrival from the honest process is called an honest block. An arrival from the adversary process is called an adversarial block.

    \item The sub-tree $\T^{(p)}(t)$ for each $1 \leq p \leq n$ is updated in three possible ways : 1) an additional honest block can be added to $\T^{(p)}(t)$ by the adversary if an arrival event of the honest process with the $p$-th honest node occurs; 2) a block (whether is honest or adversarial) must be added to $\C^{(p)}(t)$ if it appears in $\T^{(q)}$ for some $q \neq p$ at time $t-\Delta$; 3) the adversary can replace $\T^{(p)}(t^-)$ by another sub-tree $\T^{(p)}(t)$ from $\T(t)$ as long as $\T^{(p)}(t^-)$ is an induced subgraph of the new tree $\T^{(p)}(t)$. The adversary's decision has to be based on the current state of the process. 
    
    \item $\C^{(p)}(t)$ is updated as follows for each $1\leq p \leq n$: $\C^{(p)}(t)$ is the longest chain in the tree $\T^{(p)}(t)$ starting from the root block at time $t$. If there are more than one longest chain, tie breaking is in favor of the adversary.
    
    \item $\C(t)$ is updated as follows: $\C(t)$ is the common prefix of all the local honest chains $\C^{(p)}(t)$ for $1 \leq p \leq n$ at time $t$.

\end{enumerate}

The adversary can change where the honest nodes act by broadcasting an equal or longer length chain using the blocks it has succeeded in proposing. Since the adversary can change where the honest nodes can propose even with an equal length new chain, that means the adversary is given the ability to choose where the honest nodes propose when there are more than one longest public chain. Also the adversary has the ability to have one message delivered to honest nodes at different time (but all within $\Delta$ time).

\subsubsection{Adversary-proof convergence}

We first define several basic random variables and random processes which are constituents of the processes $\T(\cdot)$ and $\C(\cdot)$. We make use of the terminology in \cite{ren}.

\begin{enumerate}
    \item $\tau_i$ = generation time of the $i$-th honest block; $\tau_0 = 0$ is the mining time of the genesis block, $\tau_{i+1} - \tau_i$ is exponentially distributed with mean $1/\lambda_h$, i.i.d. across all $i$'s. 
    Suppose an honest block $B$ is generated at time $\tau_j$.  If $\tau_j - \tau_{j-1} > \Delta$, then we call $B$ is a {\it non-tailgater} (otherwise, $B$ is a {\it tailgater}). If $\tau_j - \tau_{j-1} > \Delta$ and $\tau_{j+1} - \tau_j > \Delta$, then we call $B$ is a {\it loner}. Note that non-tailgaters have different depths and a loner is the only honest block at its depth.
    \item $H_h(t)$ = number of non-tailgaters generated from time $0$ to $t$. 
    \item $L^{(p)}(t)$ is the length of $\C^{(p)}(t)$ for each $1 \leq p \leq n$. $L^{(p)}(0) = 0$. Note that since every non-tailgater appears at different depth in the block tree, it follows that for all $t > s+\Delta$,  
    \begin{equation} L^{(p)}(t)- L^{(p)}(s) \ge H_h(t-\Delta) - H_h(s).
    \label{eq:lc_growth}
    \end{equation}
    \item $\tt_i = \{\tt_i(s) : s \ge 0\}$ is the random tree process generated by the adversary starting from the $i$-th honest block. $\tt_i(0)$ consists of the $i$-th honest block and $\tt_i(s)$ consists of all adversarial blocks grown on the $i$-th honest block from time $\tau_i$ to $\tau_i + s$. Note that the $\tt_i$'s are i.i.d. copies of the adversarial tree $\T^a$.
    \item $D_i(s)$ is the depth of the adversarial tree $\tt_i(s)$. 
\end{enumerate}

Let us define the events:
\begin{equation}
    \hat{E}_{ij} = \mbox{event that $D_i(t - \tau_i)  < H_h(t-\Delta) - H_h(\tau_i)$ for all $t > \tau_j+\Delta$},
\end{equation}

\begin{equation}
    \hat{F}_j = \bigcap_{0 \leq i < j} \hat{E}_{ij},
\end{equation}

\begin{equation}
    U_j = \mbox{event that $j$-th honest block is a loner} = \{\tau_j - \tau_{j-1} > \Delta, \tau_{j+1} - \tau_j > \Delta\},
\end{equation}
and
\begin{equation}
    \hat{U}_j = \hat{F}_j \cap U_j.
\end{equation}

And we have the following lemma, which justifies calling the event $\hat{U}_j$ adversary-proof convergence event for the non-zero delay case. . 

\begin{lemma}
\label{lem:loner}
If $\hat{U}_j$ occurs, then the $j$-th honest block is contained in any future chain $\C(t)$ (i.e. in all local chains $\C^{(p)}(t), 1\leq p\leq n$), $t > \tau_j+\Delta$.
\end{lemma}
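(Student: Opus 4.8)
The plan is to follow the contradiction argument of Lemma~\ref{lem:regen}, but --- because honest nodes may now have split views --- to bookkeep the \emph{depths} $d_k$ of the honest blocks rather than a single chain length, and to exploit the loner hypothesis $U_j$ to pin the $j$-th honest block $B_j$ to one unambiguous depth $d_j$. Suppose $\hat{U}_j=\hat{F}_j\cap U_j$ occurs and, towards a contradiction, that $B_j\notin\C^{(p)}(t)$ for some honest node $p$ and some $t>\tau_j+\Delta$; by right-continuity of the chain processes we may take $t^*$ to be the infimum of all such times and fix $p^*$ with $B_j\notin\C^{(p^*)}(t^*)$, so that $B_j$ lies on every local chain $\C^{(p)}(s)$ for every $s\in(\tau_j+\Delta,t^*)$.

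Let $b_h$ be the last honest block on $\C^{(p^*)}(t^*)$ (it exists since the genesis is honest); then $b_h\ne B_j$ and $B_j$ is not an ancestor of $b_h$, so $b_h=B_i$ for some $i\ne j$. The case $i>j$ is handled exactly as in Lemma~\ref{lem:regen}: $B_i$ was appended by some honest node $q$ to its then-current longest chain, so the prefix of $\C^{(p^*)}(t^*)$ up to $B_i$ is $\C^{(q)}(\tau_i^-)\cup\{B_i\}$, which omits $B_j$; since $U_j$ forces $\tau_i\ge\tau_{j+1}>\tau_j+\Delta$ and since $\tau_i\le t^*$, the chain $\C^{(q)}(s)=\C^{(q)}(\tau_i^-)$ omits $B_j$ for $s$ slightly below $\tau_i$, contradicting the minimality of $t^*$.

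The main case is $i<j$. As in Lemma~\ref{lem:regen}, everything on $\C^{(p^*)}(t^*)$ after $B_i$ lies in the adversarial tree $\tt_i(t^*-\tau_i)$, whence $L^{(p^*)}(t^*)\le d_i+D_i(t^*-\tau_i)$; since $\hat{F}_j$ occurs and $t^*>\tau_j+\Delta$, the defining inequality of $\hat{E}_{ij}$ gives $D_i(t^*-\tau_i)<H_h(t^*-\Delta)-H_h(\tau_i)$, so $L^{(p^*)}(t^*)<d_i+H_h(t^*-\Delta)-H_h(\tau_i)$. For a matching lower bound I would combine three ingredients: (i) once a node has $B_j$ in its tree its longest chain has length at least $d_j$, and local longest-chain lengths are non-decreasing, so $L^{(p)}(\tau_j+\Delta)\ge d_j$ for all $p$; (ii) the chain-growth bound \eqref{eq:lc_growth} applied from time $\tau_j+\Delta$; and (iii) the loner property, which forbids any honest arrival in $(\tau_j,\tau_j+\Delta]$ and hence gives $H_h(\tau_j+\Delta)=H_h(\tau_j)$. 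These yield $L^{(p^*)}(t^*)\ge d_j+H_h(t^*-\Delta)-H_h(\tau_j)$ (when $\tau_j+\Delta<t^*\le\tau_j+2\Delta$ the added increment is non-positive and one uses (i) directly). Comparing the two bounds collapses to $d_j-d_i<H_h(\tau_j)-H_h(\tau_i)$, i.e.\ $d_j<d_i+m$ where $m$ is the number of non-tailgaters mined in $(\tau_i,\tau_j]$. This contradicts the fact that non-tailgaters appear at strictly increasing depths (see \cite{ren}): telescoping that fact from $B_i$ up through the first non-tailgater after $\tau_i$ to $B_j$ (which is itself the last non-tailgater at time $\le\tau_j$) gives $d_j\ge d_i+m$.

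I expect the only genuinely new ingredient --- beyond replaying Lemma~\ref{lem:regen} --- to be the use of the loner hypothesis together with strict depth-monotonicity of non-tailgaters, which is precisely what guarantees $B_j$ cannot be side-stepped by another honest block sitting at the same depth on a competing fork. The fiddliest part of the write-up should be the $\Delta$-bookkeeping: aligning the lag in \eqref{eq:lc_growth}, the $t-\Delta$ shift inside the definition of $\hat{E}_{ij}$, and the $\Delta$ delay in $B_j$'s delivery, and checking the small-$t^*$ sub-case $\tau_j+\Delta<t^*\le\tau_j+2\Delta$ separately.
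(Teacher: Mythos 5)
Your proposal is correct, and its skeleton is the same as the paper's: argue by contradiction at the first violation time $t^*>\tau_j+\Delta$, look at the last honest block $b_h$ on the offending local chain, dispose of the case where $b_h$ is mined after $\tau_j+\Delta$ by minimality of $t^*$, use the loner event $U_j$ to force $b_h=B_i$ with $i<j$ in the remaining case, and then play the event $\hat{E}_{ij}$ against the chain-growth bound \eqref{eq:lc_growth}. Where you diverge is in how the final inequality is closed. The paper transplants the zero-delay computation of Lemma~\ref{lem:regen} verbatim: it anchors chain growth at $\tau_i$ and contradicts $L^{(p)}(t^*)\le L^{(p)}(\tau_i)+D_i(t^*-\tau_i)$, which tacitly identifies the depth of $B_i$ with $L^{(p)}(\tau_i)$ --- an identification that is exact when $\Delta=0$ but, with split views, requires the extra observation that node $p$'s chain at $\tau_i$ is at least as long as the chain $B_i$ extends. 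You instead bookkeep the actual depths $d_i,d_j$, anchor the lower bound at $\tau_j+\Delta$ via $L^{(p)}(\tau_j+\Delta)\ge d_j$ together with \eqref{eq:lc_growth} and $H_h(\tau_j+\Delta)=H_h(\tau_j)$ (the loner property), and close with the strict depth-monotonicity of non-tailgaters, $d_j\ge d_i+\bigl(H_h(\tau_j)-H_h(\tau_i)\bigr)$. This costs you the small sub-case $\tau_j+\Delta<t^*\le\tau_j+2\Delta$ (which the paper avoids because anchoring at $\tau_i$ automatically gives $t^*>\tau_i+\Delta$), but it buys an accounting that never needs to compare chain lengths of different honest nodes at the same instant; in that sense your closing step is the more careful of the two and makes explicit a point the paper's proof glosses over. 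The remaining informalities (existence/right-continuity of the minimal $t^*$, the definition of $\tt_i$ as the purely adversarial subtree rooted at $B_i$) are handled at the same level of rigor as in the paper.
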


\begin{proof}
We will argue by contradiction. Suppose $\hat{U}_j$ occurs and let $t^* > \tau_j+\Delta$ be the smallest $t$ such that the $j$-th honest block is not contained in $\C(t)$. Let $b_h$ be the last honest block on $\C^{(p)}(t^*)$ (which must exist, because the genesis block is by definition honest.) If $b_h$ is mined at some time $t_1> \tau_j+\Delta$, then $\C^{(p)}(t_1^-)$ is the prefix of $\C^{(p)}(t^*)$ before block $b_h$, and does not contain the $j$-th honest block, contradicting the minimality of $t^*$. So $b_h$ must be mined before time $\tau_j+\Delta$. And since the $j$-th honest block is a loner, we further know that $b_h$ must be mined before time $\tau_j$, hence $b_h$ is the $i$-th honest block for some $i < j$. The part of $\C^{(p)}(t^*)$ after block $b_h$ must lie entirely in the adversarial tree $\tt_i(t^*-\tau_i)$ rooted at $b_h$. Hence, we have 
\begin{equation}
    D_i(t^*-\tau_i) < H_h(t^*-\Delta)-H_h(\tau_i) \le L^{(p)}(t^*) - L^{(p)}(\tau_i),
\end{equation}
where the first inequality follows from the fact that $\hat{F}_j$ holds, and the second inequality follows from the longest chain policy (eqn.~(\ref{eq:lc_growth})). From this we obtain that 
\begin{equation}
    L^{(p)}(\tau_i) + D_i(t^*-\tau_i) < L^{(p)}(t^*)
\end{equation}
which is a contradiction  since $L^{(p)}(t^*) \le L^{(p)}(\tau_i) + D_i(t^*-\tau_i)$.
\end{proof}

Note that, Lemma~\ref{lem:loner} implies that if $\hat{U}_j$ occurs, then the entire chain leading to the $j$-th honest block from the genesis is stabilized after the $j$-th honest block is seen by all the honest nodes.

\subsubsection{Occurrence of adversary-proof convergence}

\begin{lemma}
\label{lem:infinite_many_F}
If $\lambda_a < g/e \cdot \lambda_h $, i.e. $\beta < g/(g+e)$ with $g = e^{- \lambda_h \Delta}$, then there is a $p > 0$ such that  $P(\hat{U}_j) \ge p$ for all $j$. Also, with probability $1$, the event $\hat{U}_j$ occurs for infinitely many $j$'s.
\end{lemma}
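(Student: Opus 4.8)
The plan is to run the two--step scheme used for Lemma~\ref{lem:infinite_E}, now carrying the extra loner requirement and replacing the Poisson honest chain $A_h$ by the non-tailgater counting process $H_h$, which grows at asymptotic rate $g\lambda_h$. Fix once and for all a $\rho$ with $e\lambda_a<\rho<g\lambda_h$ (possible precisely because $\lambda_a<g\lambda_h/e$), and record two ingredients. First, for the pure adversarial tree Lemma~\ref{prop-1} gives $D(s)/s\to e\lambda_a<\rho$, so $V:=\sup_{s\ge0}(D(s)-\rho s)<\infty$ a.s., and summing the tail bound \eqref{eq:Ofertheo} over integer times (using monotonicity of $D$) shows $P(V\ge x)\le Ce^{-cx}$. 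Second, by the non-tailgater growth estimate of \cite{ren}, for any $\varepsilon>0$ one has $H_h(t)-H_h(s)\ge(1-\varepsilon)g\lambda_h(t-s)$ for all large $t-s$ outside an event of probability $e^{-\Omega(\varepsilon^2(t-s))}$; choosing $\varepsilon$ with $(1-\varepsilon)g\lambda_h>\rho$, Borel--Cantelli shows the non-tailgater chain a.s.\ eventually grows at rate at least $\rho$, and that it stays above any line $\rho u-w$ forever with probability at least $1-Ce^{-cw}$.

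\smallskip\noindent\textbf{Step 1: a uniform lower bound $P(\hat U_j)\ge p>0$.} For $j\ge1$ the event $U_j$ involves only two independent $\mathrm{Exp}(\lambda_h)$ gaps, so $P(U_j)=e^{-2\lambda_h\Delta}=g^2$, and $U_j$ is measurable w.r.t.\ the honest arrival sequence $(\tau_i)$; conditionally on $(\tau_i)$ the trees $\tt_0,\tt_1,\dots$ are i.i.d.\ and independent of $(\tau_i)$, while $\hat F_j$ is a function of $(\tau_i)$ and $\tt_0,\dots,\tt_{j-1}$, whence $P(\hat U_j)=E[\indicator_{U_j}\,P(\hat F_j\mid(\tau_i))]$. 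Split $\hat F_j^c=\bigcup_{i<j}\hat E_{ij}^c$ at a threshold $K$ chosen below. For the \emph{old} trees $i<j-K$: plugging $D_i(u)\le\rho u+V_i$ into $\hat E_{ij}^c$ and using the $H_h$ estimates above, on a typical-honest event $P(\hat E_{ij}^c\mid(\tau_i))\le Ce^{-c'(j-i)}$, so $\sum_{i<j-K}P(\hat E_{ij}^c\mid(\tau_i))<\tfrac12$ once $K$ is large. For the $K$ \emph{recent} trees $\tt_{j-K},\dots,\tt_{j-1}$ a union bound is too lossy near the threshold, so I instead force $\bigcap_{j-K\le i<j}\hat E_{ij}$ by intersecting several independent, $j$-insensitive, positive-probability events: (i) the honest gaps around $\tau_j$ all lie in $(\Delta,\Delta+\eta)$ with $\eta$ small, blocks $j-K+1,\dots,j$ are non-tailgaters, and a short non-tailgater ``surge'' follows $\tau_j$, which — since $\rho\Delta<g\lambda_h\Delta=\lambda_h\Delta\,e^{-\lambda_h\Delta}\le 1/e<1$, hence $\rho(\Delta+\eta)<1$ — makes $H_h$ overtake each recent tree by a prescribed margin $M$; (ii) each of these $K$ trees has depth $0$ up to a fixed elapsed time $T^\star$ (conditional probability $e^{-\lambda_a T^\star K}$), covering the initial window before the surge's lead is built; (iii) the non-tailgater chain stays above the line $\rho u-C_0$ on $[\tau_j,\infty)$ for a fixed $C_0$ (conditional probability bounded below, by the strong Markov property of the honest renewal at the non-tailgater time $\tau_j$). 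A direct computation then yields $D_i(t-\tau_i)<H_h(t-\Delta)-H_h(\tau_i)$ for all $t>\tau_j+\Delta$ and all $j-K\le i<j$ on the intersection of these events, i.e.\ $\hat F_j$; taking $M$ large closes the computation. Multiplying the uniformly positive conditional probability of the adversarial events by the uniformly positive (for $j\ge K$) probability of the honest-side events, and noting that $P(\text{typical honest})\to1$, gives $P(\hat U_j)\ge p>0$ for all large $j$; the finitely many small $j$ follow from the same, easier, argument.

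\smallskip\noindent\textbf{Step 2: almost sure recurrence.} By Lemma~\ref{lem:loner}, once $\hat U_j$ occurs the chain through the $j$-th honest block is frozen, and $\hat F_j$ renders every adversarial tree rooted before block $j$ permanently unable to reach the honest chain again; hence, conditionally on the history up to $\tau_j$ on the event $\hat U_j$, the continuation is a fresh copy of the dynamics, so the indices of occurrence of $\hat U$ form a delayed renewal process with positive-probability increments. Equivalently, the proof of Lemma~\ref{lem:time-zero-delay} adapts with only notational changes to the loner/non-tailgater setting — the key inequality becoming $P(C_\ell)\le P(\text{no loner in that sub-interval})+1-p<1$ by Step~1 — and gives $q[s,s+t]\le A\exp(-a\sqrt t)$; a Borel--Cantelli argument along intervals of polynomially growing length then forces $\hat U_j$ to occur in all but finitely many of them, i.e.\ for infinitely many $j$, a.s.

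\smallskip\noindent\textbf{Main obstacle.} The delicate point is the recent-trees part of Step~1: at $\beta$ near the threshold $g/(g+e)$ the drift $g\lambda_h-e\lambda_a$ is small, a union bound over the $O(1)$ most recent adversarial trees is useless, and one must build a genuinely positive-probability configuration — small recent trees plus a non-tailgater surge — whose head start exceeds all future fluctuations, and then verify, uniformly in $j$, that the positive drift $g\lambda_h-\rho$ together with the a.s.\ eventual linear growth of $H_h$ never permits that head start to be relinquished, simultaneously for \emph{all} $t>\tau_j+\Delta$ and \emph{all} $K$ recent trees. The network-delay features (non-tailgaters, loners, the $\Delta$-shift in $H_h(t-\Delta)$, the $\tau_j+\Delta$ delay in $\hat E_{ij}$) enter only as bounded corrections on top of the $\Delta=0$ computation behind Lemma~\ref{lem:infinite_E}.
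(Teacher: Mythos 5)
Your Step~1 is, at heart, the same idea the paper uses for positivity---force the recent adversarial trees to stay empty long enough for the non-tailgater chain to build an unassailable lead, and kill the far-away pairs $(i,k)$ with the BRW tail bound \eqref{eq:Ofertheo} plus a Chernoff bound on non-tailgater counts---but the paper's implementation (\S\ref{app:proof_delay}) is considerably simpler exactly at the point where your sketch leaves the work undone. The paper writes $P(\hat{V}_0)=g^2\,P(\hat{E}_0\mid U_0)$ and conditions on the single event $G_n=\{D_m(3n)=0 \mbox{ for } -n\le m\le n\}$: on $G_n$, every ``recent'' pair $-n\le i<0<k\le n$ can only fail if $\sum_m R_m+\Delta>3n$, which is exponentially unlikely, so no gap-window $(\Delta,\Delta+\eta)$, no non-tailgater surge, no margin $M$ and no line event are needed, and every estimate is computed consistently as $P(\,\cdot\mid U_0,G_n)$. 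In your construction, the closing claim (``taking $M$ large closes the computation'') and the compatibility of the old-tree union bound with the conditioning on your honest-side events (i)--(iii)---which distort the law of the arrival gaps in a window of length about $K(\Delta+\eta)$ that the increments $H_h(t-\Delta)-H_h(\tau_i)$ appearing in $\hat{E}_{ij}$ for $i<j-K$ and $k$ near $j$ also see---are precisely the delicate, uniform-in-$j$ verifications, and they are asserted rather than proved. This part can very likely be patched, but as written it is a gap, and it is avoidable.

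The genuine error is the first half of Step~2. The event $\hat{U}_j$ is \emph{not} measurable with respect to the history up to $\tau_j$: $\hat{F}_j=\bigcap_{i<j}\hat{E}_{ij}$ constrains the entire future ($D_i(t-\tau_i)<H_h(t-\Delta)-H_h(\tau_i)$ for \emph{all} $t>\tau_j+\Delta$), so conditioning on $\hat{U}_j$ biases both the future honest arrivals and the future growth of the old trees; the continuation is not ``a fresh copy of the dynamics'' and the occurrence indices do not form a (delayed) renewal process. This is exactly why the paper does not argue by regeneration: it extends the process to a two-sided stationary version, observes that $\hat{V}_j=\hat{E}_j\cap U_j\subset\hat{U}_j$ depends on the i.i.d.\ sequence $(R_j,\mathcal{T}_j)$ in a time-invariant way, and obtains the a.s.\ infinitude (and the uniform bound, for free) from the ergodic theorem once $P(\hat{V}_0)>0$. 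Your fallback route---adapt Lemma~\ref{lem:time-zero-delay} to the loner setting using only the positivity from Step~1 and finish with Borel--Cantelli over intervals of polynomially growing length---is legitimate and not circular, since the waiting-time proof uses only the bound $P(\hat{F}_j)\ge p$ when estimating $P(C_\ell)$; but it is not ``equivalent'' to the renewal claim, and it carries the full cost of reproving Lemma~\ref{lem:time}, whereas the paper's stationarity argument disposes of the recurrence in two lines.
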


The proof of this result can be found in \S\ref{app:proof_delay}.

\subsubsection{Waiting time for adversary-proof convergence}

We have established the fact that the event $\hat{U}_j$ has  $P(\hat{U}_j) > p > 0$ for all $j$.
In analogy to the zero-delay case, we would like to get a bound on the probability that in a time interval $[s,s+t]$, there are no adversary-proof convergence events, i.e.  a bound on:
$$\tilde{q}[s,s+t] :=P(\bigcap_{j: \tau_j \in [s,s+t]} \hat{U}_j^c).$$

The following lemma is analogous to Lemma \ref{lem:time-zero-delay} in the zero-delay case. 
Its proof is almost verbatim  identical, and will not be repeated here.
\begin{lemma}
\label{lem:time}
If $\lambda_a <g/e \cdot \lambda_h$, i.e. $\beta < g/(g+e)$, then
there exist constants $\bar a_2,\bar A_2$ so that for all $s,t\geq 0$,
\begin{equation}
\tilde{q}[s,s+t] \leq \bar A_2 \exp(-\bar a_2\sqrt{t}).
\end{equation}
\end{lemma}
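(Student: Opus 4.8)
The plan is to mirror, essentially line for line, the proof of Lemma~\ref{lem:time-zero-delay}, with the only changes coming from the fact that the convergence event is now $\hat U_j = \hat F_j \cap U_j$ rather than $F_j$, the drift of the adversarial tree is compared against the rate $g\lambda_h$ of non-tailgaters (rather than $\lambda_h$) via \eqref{eq:lc_growth}, and all ``catch-up'' inequalities carry an extra $\Delta$ shift which is harmless at the $\sqrt t$ scale. Concretely, I would first redefine, in analogy with $B_{ik}$, the bad event $\hat B_{ik}$ that the adversarial tree rooted at the $i$-th honest block reaches depth $\ge k-i-1$ by the time the $k$-th honest block (plus $\Delta$) has appeared; using Lemma~\ref{theo-tail} together with the tail of $\sum_{m=i}^{k-1} R_m$ and the hypothesis $\lambda_a < g\lambda_h/e$, one gets $P(\hat B_{ik}) \le C e^{-c(k-i-1)}$ exactly as in \eqref{eq-PBik}, and then $\hat F_j^c \subseteq \bigcup_{i<j<k} \hat B_{ik}$ by the contradiction argument underlying Lemma~\ref{lem:loner}.

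Next I would reproduce the interval-splitting scheme verbatim: cut $[s,s+t]$ into $\sqrt t$ blocks of length $\sqrt t$, take the $1\bmod 3$ sub-intervals $\J_{3\ell+1}$, and define $\hat C_\ell$ as the event that some adversarial tree rooted at an honest block in $\J_{3\ell} \cup \J_{3\ell+1}$ catches up with (a loner in) $\J_{3\ell+1} \cup \J_{3\ell+2}$. By the locality of the definition, the $\hat C_\ell$ for distinct $\ell$ depend on disjoint portions of the driving randomness (the gaps $R_m$ and the adversarial trees $\tt_i$) and are therefore independent, and Lemma~\ref{lem:infinite_many_F} gives $P(\hat C_\ell) \le P(\text{no honest arrival in }\J_{3\ell+1}) + (1-p) < 1$ for $t$ large. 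One then controls the long-range ``escape'' events: $\hat B$, where some $\hat B_{ik}$ with one endpoint in $[s,s+t]$ but $\tau_k - \tau_i > \sqrt t$ occurs, is bounded by $c_1 e^{-c_2\sqrt t}$ using the atypical-arrival events $D_1,D_2,D_3$ together with the exponential bound on $P(\hat B_{ik})$; and $\tilde B$, where some $\hat B_{ik}$ with $\tau_i < s$ and $\tau_k > s+t$ occurs, is bounded by $e^{-\alpha t}$ via the Cauchy–Schwarz splitting in \eqref{eq-thursday2}. Finally, $\tilde q[s,s+t] \le P(\hat B) + P(\tilde B) + \prod_{\ell} P(\hat C_\ell) \le c_1 e^{-c_2\sqrt t} + e^{-\alpha t} + (P(\hat C_\ell))^{\sqrt t/3}$, which gives the claimed $\bar A_2 e^{-\bar a_2 \sqrt t}$.

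The only genuinely new bookkeeping — and the step I would be most careful about — is verifying that the loner requirement in $\hat U_j = \hat F_j \cap U_j$ does not break the independence of the $\hat C_\ell$'s and does not worsen the single-block probability bound. On independence: $U_j$ only involves the gaps $R_{j-1}, R_j$, so for $j$ ranging over $\J_{3\ell+1}$ the events $\bigcap_j (U_j \cap \bigcup_{ik} \hat B_{ik})$ still only read the randomness in a $\sqrt t$-neighbourhood of $\J_{3\ell+1}$, hence remain independent across $\ell$ once the buffer $0\bmod 3$ / $2\bmod 3$ sub-intervals are in place (the extra $\Delta$ reach is $\ll \sqrt t$). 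On the probability bound: $P(\hat U_j) \ge p$ is exactly Lemma~\ref{lem:infinite_many_F}, so $P(\hat C_\ell) \le P(\text{no arrival}) + 1 - p$ follows as before. Everything else is genuinely identical, which is why the paper says the proof ``will not be repeated here''; in a written-out version I would simply state the three ingredient bounds ($P(\hat B_{ik})$, $P(\hat B)$, $P(\tilde B)$, $P(\hat C_\ell)$), note the substitutions $\lambda_h \rightsquigarrow g\lambda_h$, $L(\cdot) \rightsquigarrow L^{(p)}(\cdot)$ with the $\Delta$-shift via \eqref{eq:lc_growth}, $F_j \rightsquigarrow \hat U_j$, and then invoke the computation of Lemma~\ref{lem:time-zero-delay} unchanged.
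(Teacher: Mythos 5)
Your proposal is correct and follows exactly the route the paper intends: the paper's own ``proof'' of Lemma~\ref{lem:time} is simply the remark that the argument of Lemma~\ref{lem:time-zero-delay} carries over almost verbatim, and you have spelled out precisely the needed substitutions ($F_j \rightsquigarrow \hat U_j$, honest rate $\lambda_h \rightsquigarrow g\lambda_h$ via the non-tailgater count, the harmless $\Delta$-shift, $P(\hat U_j)\ge p$ from Lemma~\ref{lem:infinite_many_F}) together with the two genuinely new checks (independence of the $\hat C_\ell$ despite the loner condition, and the single-block bound). The only cosmetic gap is that bounding $P(\hat B_{ik})$ also needs the Chernoff estimate on the number of non-tailgaters among $k-i-1$ honest blocks (the event $W^\varepsilon_{ik}$ in the proof of Lemma~\ref{lem:infinite_many_F}), which is routine and consistent with your stated plan.
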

\noindent The improvement mentioned in 
Remark \ref{rem-1} applies here as well.
\subsection{Persistence and Liveness}



We will now use Lemma \ref{lem:time} to establish the liveness and persistence of the basic longest chain PoS protocol.

In the absence of any adversary, each node will
contribute to the final ledger as many blocks as
their proportion of the stake. In the presence of an adversary, the chain quality property ensures that the contribution of the adversary is bounded.
When $\beta < g/(g+e)$, we show that these properties hold with high probability, as stated in the following theorem.

Our goal is to generate a transaction ledger that satisfies
{\em persistence}  and {\em liveness} as defined in  \cite{backbone}.
Together, persistence and liveness guarantees robust transaction ledger; honest transactions will be adopted to the ledger and be immutable.  

\begin{definition}[from \cite{backbone}]
    \label{def:public_ledger}
    A protocol $\Pi$ maintains a robust public transaction ledger if it organizes the ledger as a blockchain of transactions and it satisfies the following two properties:
    \begin{itemize}
        \item (Persistence) Parameterized by $\tau \in \mathbb{R}$, if at a certain time a transaction {\sf tx} appears in a block which is mined more than $\tau$ time away from the mining time of the tip of the main chain of an honest node (such transaction will be called confirmed), then {\sf tx} will be confirmed by
        all honest nodes in the same position in the ledger.
        \item (Liveness) Parameterized by $u \in \mathbb{R}$, if a transaction {\sf tx} is received by all honest nodes for more than time $u$, then all honest nodes will contain {\sf tx} in the same place in the ledger forever.
    \end{itemize}
\end{definition}

The main result is that common prefix, chain quality, and chain growth imply that the transaction ledger satisfies persistence and liveness.
\begin{theorem}
    \label{thm:public_ledger}
    Distributed nodes running Nakamoto-PoS protocol generates a transaction ledger
    satisfying {\em persistence} (parameterized by $\tau=\sigma$) and {\em liveness} (parameterized by $u=\sigma$) in Definition~\ref{def:public_ledger} with probability at least $1-e^{-\Omega(\sqrt{\sigma})}$.
\end{theorem}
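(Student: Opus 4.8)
The plan is to deduce persistence and liveness from the adversary-proof convergence machinery established in Lemmas~\ref{lem:loner} and~\ref{lem:time}, mirroring the reduction in \cite{backbone} but replacing their counting-based common-prefix argument with our convergence-time argument. First I would fix the confirmation parameter $\sigma$ and show the \emph{common prefix} property: if $B$ is an honest block whose mining time is more than $\sigma$ before the tip of some honest node's chain, then with high probability there is an adversary-proof convergence event $\hat U_j$ at some honest block $B_j$ lying between $B$ and that tip. Indeed, by Lemma~\ref{lem:time}, the probability that no such event occurs in a window of length $\Theta(\sigma)$ is at most $\bar A_2\exp(-\bar a_2\sqrt{\sigma})=e^{-\Omega(\sqrt\sigma)}$; and by Lemma~\ref{lem:loner}, once $\hat U_j$ occurs the $j$-th honest block—and hence $B$, which precedes it—is permanently contained in every honest node's chain $\C^{(p)}(t)$ for all later $t$. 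This gives persistence with parameter $\tau=\sigma$: a confirmed transaction sits in a block that is frozen into the common ledger forever, so all honest nodes agree on its position.

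Next I would establish \emph{chain growth} and \emph{chain quality}, which together yield liveness. For chain growth, inequality~(\ref{eq:lc_growth}) together with the fact that the number of non-tailgaters in $[s,s+t]$ concentrates around $g\lambda_h t$ (Chernoff, as in \cite{ren}) shows that every honest chain grows by at least $(1-\varepsilon)g\lambda_h t$ blocks in any interval of length $t$, except with probability $e^{-\Omega(t)}$. For chain quality: between two consecutive adversary-proof convergence events $\hat U_{j}$ and $\hat U_{j'}$—which by the waiting-time bound of Lemma~\ref{lem:time} are $O(\sqrt\sigma)$-typically not too far apart—the segment of the stabilized chain strictly between $B_j$ and $B_{j'}$ must contain the honest block $B_{j'}$ itself, so over any $\sigma$-length stretch a constant fraction of blocks on the permanent chain are honest. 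Combining: a transaction broadcast to all honest nodes and not yet included will, within time $u=\sigma$, appear in an honest block (chain quality applied to the blocks appended in that window), and that block will in turn be frozen by the next convergence event (Lemma~\ref{lem:loner}), giving liveness with $u=\sigma$ and the stated failure probability $1-e^{-\Omega(\sqrt\sigma)}$.

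Finally I would do the routine bookkeeping to package these into Definition~\ref{def:public_ledger}: translate ``$\tau$ time away from the tip'' into ``sufficiently many honest blocks deep'' using the chain-growth lower bound, union-bound the three failure events (no convergence in a $\Theta(\sigma)$ window, atypically slow growth, atypically low honest fraction) each of which is $e^{-\Omega(\sqrt\sigma)}$, and observe that the $c>1$ case follows identically once Lemma~\ref{lem:time} is invoked with the growth rate $\phi_c$ of the $c$-correlated adversarial tree in place of $e$ (deferred to \S\ref{app:growth_c}). The main obstacle I expect is not any single estimate but the careful alignment of the \emph{time-parameterized} statements in Definition~\ref{def:public_ledger} with the \emph{block-index-parameterized} convergence events: one must argue that a transaction confirmed ``$\sigma$ time below the tip'' is genuinely below an adversary-proof convergence point, which requires simultaneously controlling chain growth (so that $\sigma$ time means enough blocks) and the inter-convergence gap (so that enough blocks means a convergence event has occurred), and doing so uniformly over all honest nodes and all times $t$ up to $r_{\max}$ via a union bound that keeps the exponent $\Omega(\sqrt\sigma)$ intact.
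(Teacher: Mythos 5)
Your proposal rests on the same two pillars as the paper's own proof---Lemma~\ref{lem:time} (in any window of length $\Theta(\sigma)$ an adversary-proof convergence event occurs except with probability $e^{-\Omega(\sqrt{\sigma})}$) and Lemma~\ref{lem:loner} (such an event freezes the chain up to that honest block)---so the core reduction is identical; the differences are in the bookkeeping. For persistence, the paper does not argue ``find a convergence block above $B$ and conclude $B$ is frozen''; it formulates a $\sigma$-common-prefix property and proves it by contraposition: if $\mathcal{C}_t^{\lceil\sigma}\not\preceq\mathcal{C}_{t'}$, then every honest block mined in $[t-\sigma,t]$ eventually leaves some honest node's longest chain, so $F_j^c$ holds for all such $j$, and Lemma~\ref{lem:time} bounds that intersection. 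Your direct route can be made to work, but the step ``$B$, which precedes $B_j$'' is exactly the point that needs an argument (and you should not assume $B$ honest, since Definition~\ref{def:public_ledger} allows the confirmed block to be adversarial): by Lemma~\ref{lem:loner} the convergence block $B_j$ lies on the confirming node's chain together with $B$, and since time stamps strictly increase along any valid chain while $B$'s slot precedes $\tau_j$, $B$ must be an ancestor of $B_j$; without that observation, stability of $B_j$ says nothing about $B$.

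For liveness the paper is more economical than your plan: it needs neither chain growth nor chain quality. The convergence event in $[t,t+\sigma]$ occurs at an \emph{honest} block $B_j$, and since the transaction was received by all honest nodes before $\tau_j$, it already sits in $B_j$ or in one of $B_j$'s ancestors, which Lemma~\ref{lem:loner} then freezes. Your detour lands in the same place (the honest block you need is precisely the convergence block), so nothing is wrong in substance, but the auxiliary properties are not required, and your chain-quality claim (a constant fraction of permanent blocks are honest) is both stronger than what liveness needs and not actually delivered by your inter-convergence argument, which only guarantees one honest block per gap.
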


\begin{proof}
We first prove persistence and then liveness.
\begin{lemma}[Persistence]
The public transaction ledger maintained by Nakamoto-PoS satisfies {\rm Persistence} parameterized by $\tau=\sigma$ with probability at least $1-e^{-\Omega(\sqrt{\sigma})}$.
\end{lemma}
\begin{proof}
For a chain $\mathcal{C}_t$ with the last block generated at time $t$, 
let $\mathcal{C}_t^{\lceil\sigma}$ be the
chain resulting from pruning a chain  $\C_t$ up to $\sigma$, by removing the last blocks at the end of the chain that were generated after time $t-\sigma$.
Note that $\mathcal{C}^{\lceil \sigma}$ is a prefix of $\mathcal{C}$, which we denote by $\mathcal{C}^{\lceil \sigma}\preceq \mathcal{C}$.

Let $\mathcal{C}_1$ be the main chain of an honest node $P_1$ at time $t_1$.
Suppose a transaction {\sf tx} is contained in $\mathcal{C}_1^{\lceil \sigma}$ at round $t_1$, i.e., it is confirmed by $P_1$.
Consider a main chain $\mathcal{C}_2$ of an honest node $P_2$ at some time $t_2 \geq t_1$.
The {\em $\sigma$-common prefix property} ensures that after pruning a longest chain, it is a prefix of all future longest chains in the local view of any honest node.
Formally, it follows that  $\mathcal{C}_1^{\lceil \sigma} \preceq  \mathcal{C}_2 $, which completes the proof.

We are left to show that the $\sigma$-common prefix defined below holds with a probability at least $1-e^{-\Omega{(\sqrt{\sigma}})}$. 
This is a variation of a similar property first introduced in \cite{backbone} for PoW systems. 
Ours is closer to a local definition of $k$-common prefix introduced in \cite{prism}, which works for a system running for an  unbounded time.

\begin{definition}[$\sigma$-common prefix]
We say a protocol and a corresponding confirmation rule have a {\em $\sigma$-common prefix property} at time $t$, if
in the view ${\sf VIEW}_{\Pi,\A,\Z}^{n,\cF}(\kappa)$ of a honest node $n$ at time $t$, $n$ adopts a longest chain $\mathcal{C}$, then any longest chain $\mathcal{C'}$ adopted by some honest node $n'$ at time $t'>t$ satisfies
$\mathcal{C}^{\lceil \sigma} \preceq  \mathcal{C'}$.
    \label{def:prefix}
\end{definition}

Let $\mathcal{C}_t$ denote the longest chain adopted by an honest node with the last node generated at time $t$.  
There are a number of honest nodes generated in the interval $[t-\sigma,t]$, each of which can be in $\mathcal{C}_t$, $\mathcal{C}_{t'}$, or neither. 
We partition the set of honest blocks  generated in that interval with three sets:
${\cal H}_t\triangleq\{H_j \in {\cal C}_t:\tau_j\in[t-\sigma,t]\}, {\cal H}_{t'}\triangleq\{H_j \in {\cal C}_{t'}:\tau_j\in[t-\sigma,t]\}$, and ${\cal H}_{\rm rest}\triangleq\{H_j \notin {\cal C}_t \cup {\cal C}_{t'} :\tau_j\in[t-\sigma,t]\}$, depending on which chain they belong to.

Suppose ${\cal C}_t^{\lceil \sigma} \not\preceq {\cal C}_{t'}$, and we will show that this event is unlikely.  
Under this assumption, we claim that none of the honest blocks generated in the interval $[t-\sigma,t]$ 
are {\em stable}, i.e.~for each honest block, there exists a time in the future (since the generation of that block) in which the block does not belong to the longest chain. 

Precisely, we claim that ${\cal C}_t^{\lceil \sigma} \not\preceq {\cal C}_{t'}$ implies that $F_j^c$ holds for all $j$ such that $\tau_j\in[t-\sigma,t]$. 
This in turn implies that $P({\cal C}_t^{\lceil \sigma} \not\preceq {\cal C}_{t'}) \leq P(\cap_{j:\tau_j\in[t-\sigma,t]} F_j^c)$. 
By Lemma~\ref{lem:time}, we know that 
the probability of this happening is low: $e^{-\Omega(\sqrt{\sigma})}$. 

This follows from the following facts. 
$(i)$ the honest blocks in ${\cal C}_t$ does not make it to the longest chain at time $t'$: $H_j \notin {\cal C}_{t'}$ for all  $H_j\in{\cal H}_t$. This follows from ${\cal C}_t^{\lceil \sigma} \not\preceq {\cal C}_{t'}$.
$(ii)$ the honest blocks in ${\cal C}_{t'}$ does not make it to the longest chain ${\cal C}_t$ at time $t$:  
$H_j \notin {\cal C}_{t}$ for all  $H_j\in{\cal H}_{t'}$. This also follows from ${\cal C}_t^{\lceil \sigma} \not\preceq {\cal C}_{t'}$.
$(iii)$ the rest of the honest blocks did not make it to either of the above: $H_j \notin {\cal C}_t \cup {\cal C}_{t'} $  for all $H_j\in {\cal H}_{\rm rest}$.



\end{proof}

We next prove liveness.
\begin{lemma}[Liveness]
The public transaction ledger maintained by Nakamoto-PoS satisfies {\rm Livenesss} parameterized by $u = \sigma$ with probability at least $1-e^{-\Omega(\sqrt{\sigma})}$.
\end{lemma}
\begin{proof}
Assume a transaction {\sf tx} is received by all honest nodes at time $t$, then by Lemma~\ref{lem:time}, we know that with probability at least $1-e^{-\Omega(\sqrt{\sigma})}$, there exists one honest block $B_j$ mined at time $\tau_j$ with $\tau_j \in [t,t+u]$ and event $F_j$ occurs, i.e., the block $B_j$ and its ancestor blocks will be contained in any future longest chain.
Therefore, {\sf tx} must be contained in block $B_j$ or one ancestor block of $B_j$ since $tx$ is seen by all honest nodes at time $t < \tau_j$.
In either way, $\sf tx$ is stabilized forever.
Thus, the lemma follows.
\end{proof}
\end{proof}

\subsection{Extending the analysis to general $c$}

We can repeat the analysis in the previous section for the $c$-Nakamoto-PoS for $c > 1$. Like for $c=1$, the analysis basically boils down to the problem of analyzing the race between the adversarial tree and the fictitious purely honest chain. In \S \ref{app:growth_c}, we confine our analysis to a study of that race. It turns out that under $c$-correlation randomness, the adversarial tree is developed by another branching random walking process, but with a slowing growth amplification factor $\phi_c$, such that $\phi_1 = e$ and $\phi_\infty = 1$. So the security threshold is
$$\beta_c = \frac{e^{-\lambda_h \Delta}}{e^{-\lambda_h \Delta} + \phi_c}$$
and it 
goes from 
$$ \frac{e^{-\lambda_h\Delta}}{e^{-\lambda_h\Delta} + e}$$
for $c = 1$ to 
$$ \frac{e^{-\lambda_h\Delta}}{e^{-\lambda_h \Delta} + 1}$$
for $c = \infty$. 
Under the $c$-correlation protocol, we have 
\begin{equation*}
    \phi_c = -\frac{c \theta_c^*}{\log(-\theta_c^*)+(c-1)\log(1-\theta_c^*)},
\end{equation*}
where $\theta_c^*$ is the unique negative solution of Eq.~(\ref{eqn:relation})
\begin{equation}
     -\log(-\theta_c) - (c-1)\log(1-\theta_c) = -1 +(c-1)\frac{\theta_c}{1-\theta_c}.
     \label{eqn:relation}
\end{equation}


We numerically compute the value of $\phi_c$ and $\beta^*_c$ with $\Delta = 0$ in the table below.

\begin{table}[h]
\begin{center}
\begin{tabular}{ |c|c|c|c|c|c|c|c|c|c|c| }
 \hline
 $c$      & 1 & 2 & 3 & 4 & 5 & 6 & 7 & 8 & 9 & 10\\ \hline
 $\phi_c$   & e & 2.22547  & 2.01030 & 1.88255  & 1.79545  & 1.73110 & 1.68103 & 1.64060 &1.60705 & 1.57860 \\
 \hline
 $\beta^*_c$ & $\frac{1}{1+e}$& 0.31003 & 0.33219 &0.34691 &0.35772 & 0.36615 & 0.37299 &
 0.37870 & 0.38358 & 0.38780
 \\ \hline
\end{tabular}
\end{center}
\caption{Numerically computed growth rate $\phi_c$ and stake threshold $\beta^*_c$ with $\Delta = 0$.}
\vspace{-1cm}
\end{table}

In the deployment of the Ouroboros protocol in the Cardano project\footnote{https://www.cardano.org}, each slot takes 20 seconds and each epoch is chosen to be 5 days~\cite{stutz2020stake}, that is a common randomness will be shared by 21600 blocks.
In Fig.~\ref{fig:threshold_day}, we plot the security threshold $\beta^*_c$ against $c$ up to $c=21600$ for our $c$-Nakamoto-PoS. It turns out the security threshold of $c$-Nakamoto-PoS can approach $1/2$ very closely even when $c$ is much less than $21600$, which is the current randomness update frequency in the Cardano project.

\section{Security analysis of the dynamic stake protocol} \label{sec:dynamic}

One major advantage of the Nakamoto protocol in the proof of work setting is its permissionless setting: anyone can join (leave) the system by simply contributing (extricating) computing power for the mining process.  Under the PoS setting  the stakes are used in lieu of the computing power. To support a protocol that is close to a permissionless system, we need to handle the case when the stake is  dynamically varying. Unlike the case of compute power, the entry/removal of stake has to be more carefully orchestrated:
if a change in the stake takes effect immediately after the transaction has been included in the blockchain, then this gives an opportunity for the adversary to grind on the secret key of the header (time, secret key, common source of randomness).
Concretely, once an adversary has generated a block, it can add a transaction
that moves all its stake to a new coin with a new pair of public and secret keys.
The adversary can keep drawing a new coin and simulating the next leader election, until it finds one that wins.
This is a serious concern as the adversary can potentially win all elections.

To prevent such a grinding on the coin,
we use $s$-truncation introduced in \cite{badertscher2018ouroboros,fan2018scalable}.
$s$-truncation has two components:
using stakes from ancestor blocks and a fork choice rule.
The winning probability of a leader election uses the stake computed at
an ancestor block which is $s$ blocks above
the parent block that is currently being mined on.
However, this allows an adversary to launch a long range attack, for any value of $s<\infty$. To launch a long range attack, an adversary grows a private block tree. Once it has grown for longer than $s$ blocks, then it can grind on the coin to win the election for the next block, and add the favorable transaction to the first ancestor block. As all blocks in this private tree are adversarial, the consistency of the transactions can be maintained by re-signing all intermediate blocks. This allows the adversary to win all elections after $s$ private blocks, and eventually take over the honest block chain.

The $s$-truncation longest chain rule works as follows.
When presented with a chain that forks from current longest chain, a node compares the two chains according to the following rule.
Both chains are truncated up to $s$ blocks after the forking. Whichever truncated chain was created in a shorter time (and hence denser) is chosen to be mined on. This ensures that honest block chains will be chosen over privately grown adversarial chains with long range attacks. A detailed description of the $c$-correlation, $s$-truncation, Nakamoto-PoS protocol is in \S\ref{sec:protocols}.

We extend Theorem \ref{thm:public_ledger} to show that the above $s$-truncation scheme is secure.

\begin{theorem}
   \label{thm:dynamic}
   Under the dynamic stake setting, 
   distributed nodes running Nakamoto-PoS protocol with a choice of $s=\Theta(\sigma)$ generates a transaction ledger satisfying {\em persistence} and {\em liveness} in  Definition~\ref{def:public_ledger} with probability at least $1-e^{-\Omega( \sqrt{\sigma})}$.
\end{theorem}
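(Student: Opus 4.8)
The plan is to reduce the dynamic-stake statement to the static-stake result (Theorem~\ref{thm:public_ledger}) by showing that, with overwhelming probability, the $s$-truncation fork-choice rule makes the dynamic execution \emph{indistinguishable} from a static one over the horizon relevant to confirmation. First I would set up a coupling: fix the stake distribution recorded at the genesis block and run the ``static'' process $\{(\T(t),\C(t))\}$ of \S\ref{sec: problem}--\S\ref{sec:analysis} against it. The claim is that as long as no adversarial private chain ever grows more than $s$ blocks deep before being abandoned or merged, every coin used in every honest-relevant election is the genesis coin (by the rule that a coin is usable only if it sits $s$ blocks deep on the chain being extended), so the adversary gains no grinding advantage from stake movement and the block-generation rates $(\lambda_a,\lambda_h)$ are exactly those of the static analysis. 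Thus the key event to control is $\mathcal G_T=\{$no private adversarial subtree reaches depth $s$ within the execution horizon $T=O(\mathrm{poly}(\sigma))\}$.

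The second step is to bound $P(\mathcal G_T^c)$. Here I would invoke Lemma~\ref{theo-tail} (or its $c$-correlated analogue via $\phi_c$): a single adversarial tree rooted at a fixed block has depth $D(t)$ with $P(D(t)\ge \phi_c\lambda_a t+x)\le e^{-x}$, so the probability it reaches depth $s$ within the whole horizon is at most, after a union bound over the $O(\mathrm{poly}(\sigma))$ possible root blocks and a crude time discretization, something like $\mathrm{poly}(\sigma)\cdot e^{-\Omega(s)}$. Choosing $s=\Theta(\sigma)$ with a large enough constant makes this $e^{-\Omega(\sigma)}$, which is even smaller than the $e^{-\Omega(\sqrt\sigma)}$ error we are aiming for. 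On the complement event $\mathcal G_T$, the $s$-truncation rule never even gets triggered in a way that differs from the longest-chain rule on the honest-relevant chains (and where it is triggered between an honest-dense chain and a sparse adversarial one, it only \emph{helps}, favoring the denser honest chain), so persistence and liveness hold with the parameters of Theorem~\ref{thm:public_ledger}.

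The third step is to assemble the probabilities: on $\mathcal G_T$ the dynamic ledger equals a ledger produced by the static protocol to which Theorem~\ref{thm:public_ledger} applies, giving persistence (parameter $\tau=\Theta(\sigma)$) and liveness (parameter $u=\Theta(\sigma)$) with failure probability $e^{-\Omega(\sqrt\sigma)}$; adding $P(\mathcal G_T^c)=e^{-\Omega(\sigma)}$ leaves the overall failure probability at $e^{-\Omega(\sqrt\sigma)}$. One subtlety to handle carefully is that the adversary may keep a private chain secret for a long time and only later attempt the long-range/coin-grinding attack; the argument must quantify over \emph{all} root blocks in the blocktree up to time $T$, not just those on the public chain, which is why the union bound is over $O(\mathrm{poly}(\sigma))$ candidate roots and why we need the exponential-in-$s$ tail from Lemma~\ref{theo-tail} rather than just the a.s.\ growth-rate statement of Lemma~\ref{prop-1}.

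The main obstacle I anticipate is making the coupling between the dynamic and static executions fully rigorous in the presence of an adaptive adversary: one has to argue that, conditioned on $\mathcal G_T$, the joint law of honest and adversarial block arrivals in the dynamic world is \emph{identical} to the static world, which requires checking that every election the adversary could profitably perform still draws on a coin frozen $s$ blocks back, so that re-signing private blocks buys nothing until depth $s$ is reached. A secondary technical point is verifying that the ``caveat'' rule (applying $s$-truncation only when both chains have $\ge s$ blocks past the fork, and the longest-chain rule otherwise) does not open a new loophole — but since any chain with fewer than $s$ post-fork blocks cannot have exploited coin-grinding, the static analysis of \S\ref{sec:analysis} already covers that regime verbatim.
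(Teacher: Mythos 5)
The key step of your proposal fails. Your good event $\mathcal{G}_T$ --- that no adversarial private subtree reaches depth $s$ during the execution --- is not a high-probability event; it fails with probability tending to one. An adversarial tree rooted at any block grows in depth linearly at rate $e\lambda_a$ (rate $\phi_c\lambda_a$ for general $c$) almost surely, by Lemma~\ref{prop-1} and \eqref{eq-Ofer2} (resp.\ Proposition~\ref{prop:prop-c} and \eqref{eq:c_converse}), so it exceeds depth $s$ by a time of order $s/(\phi_c\lambda_a)=\Theta(\sigma)$, while the execution horizon is $\mathrm{poly}(\sigma)$ and the ledger must stay secure throughout. Lemma~\ref{theo-tail} cannot be used the way you use it: it bounds upward fluctuations of the depth above its typical linear growth, i.e.\ $P(D(t)\ge e\lambda_a t+x)\le e^{-x}$, and once $t\gtrsim s/(e\lambda_a)$ you cannot write $s=e\lambda_a t+x$ with $x=\Omega(s)$; the probability of reaching depth $s$ within the horizon is close to one, not $\mathrm{poly}(\sigma)\,e^{-\Omega(s)}$. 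Hence the coupling you condition on is vacuous, and the long-range/coin-grinding attack --- which consists precisely of the adversary growing a private chain past depth $s$, retroactively moving stake in its early blocks and re-signing --- is not excluded by your argument.

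The correct route, and the one the paper sketches, does not try to prevent private chains from reaching length $s$; it defeats them through the $s$-truncated fork-choice rule. The two substantive steps are: (i) redo the static-stake analysis of \S\ref{sec:analysis} for the $s$-truncation rule itself (your claim that on your good event the rule ``never gets triggered'' differently from longest-chain is both unjustified and moot, since honest nodes will in fact be presented with adversarial forks longer than $s$); and (ii) show that under the $s$-deep stake-update rule the first $s$ blocks of any private fork cannot benefit from coin grinding --- the coins eligible for those elections are fixed at least $s$ blocks before, i.e.\ at or before the fork point on the public chain --- so those $s$ blocks are produced at rate at most $\phi_c\lambda_a$, which is below the honest chain-growth rate under the stated threshold. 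Consequently the private chain's first $s$ post-fork blocks are created in a longer time (are sparser) than the public chain's, except with probability $e^{-\Omega(s)}=e^{-\Omega(\sigma)}$, and the $s$-truncation comparison rejects the fork. It is this density comparison over the first $s$ post-fork blocks, combined with the static result of Theorem~\ref{thm:public_ledger} extended to the $s$-truncated rule, that yields the theorem; your second step would have to be replaced by an argument of this kind.
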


\noindent{\bf Proof sketch.} We prove it in three steps. First, we show that with static stake setting, common prefix, chain growth, and chain quality properties still hold for $s$-truncation protocol with $s=\Theta(\sigma)$. Second, we show that with dynamic stake setting and $s$ stake update rule, the adversary can mine a private chain with consecutive $s$ blocks that is denser than the public main chain only with a negligible probability. 

We show in \S\ref{sec:analysis} that the choice of $c$ only determines how many fraction of adversary the system can tolerate and not the security parameter $\kappa$. On the other hand, the choice of $s$ is critically related to the target security parameter $\kappa$ 
as we show in 
\S\ref{sec:dynamic}. 
By decoupling these two parameters $c$ and $s$ in the protocol, we can achieve any level of predictability (with an appropriate choice of $c$), while managing to satisfy any target security parameter $\kappa$ 
(with an appropriate choice of $s$).


\section{Conclusion}
\label{sec:conclusion}
  We proposed a new family of PoS protocols and proved that our proposed PoS protocols  have low  predictability while guaranteeing security against adversarial nothing-at-stake attacks. We did not discuss the design of incentive systems that encourage users to follow the honest protocol. We point out that there are existing ideas that could be naturally adapted for our problem. For example, one way to minimize NaS attacks is to require users to deposit stake that can be slashed if the node has a provable deviation (for example, double-signing blocks) \cite{bentov2016snow,buterin2017casper}. Another important idea is that fruitchain-type incentive mechanisms \cite{fruitchains} which protect against selfish mining in PoW can be ported to PoS protocols \cite{kiayias2017ouroboros}.
  However, as pointed out in \cite{brown2019formal}, the full problem of designing PoS protocols that strongly (instead of weakly) disincentivze NaS and selfish-mining attacks remains an important direction of future research. Finally, a detailed  mathematical modeling of bribing attacks that consider interaction between the blockchain and external coordination mechanisms also remains open. 
  






\bibliographystyle{acm}
\bibliography{posPrism}

\newpage
\appendix
\section*{Appendix}



\section{Attacks on greedy protocols of~\cite{fan2018scalable,fanlarge}
}
\label{sec:attack}

PoS protocols based directly on the Nakamoto protocol, have been proposed recently
($g$-greedy Protocol \cite{fan2018scalable}
and $D$-distance-greedy Protocol \cite{fanlarge}), in
a series of attempts to 
achieve a short predictability 
and a provable security. 
However, the security analyses are subtle and  we show more malignant attacks in this section on the protocols of \cite{fan2018scalable,fanlarge}; thus more 
rigorous analysis is needed to study the security of those protocols.
Both ($g$-greedy Protocols \cite{fan2018scalable} and $D$-distance-greedy Protocols \cite{fanlarge}) are extensions of Nakamoto-PoS discussed in~\ref{sec:intro}, where honest nodes are encouraged to mine on multiple blocks in a single slot as well.

\subsection{$g$-greedy Protocol\cite{fan2018scalable} ($F$-height-greedy Protocol\cite{fanlarge})}
\label{sec:g-greedy}

\noindent{\bf $g$-greedy Protocol}.
In an attempt to increase the amount of adversarial stake that can be tolerated by the basic protocol closer to $1/2$,
 \cite{fan2018scalable} proposes 
a family of protocols called $g$-greedy,
parameterized by
a non-negative integer $g$, which is called as $F$-height-greedy in~\cite{fanlarge}.
Under the $g$-greedy protocol, the honest nodes are prescribed to run multiple leader elections in an attempt to outgrow the private NaS tree.
In an honest node's view, if the longest chain  is at height $\ell$ blocks, then
the node mines on any block that is higher than $\ell-g$ blocks (illustrated by the blocks inside the dotted box in Fig.~\ref{fig:g-greedy-protocol}).
When $g = \infty$, each honest node works on every block,
and when $g=0$, the node works on all blocks at the same height as the tip of the longest chain in the current view.
Note that $0$-greedy differs from the basic Nakamoto's protocol above, which chooses one block to mine on when there are multiple longest chains. However, the $g$-greedy protocol also has prediction window $W=1$. 


\begin{figure}[h]
    \centering
\includegraphics[width=\textwidth]{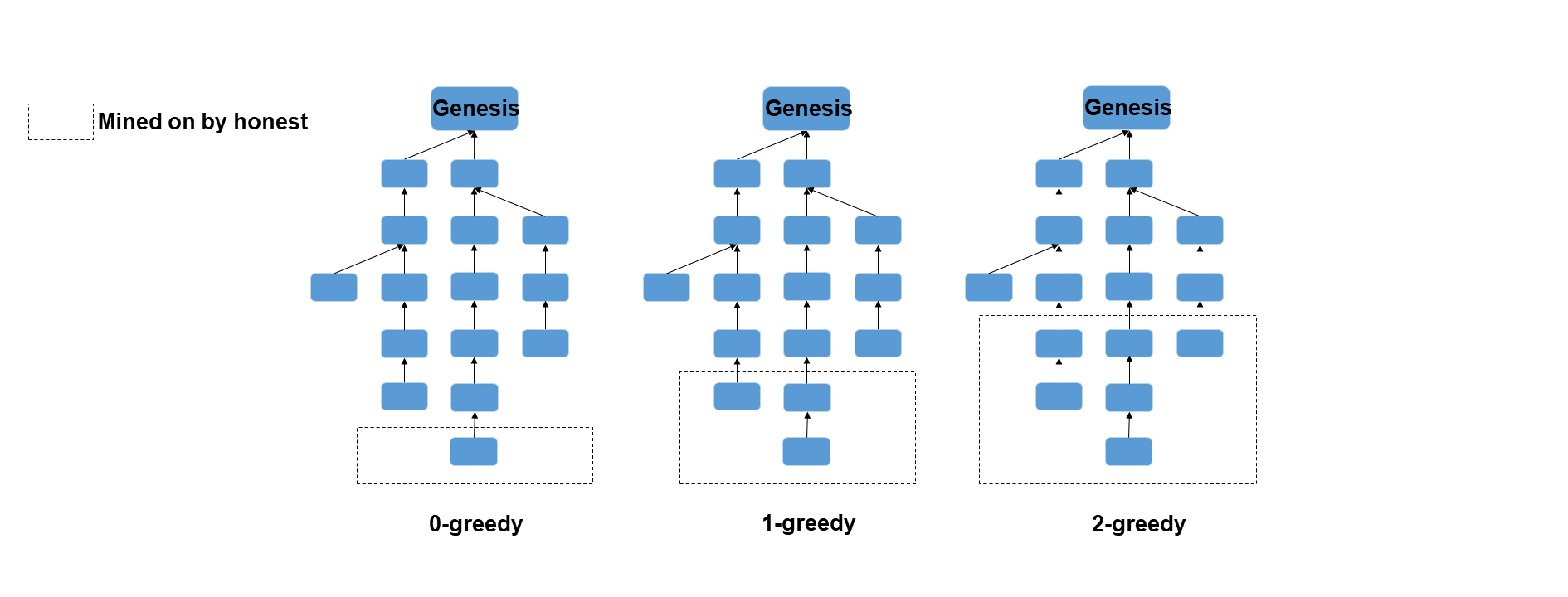}
\caption{$g$-greedy protocol with $g = 0,1,2$}
\label{fig:g-greedy-protocol}
\end{figure}



\noindent {\bf Private Attack on g-greedy Protocol}. A notorious attack in longest chain blockchain protocols is the {\em private attack}:  the adversary privately mines a chain that is longer than the longest chain in the honest view.
Roughly speaking, by the law of large numbers, adversary will fail with the private attack eventually if the {\em growth rate} of the adversarial chain is strictly lower than the honest chain.
The security guarantee for $g$-greedy protocol in \cite{fan2018scalable} is  based on this argument.

\begin{figure}[h]
    \centering
\includegraphics[width=\textwidth]{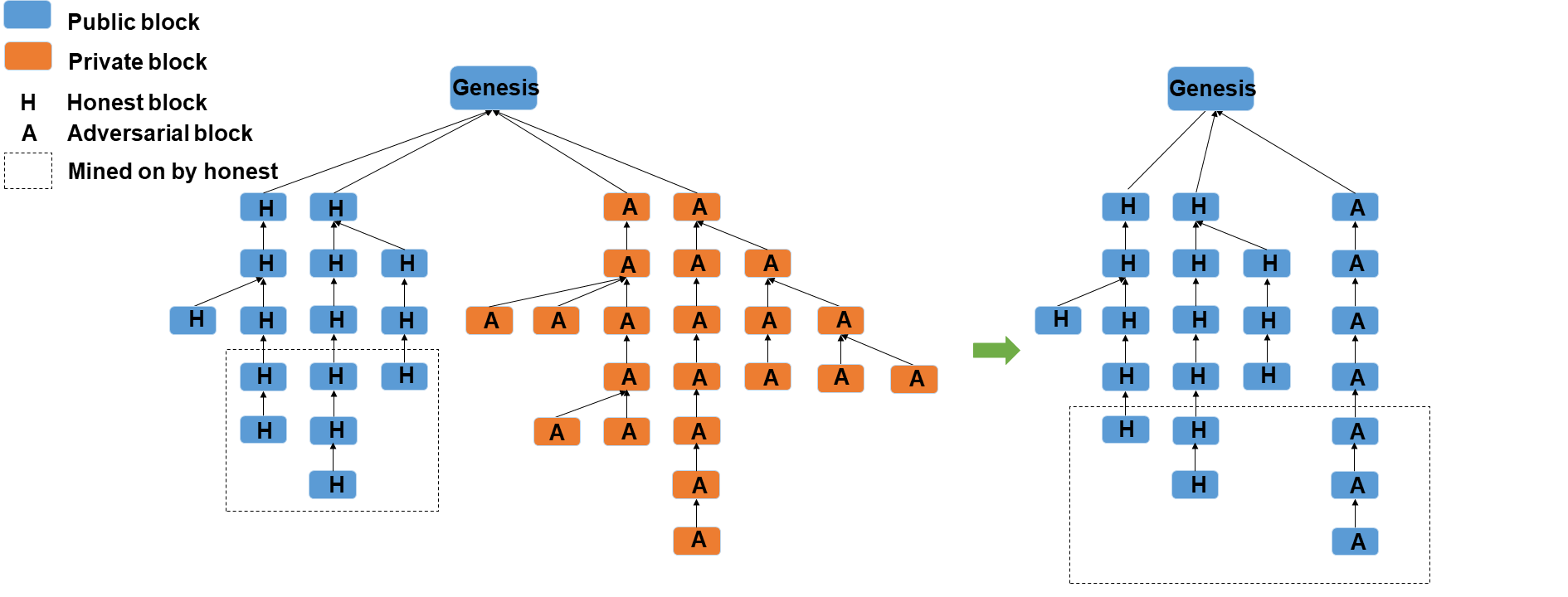}
\caption{Private attack on $g$-greedy protocol with $g = 2$}
\label{fig:g-private-attack}
\end{figure}

\begin{table}
\begin{center}
\begin{tabular}{ |c|c|c|c|c|c|c|c|c|c| }
 \hline
 $g$ & 0 & 1 & 2 & 3 & 4 & 5 & 6 & 7 & 8 \\ \hline
 ${R_g}$ \cite{fan2018scalable} & 1 &  1.7071 & 2.1072 & 2.3428 & 2.4905 & 2.5883 & 2.6562 & 2.7051 & {\color{red} 2.7414}  \\ \hline
 ${R_g}$ this paper & 1 & 1.6531 & 2.0447 & 2.2708 & 2.4084 & 2.4952 & 2.5472 & 2.5805 & 2.6048 \\ \hline
 $\beta_g=\frac{R_g}{e+R_g}$ \cite{fan2018scalable} & 0.2689 & 0.3858 & 0.4367 & 0.4629 & 0.4781 & 0.4876 & 0.4942 & 0.4988 & {\color{red} 0.5021}
 \\  \hline
  $\beta_g=\frac{R_g}{e+R_g}$ this paper & 0.2689 & 0.3782 &  0.4293 & 0.4552 & 0.4698 & 0.4786 & 0.4838 & 0.4870 & 0.4893 \\ \hline
\end{tabular}
\caption{Expected growth rate of honest tree for $g$-greedy protocol under private adversarial behavior. The largest adversarial stake that can be tolerated under the private attack is denoted by $\beta_g$. The claims of \cite{fan2018scalable} are corrected and compared with our calculations. }
\label{tab:growthrates}
\end{center}
\end{table}

How about the expected growth rate $R_g(1-\beta)$ of the honest tree under the $g$-greedy protocol and private behavior by the adversary? It is clear that $R_g$ is monotonically increasing in $g$. The limiting largest expected growth rate  is achieved at $g=\infty$, where the protocol is the same as the NaS attack. Thus the expected growth rate $R_g(1-\beta) = (1-\beta)e$ for $g=\infty$. One suspects that the protocol for $g=0$ is similar to simply growing only one of the longest chains. If the mining rate is slow enough (mining is considered to occur in discrete rounds which are spaced enough apart relative to the network broadcast propagation delay; detailed model in \S\ref{sec:model}), then the expected growth rate of the honest tree (essentially a chain) is simply $1-\beta$.  This is the expected growth rate of the honest tree with $g=0$.

The expected growth rate for a general $g$ is  significantly involved. In Lemma 4.6 of \cite{fan2018scalable}, a heuristic argument is conducted to estimate the expected growth rate for $g=2$ to be $R_2(1-\beta) = 2.1(1-\beta)$. This heuristic argument can be extended to other values of $g$ (see \S\ref{app:fzthreshold}) as summarized in Table~\ref{tab:amplification}. We note that $R_8 = 2.7414$, which is at odds with the observation that $R_g$ increases monotonically and $R_{\infty} = e$. This shows that the heuristic argument in Lemma 4.6 of \cite{fan2018scalable} is flawed. It is clear that the expected growth rate is related to the solution to the following set of recursive differential equations: $\frac{dx_{\ell}(t)}{dt} = x_{\ell-1}(t)$ for $m-g \leq \ell \leq m $ and $\frac{dx_{\ell}(t)}{dt} = 0$ for $0 \leq \ell < m-g$ for some fixed $m$.
Now $R_g$ is the largest value of $\frac{m}{t}$ such that $x_{m}(t)$ decreases exponentially in $t$.
A closed form solution to $R_g$ is challenging; although a numerical solution is readily achieved and tabulated in Table~\ref{tab:growthrates}.

We can make the following conclusion from Table~\ref{tab:growthrates}: the $g$-greedy protocol of \cite{fan2018scalable} is robust to the purely private NaS attack as long as the adversarial stake is such that  $R_g(1-\beta) > R_{\rm NaS}\beta = e\beta$. As $g$ grows large, this threshold on the adversarial stake approaches $\frac{1}{2}$; for instance, for $g = 5$, robustness to private NaS attack is achieved as long as the adversarial stake is less than $47.86\%$.  The caveat is that this security statement is misleading since only  one specific attack (the private NaS attack) has been studied. A different attack strategy could prove more malignant, as we show next.


\noindent {\bf Balance Attack on $g$-greedy Protocol}. We describe an  adversarial action that is a combination of private and public behaviors below;  we term this as a {\em balance attack} for reasons that will become obvious shortly; this attack has some commonalities with the balance attack on the {\texttt{GHOST}} protocol in \cite{kiffer2018better}.
The key idea of the balance attack is to reveal some privately mined blocks at an appropriate time to {\em balance the length of two longest chains} each sharing a common ancestor in the distant past.

The balance attack aims to have two longest chains of equally long length, forking all the way from the genesis block. Fig.~\ref{fig:g-balance-attack} depicts the balance attack in action, as it progresses over time.
At any time, the adversary will try to mine on every block including public blocks and private blocks, while honest nodes follow $g$-greedy protocol.
Once the adversary succeeds in creating a new block, it will first keep the block in private as shown in step 1 and step 3 of Fig.~\ref{fig:g-balance-attack}.
Whenever the adversary owns a private chain that has the same length as the longest public chain, it will reveal the private chain so that there will be two public chain with the same length as shown in step 2 and step 4 of Fig. \ref{fig:g-balance-attack}.
If the adversary can keep up this kind of balance attack, then it has successfully prevented any block from irreversible no matter how deep the block is buried in the longest chain.
To succeed with the balance attack, the adversary only needs to mine a few blocks as most blocks in the two main chain may be mined by honest nodes under $g$-greedy protocol.
A formal pseudocode describing the balance attack algorithm is available in \S\ref{app:balancepseudocode}.

\begin{figure}[h]
    \centering
\includegraphics[width=\textwidth]{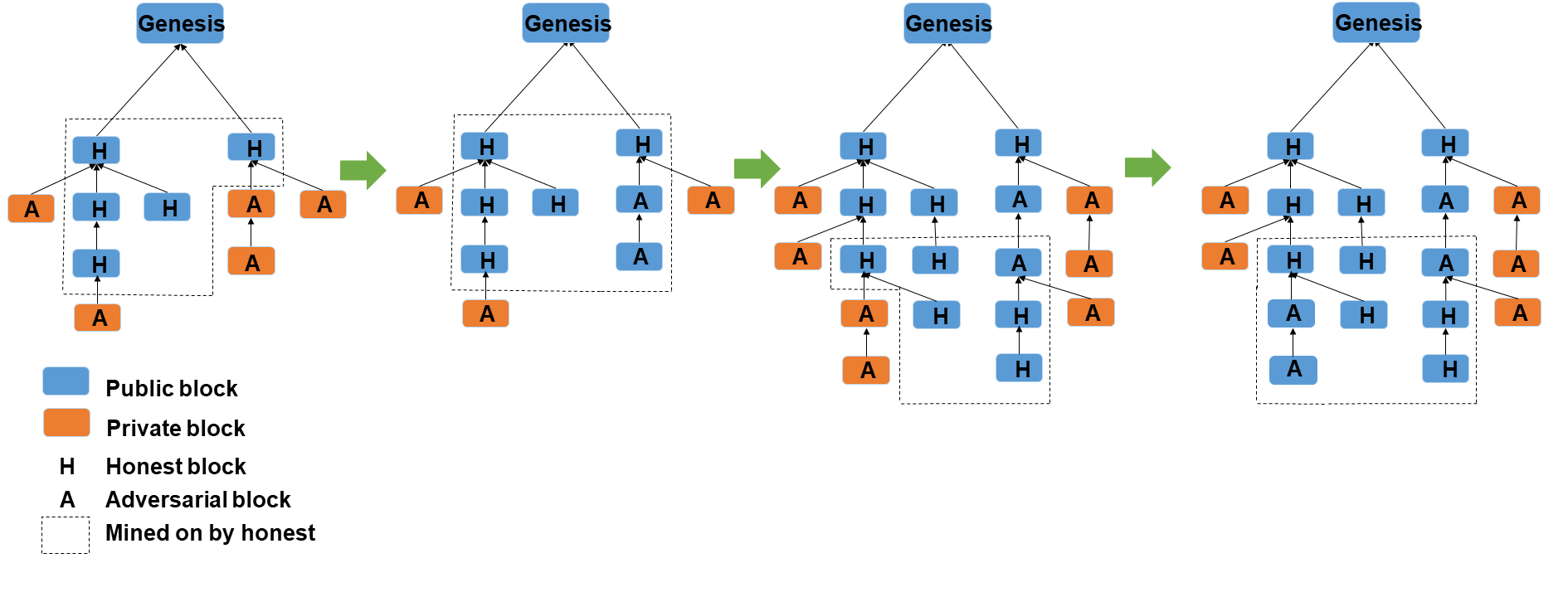}
\caption{Snapshot of balance attack in progress on $g$-greedy protocol with $g = 2$. The goal of the adversary is to balance the lengths of two  chains each tracing the genesis block as their ancestor, consequently creating a deep fork.  }
\label{fig:g-balance-attack}
\end{figure}

The balance attack is hard to analyze theoretically, but its efficacy can be evaluated via simulations.
The simulation starts with a fork with length $1$ (i.e., two independent public trees).
The honest nodes follow $g$-greedy protocol with stake proportion $1-\beta$ while adversary will always mine on every block in his private view.
Whenever the two public trees have different heights, the adversary will immediately reveal the private blocks appending to the shorter public tree to re-balance their heights if he can.
We set $f\Delta = 0.1$ in each round and simulated various pairs of $(g,\beta)$ for 2000 rounds.
Note that the simulation can be made more accurate with a finer discretization and a longer run-time.
The depth of the longest fork and its associated cumulative probability is plotted in Fig.~\ref{fig:balance_attack} for fixed parameters $g,\beta$. We see that in each instance, the probability of a fork of any length is larger for the balance attack than the private NaS attack; this implies that the balance attack {\em stochastically dominates} the private attack.
For example, when $g=2,\beta=0.38$, the adversary is able to cause a fork from genesis, which is longer than 100 blocks, with a probability around 20\%, while the private attack can only achieve it with probability less than 0.1\%. In this instance, the balance attack is significantly more powerful than the private attack.

\begin{figure}[h]
    \centering
\includegraphics[width=1\textwidth]{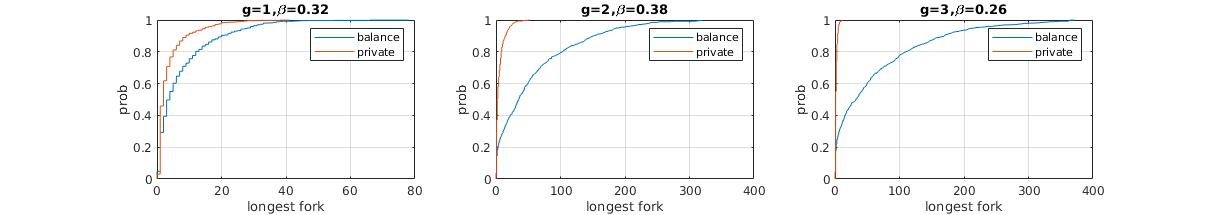}
    \caption{
    Comparison of CDF of the longest fork from genesis caused by private attack and balance attack for various pairs of $(g,\beta)$. The balance attack is  a stochastically dominating strategy for $g \geq 1$.
    }
    \label{fig:balance_attack}
\end{figure}

We see that virulence of the balance attack increases, dramatically, with $g$. In Fig.~\ref{fig:threshold}, the blue lines plot the largest value of $\beta$ the balance attacking adversary can have while allowing a fork of a large fixed length (50,100,200 blocks) with significantly high probability (25\%, 25\%, 10\% respectively). The orange line plots the corresponding largest value of $\beta$ using the private NaS attack. The experiments conclusively demonstrate the fatal nature of the balance attack as $g$ increases: the private NaS attack gets weaker while the balance attack gets dramatically stronger. For $g=6$, the balance attack is successful in creating a very long fork (200 blocks deep) with a high probability (10\%) using only 3\% of the stake.

\begin{figure}[h]
    \centering
\includegraphics[width=\textwidth]{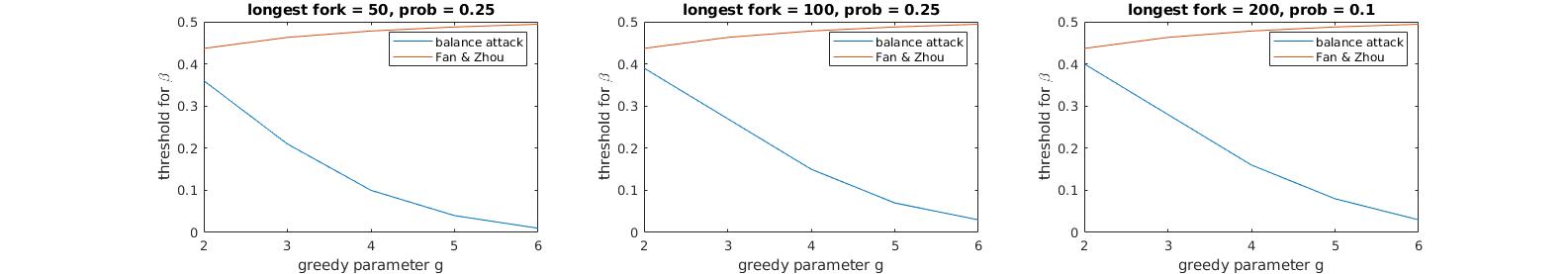}
    \caption{
    The threshold of $\beta$ to break $k$-common-prefix property with a certain probability actually declines as greedy parameter $g$ increases.
    }
    \label{fig:threshold}
\end{figure}

\subsection{D-distance-greedy Protocol\cite{fanlarge}}
\label{sec:D-greedy}

\noindent {\bf D-distance-greedy Protocol}.
In an attempt to make
the  $g$-greedy protocol of \cite{fan2018scalable}
robust against the balance attack,
recent work~\cite{fanlarge} has updated the protocol
to a new variant called the  $D$-distance-greedy protocol  described as below.
The reason $g$-greedy protocol is so vulnerable against
balance attack is that honest nodes are mining on both sides of two, for example, chains that forked many blocks deep.
This makes it easy for the adversary to keep those chains balanced, especially when $g$ is large. The protocol in~\cite{fanlarge} is aimed to prevent honest nodes from mining on both chains  as follows.

We first define the distance between two chains in a tree. For two chains $\mathcal{C}_a$ with length $\ell_a$ and $\mathcal{C}_b$ with length $\ell_b$, let $\mathcal{C}$ be the common prefix of $\mathcal{C}_a$ and $\mathcal{C}_b$ with length $\ell$, then the distance between $\mathcal{C}_a$ and $\mathcal{C}_b$ is defined as $\max(\ell_a-\ell,\ell_b-\ell)$.
$D$-distance-greedy protocol prescribes every honest node to first select one of the longest chains (the tie breaking could be random)  and  attempt to extend a set of chains in which all chains have distance no more than $D$ from the longest chain (illustrated by the blocks with green diamonds in Fig.~\ref{fig:D-greedy-protocol}).
If there are two chains that forked at more than $D$ blocks deep
from the tip of the chain, then each
honest node will mine on one side of those two chains. Again, the prediction window remains the same: $W=1$.

\begin{figure}[h]
    \centering
\includegraphics[width=\textwidth]{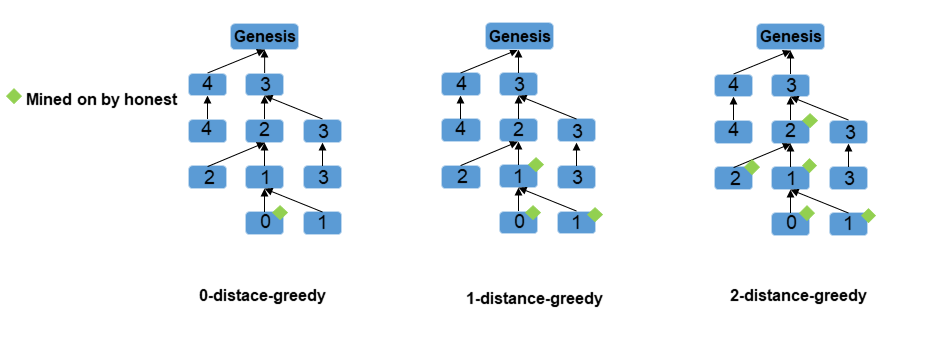}
\caption{$D$-distance-greedy protocol with $D = 0,1,2$}
\label{fig:D-greedy-protocol}
\end{figure}

The intuition behind this protocol  is the following:
\begin{enumerate}
    \item Although we are limiting the mining strategy of honest nodes to
    small distance blocks,
    the growth of the longest chain is still high. Using a heuristic argument, \cite{fanlarge} claims that the expected growth rate $A_D(1-\beta)$ of the longest chain  following the $D$-distance greedy protocol, is  $\frac{(1-\beta) (D+1)}{{\sqrt[D+1]{(D+1)!}}}$. Note that as $D$ increases, the expected growth rate of the honest tree,  according to this calculation, approaches $e(1-\beta)$. Based on this calculation,  the anticipated threshold of largest adversarial stake that can be tolerated  under a private attack, $\beta_D =
    \frac{A_D}{A_D + e}$, approaches the ideal threshold of $\beta=0.5$.

    \item Second, limiting the honest strategy to only grow on nodes that are near the tip of the longest chain might, at the outset, seems to render the balance attack of \S\ref{sec:g-greedy} ineffective. Indeed, an argument is made in \cite{fanlarge} to imply the security of the $D$-distance-greedy protocol against all possible adversarial strategies as long as the adversarial stake is less than $\beta_D$.

\end{enumerate}

In this section, we  show that both these results of \cite{fanlarge} are inaccurate.
We  do this by focusing on the case of $D=1$ first. Specifically:
\begin{enumerate}

    \item For $D=1$, we show that the precise expected growth of the longest chain of the honest block tree is $A_1(1-\beta) = \frac{1-\beta}{e-2}\approx 1.3922(1-\beta)$ and not $\sqrt{2}(1-\beta)$.
    This implies that, even if the adversary is controlling a
    smaller fraction of stake, say $\beta= 0.339$,
    than the threshold $\beta_{D=1} = 0.3422$ claimed in \cite{fanlarge}, the simple private attack can succeed with probability close to one.
    A formal statement and proof is in  \S\ref{app:1distance}; a precise calculation of the growth rate for general $D$ appears to be a challenging mathematical problem.

    \item More critically, we show that a  balance attack similar to the one from \S\ref{sec:g-greedy} can still be effective on the $D$-distance-greedy protocol -- and the efficacy is primarily achieved by {\em slowing down} the growth rate of the honest strategy.
    The D-distance-greedy protocol is not secure against balance attack, even if
    the adversary controls less than
     $\beta_D =  \frac{A_D}{A_D + e}$ fraction of the stake, with the correct $A_D$ we compute in \S\ref{app:1distance}.
     The actual security threshold on $\beta$ is strictly smaller;
      the private attack   considered in \cite{fanlarge} is
     {\em strictly} weaker than a standard adversary studied in the literature which can perform the balance attack.
    This is described in detail next.

\end{enumerate}

For general $D$ larger than one,
a balance attack can successfully slow down the growth rate of honest nodes,
but a precise computation of the slowed down growth rate is a challenging problem. We discuss security implications for large $D$ at the end of this section.
\cite{fanlarge} expects that the security of the
protocol increases as  $D$ increases, eventually achieving the ideal threshold of $\beta=0.5$.
Contrarily, we find that for a large enough $D$,
    the $D$-distance-greedy protocol becomes insecure;
    none of the blocks can be confirmed, regardless of how deep the block is in the current block chain.
    This is also true for the $g$-greedy protocol with a large enough $g$.
    This follows from the fact that when every node is mining on every block,
    the entire block tree becomes {\em unstable}: the prefix of the longest chain keeps changing indefinitely.
    Hence, even if the conjectured growth rate of
    $A_D=\frac{(D+1)}{\sqrt[D+1]{(D+1)!}}$ were true for large $D$,
    the desired threshold of $\beta=0.5$ cannot be achieved.

\noindent {\bf Balance Attack on $D$-distance-greedy Protocol}.
We consider the same balance attack, now on $1$-distance-greedy protocol.
Here we assume that, when two chains with equal length are broadcast in the network, honest nodes will split into two groups with equal stake proportion, where each group picks one chain as the longest chain and mines on the corresponding tree under $1$-distance-greedy protocol. This balance attack is described formally in \S\ref{app:Dbalancepseudocode}.

Perhaps surprisingly, this balance attack can
{\em slow down} the growth rate of the honest block tree:
 we show the growth rate of the longest chain  to be  $\tilde{A}_1(1-\beta) = \frac{4(1-\beta)}{3e^2-19} \approx 1.26 (1-\beta) < A_1(1-\beta)$.
A formal statement and proof of this result is in \S\ref{app:balancedrate}.
We note that this result is against the grain of the Nakamoto longest chain protocol, where no adversary strategy can slow down the growth rate of the block tree, even though  the honest nodes can be potentially split into working on two different chains of equally long length.  Thus when the proportion of adversarial stake $\beta$ falls in the interval $(\frac{\tilde{A}_1}{\tilde{A}_1+e},\frac{A_1}{A_1+e})$, 
the adversarial private tree will be ahead of the public longest chain, and the balance attack in conjunction with the private attack is fatal (i.e.,  any confirmed block can be reversed no matter how deep the block is buried in the longest chain).

We illustrate the high level idea of the slow-down effect of $1$-distance-greedy protocol with Fig.~\ref{fig:D_slow_down}.
The scenario at the top of Fig.~\ref{fig:D_slow_down} is a pure honest tree with one block at height $\ell-1$ and two blocks at height $\ell$, and honest nodes will mine on all these three blocks with rate $1-\beta$; while the scenario at the bottom is two trees balanced by the adversary, where child-blocks of every block are generated as a Poisson point process with rate $0.5(1-\beta)$ since honest nodes are split.
Note that the number of blocks at height $\ell$ and $\ell+1$ increases with the same rates in these two scenarios ($1-\beta)$ and $2(1-\beta)$ respectively).
However, when the number of blocks at height $\ell$ increases by one in both scenarios, the growth rate of blocks at height $\ell+1$ becomes $3(1-\beta)$ in the top scenario but only $2.5(1-\beta)$ in the bottom.
Since only half of honest nodes will benefit from, or mine on the new block due to splitting, the growth rate of the public longest chain is slowed down.
Formal analysis of this slow-down effect will be provided in \S\ref{app:balancedrate}.

\begin{figure}[h]
    \centering
\includegraphics[width=\textwidth]{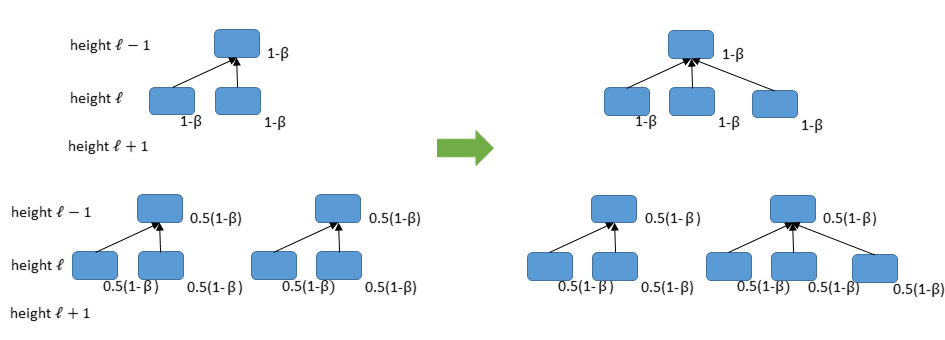}
\caption{The slow-down effect of $1$-distance-greedy protocol}
\label{fig:D_slow_down}
\end{figure}

We also use simulation to verify the analysis above. The simulation starts with a fork with length $2$ (i.e., two independent public trees).
The honest nodes follow $1$-distance-greedy protocol with stake proportion $1-\beta$ while adversary will always mine on every block in his private view.
Whenever the two public trees have different heights, the adversary will immediately reveal a private chain appending to the shorter public tree to re-balance their heights if he can.
For simplicity, when two chains with equal length are revealed in the public view, honest nodes will
split into two groups with equal stake proportion.
We set $f\Delta = 0.1$ in each round and simulated the case $\beta = 0.32$ and $\beta = 0.33$ for 10000 rounds and 20000 rounds respectively.
Note that the simulation can be made more accurate with a finer discretization and a longer run-time.
The depth of the longest fork and its associated cumulative probability is plotted in Fig.~\ref{fig:D_balance_attack} for fixed parameters $\beta = 0.32$ and $\beta = 0.33$.
One can see that the success probability (1 minus the probability in the plots) of private attack will converge to $0$ as for longer fork, while the success probability of balance attack will saturate to some constant, which indicates that, using the balance attack, the adversary can succeed in creating a fork of any length with some non-negligible probability.
This simulation results immediately verify the slow-down effect of $1$-distance-greedy.
When the growth rate of the public longest chain is slowed down due to splitting for some time, the adversarial private tree will be ahead of the public longest chain, and the balance attack can continue to succeed easily in the future and create forks with any length.

\begin{figure}[h]
    \centering
\includegraphics[width=1\textwidth]{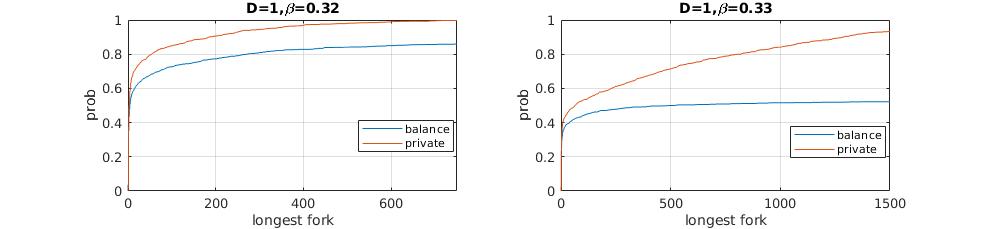}
    \caption{
    Comparison of CDF of the longest fork from genesis caused by private attack and balance attack for various $\beta$. Due to the slow-down effect, the success probability of balance attack saturates to some constant number.
    }
    \label{fig:D_balance_attack}
\end{figure}

For general $D$-distance-greedy,
the growth rate of the honest chain can also be slowed down from $A_D(1-\beta)$ by splitting the honest nodes.
However, this balance attack of keeping {\em two} chains of balanced lengths is
 a sub-optimal strategy (for the adversary).
 The adversary can maintain
 multiple balanced chains,
 splitting honest nodes as much as possible.
Under such an attack, we conjecture that the growth rate of the public longest chain
can be slowed down all the way to $1-\beta$, eliminating amplification entirely.
However, how large $\beta$ needs to be to make this attack successful
remains an open question, and we defer it as a future research direction.  

In order to fix the slow-down effect for $1$-distance greedy protocol,
we  introduce the following  refined tie breaking rule.
Recall that $1$-distance greedy protocol prescribes
an honest node to mine on one of the longest chains (chosen randomly when there are multiple),
and also simultaneously mine on its parent block and all its sibling blocks.

\begin{definition}[no-slow-down tie breaking rule for $1$-distance greedy protocol]
 When there are multiple longest chains with equal lengths,
mine on one node that has  the largest number of sibling blocks,
and simultaneously mine on its parent block and all its sibling blocks.
Further ties can be broken arbitrarily.
\end{definition}

For example, let $h$ denote the height of the longest chain.
If there are two nodes at height $h-1$, one with two children blocks and another with three children blocks,
then honest nodes will mine on the three sibling blocks at height $h$ and its parent.
If there are three nodes at height $h-1$, one with two children blocks and the rest with three children blocks each,
then each honest node will mine on  one set of three sibling blocks and its parent (chosen arbitrarily).
If this finer tie breaking rule is adopted, then we can prove the following ‘‘no-slow-down lemma'' for $1$-distance greedy protocol.

\begin{lemma}[no-slow-down lemma for $1$-distance greedy]
Under the synchronous network setting with bounded message delay from~\cite{fanlarge}, an adversary cannot slow down the honest chain growth. That is, no matter what the adversary does, the honest nodes will grow a chain that is at least as long as if the adversary does not generate any blocks.
\end{lemma}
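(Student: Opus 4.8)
The plan is to prove the statement by a deterministic coupling between two executions that share the same honest-mining randomness and the same (adversarially chosen) message-delivery schedule: the \emph{actual} execution, in which the adversary produces blocks according to its strategy, and the \emph{fictitious} execution, identical except that the adversary produces no blocks at all. Write $h(t)$ and $h_0(t)$ for the length of the longest chain held by the honest nodes at round $t$ in the actual and fictitious executions respectively; it suffices to show $h(t)\ge h_0(t)$ for every $t$. Since the only difference between the two executions is the presence of adversarial blocks, and blocks are never removed from the tree, the content of the lemma is that the extra blocks honest nodes may see — and, under $1$-distance-greedy, mine on — can never divert honest mining power in a way that costs progress, which is exactly what the no-slow-down tie-breaking rule is designed to rule out.

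First I would set up the combinatorial structure. For a block tree $T$ with tip at height $h$, call a set of height-$h$ blocks sharing a common parent a \emph{top cluster}; under the no-slow-down rule the honest nodes mine on the largest top cluster $S^\star$ together with its parent $p^\star$ (ties broken arbitrarily, possibly splitting the honest nodes among several equal-size top clusters, exactly as can already happen among honest blocks in the fictitious execution). The structural observations are: (i) every honest mining event either creates a block at height $h+1$ (a child of a block in $S^\star$) — advancing the chain — or a block at height $h$ that is a child of $p^\star$, hence \emph{enlarging} a top cluster; (ii) an adversarial block, wherever placed, only adds a new height-$\le h$ block, which either forms a top cluster the honest nodes ignore because it is no larger than $S^\star$, or one that ties or beats $S^\star$, in which case the honest nodes move to an at-least-as-large cluster — never to a strictly worse position; (iii) message delays, being the same in both executions, contribute identically.

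The heart of the argument is to encode (i)--(iii) in a potential function $\Phi(T)$ of the honest-known tree of the form $\Phi(T)=h+f(\text{sizes of the top few clusters})$, with $f$ chosen so that $(a)$ $\Phi$ increases by at least a fixed amount at each honest mining event regardless of adversarial presence, $(b)$ $\Phi$ is non-decreasing at adversarial mining events and at message deliveries, and $(c)$ $\Phi$ lower-bounds the chain length. Running $\Phi$ against the identical sequence of honest events then yields $\Phi_{\rm actual}(t)\ge\Phi_{\rm fict}(t)$, hence $h(t)\ge h_0(t)$. The main obstacle is the tie case in (ii): when the adversary manufactures a top cluster exactly equal in size to $S^\star$ and the arbitrary tie-break splits the honest nodes, one must show the split is no more harmful than an honest--honest split of the same shape already possible in the fictitious execution — i.e., honest--honest splitting must be "pre-priced" into $f$ so that substituting adversarial blocks for absent honest ones cannot lower $\Phi$. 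Pinning down $f$ so this holds simultaneously with $(a)$--$(c)$, and checking it against every way the adversary can inject or release blocks near the frontier under bounded delay, is where the real work lies; once $f$ is fixed the remaining case analysis is routine.
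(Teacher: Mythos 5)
Your plan defers precisely the step that carries the entire content of the lemma. You set up a coupling and a potential $\Phi(T)=h+f(\text{top-cluster sizes})$, but you never construct $f$; you explicitly leave the adversarial-tie case (an adversary-made cluster of exactly the size of $S^\star$ splitting the honest nodes) as ``where the real work lies.'' That case is not a routine loose end -- it is the lemma. The whole reason the random tie-break is vulnerable (the balance attack of \S\ref{sec:D-greedy} and \S\ref{app:balancedrate}) is that splits after ties change the subsequent cluster-size dynamics; showing that the refined rule ``pre-prices'' this is the claim to be proved, so a proof that postpones it has not established anything. There is also a logical gap in how you would conclude even if $f$ existed: properties $(a)$--$(c)$ as stated (a uniform lower bound on $\Phi$-increments at honest events in the actual execution, monotonicity at adversarial events, $\Phi$ lower-bounds height) do not yield $\Phi_{\rm actual}(t)\ge\Phi_{\rm fict}(t)$ -- the fictitious execution's increments could be larger. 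You would need event-by-event domination of increments under the coupling, and the coupling itself is delicate: under $1$-distance-greedy the honest mining set in the actual execution can contain adversarial blocks and have a different size and location than in the fictitious execution, so ``the same honest-mining randomness'' does not canonically identify honest block-generation events across the two runs.

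For comparison, the paper's proof is much more direct and avoids both issues: by the Markov-chain computation in \S\ref{app:1distance}, the instantaneous growth rate of the honest longest chain depends only on the number of sibling blocks at the tip being mined on; the no-slow-down tie-break always selects a longest chain whose tip cluster has maximal size, so whenever honest nodes adopt an adversary-released block $B$, either $B$ lies at a strictly greater depth (an immediate gain in height) or $B$'s tip cluster is at least as large as the one previously mined on, so the growth rate never decreases. No coupled fictitious execution or potential function is needed. If you want to salvage your route, the place to start is to make the growth rate's dependence on cluster size the definition of $f$ (e.g.\ $f$ an increasing function of the selected cluster's size calibrated to the expected time-to-advance), and then verify increment domination directly in the tie/split case rather than assuming it.
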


\begin{proof}
By the analysis in \S\ref{app:1distance}, for $1$-distance greedy protocol, the growth rate of the longest chain in the honest tree depends only on the number of sibling blocks on the tip, then the adversary cannot slow down the honest chain under this tie breaking rule since honest nodes are always choosing the longest chain with the largest potential growth rate.
To be more specific, whenever the adversary publishes a block $B$ and some honest nodes start to mine on $B$ under this no slow-down tie breaking rule, which means that either $B$ is on a deeper level or the number of sibling blocks of $B$ is no less than the number of sibling blocks of the main chain which is chosen before $B$ is revealed, in both case the growth of the honest tree will not be slowed down.
\end{proof}

Thus the chain growth property for $1$-distance greedy protocol can be guaranteed under the tie breaking rule with the no slow-down lemma proved above.
This is not true for the random tie breaking rule, as we shows with balance attack in \S\ref{app:balancedrate}.
Beyond the no-slow-down lemma, a rigorous analysis  of common prefix property for $1$-distance greedy protocol still remains a challenging research problem.

When there are multiple longest chains to be mined on,
no-slow-down lemma holds as long as the honest nodes can choose the one that will ensure fastest expected growth.
However, it is not straightforward how to selecting the fastest growing chain under  $D$-distance-greedy, especially for large $D$.
For large $D$, it becomes computationally intractable to break ties in order to eliminate the slow-down effect completely.

\section{Notes and Proofs for \S\ref{sec:attack}}

\subsection{$R_g$ calculation according to \cite{fan2018scalable}}
\label{app:fzthreshold}


We recall that $R_g(1-\beta)$ is the expected growth rate of the longest chain of the tree grown by honest nodes (adversaries are acting purely privately) using the $g$-greedy protocol. We also recall that $R_g$ is increasing monotonically in $g$ and that $R_\infty = e$, where $e$ is the base of the natural logarithm.
In \cite{fan2018scalable}, the authors claim that $R_2 = 2.1$ by assuming that {\em all} chains grow with at the {\em same}  steady rate as the longest chain and then investigating the average case.
One can generalize their methodology for arbitrary $g$ as follows: 
suppose  the length of the longest chain at round $r$ is $\ell$, let $x_i$ with $0 \leq i \leq g$ be the number of chains with length $\ell - i$.
For simplicity, $x_0$ is set to be $1$.
Then using the  arguments in Lemma 4.6 in \cite{fan2018scalable}, we  obtain the following $g+1$ equations with $g+1$ unknowns $\{x_i:1 \leq i \leq g \}$ and $R_g$.
\begin{equation}
\begin{cases}
     \frac{x_0+x_1}{2} = R_g x_0  \\
     \frac{x_i+x_{i+1}}{2} = R_g (x_i - x_{i-1})  \qquad\text{for }i\in\{1,2,\cdots, g-1\} \\
     x_g = R_g(x_g - x_{g-1}) \nonumber 
\end{cases}
\end{equation}

By solving these equations numerically, we obtain the value of $R_g$ as shown in Table~\ref{tab:amplification}.
It turns out that $R_g > e$ for $g \geq 8$ which is in contradiction to the fact that $R_g$ is monotonically increasing and the limiting value is $R_\infty = e$. This shows that the method  in \cite{fan2018scalable} to estimate the growth rate of the longest chain  is flawed. 

\begin{table}
\begin{center}
\begin{tabular}{ |c|c|c|c|c|c|c|c|c|c|c|c|c| } 
 \hline
 $g$ & 0 & 1 & 2 & 3 & 4 & 5 & 6 & 7 & 8 & 9 & 10 & $\infty$ \\ 
 \hline
 $R_g$ & 1 & 1.7071 & 2.1072 & 2.3428 & 2.4905 & 2.5883 & 2.6562 & 2.7051 & {\color{red} 2.7414} & {\color{red} 2.7690} & {\color{red} 2.7906} & $e=2.7183$ \\
 \hline
\end{tabular}
\caption{Chain growth rate for $g$-greedy protocol using the method of \cite{fan2018scalable}. There is an inconsistency in the calculation for $g \geq 8$.}
\label{tab:amplification}
\end{center}
\end{table}


\subsection{Pseudo Code for the Balance Attack on g-greedy Protocol}
\label{app:balancepseudocode}

In the main text we have described the balance attack informally along with accompanying examples depicted in Fig.~\ref{fig:g-balance-attack}. Here we provide a formal pseudocode describing the balance attack algorithm for completeness. 

\begin{algorithm}
{\fontsize{10pt}{10pt}\selectfont
\caption*{Balance attack $(g)$}
\begin{algorithmic}[1]
\Procedure{Initialize}{ } 
\State ${\rm PrivateTree1} \gets {\rm genesis}$ 
\State ${\rm PrivateTree2} \gets {\rm genesis}$ 
\State ${\rm PublicTree1} \gets {\rm genesis}$ 
\State ${\rm PublicTree2} \gets {\rm genesis}$ 
\EndProcedure
\Procedure{Attack}{ } 
\For{$r = 1:r_{\rm max}$}
\State HonestMining($g$,PublicTree1,PublicTree2)              \colorcomment{honest nodes work on blocks in the public view under $g$-greedy protocol}
\State AdversaryMining(PrivateTree1,PrivateTree2)           \colorcomment{adversary works privately on all blocks in the private view}
\If{height(PublicTree1) $==$ height(PublicTree2)}
		\State adversary reveals nothing 
    	\ElsIf{height(PublicTree1) $>$ height(PublicTree2)} 
		    \State PublicTree2(1:height(PublicTree1)) $\gets$ PrivateTree2(1:height(PublicTree1)) 
        	\ElsIf{height(PublicTree1) $<$ height(PublicTree2)}
		        \State PublicTree1(1:height(PublicTree2)) $\gets$ PrivateTree1(1:height(PublicTree2)) 
\EndIf
\State height($r$) $\gets$ $\max$\{height(PublicTree1),height(PublicTree2)\}
\State diff($r$) $\gets$ height(PublicTree1) $-$ height(PublicTree2)
\EndFor
\State \Return height(last(diff $== 0$))  \colorcomment{return the length of the  longest fork from genesis}

\EndProcedure
\vspace{1mm}

\end{algorithmic}
}
\end{algorithm}

\subsection{Growth rate of $1$-distance-greedy Protocol}
\label{app:1distance}
In this section we calculate the expected growth rate of honest tree under $1$-distance-greedy protocol with a continuous time Markov chain. 

The honest nodes grow a tree starting with the genesis block as root at time $t=0$.
At any time $t$, the honest nodes independently mines on a set of blocks according to $1$-distance-greedy protocol.
From the point of view of a fixed block, its child-blocks are generated as a Poisson point process with
rate $1-\beta$ i.e.,  the time interval between the child-blocks is an exponential random variable with rate $1-\beta$. 

Let $W(t)$ be the number of blocks on the tip of the tree. For very small values of $h$, we have
\begin{align*}
&\mathbb{P}\big(W(t+h)=k+1 \mid W(t)=k\big)  =(1-\beta)h + o(h)\\
&\mathbb{P}\big(W(t+h)=1 \mid W(t)=k\big)  =k(1-\beta)h + o(h)
\end{align*}

It follows that the infinitesimal generator of $W(t)$ is
\begin{equation*}
\mathrm{A}=(1-\beta)\left(\begin{array}{ccccc}{-1} & {1} & {0}  & {0} & {\cdots} \\ 
2 & -3 & 1  & {~0~} & {\cdots} \\ 
~~3~~ & ~~0~~ & {~~-4~~} & ~~1~~  & {\cdots} \\ 
{\vdots} & {\vdots} & {\vdots}  & {\vdots} & {\ddots}\end{array}\right)
\end{equation*}
let $\pi = (\pi_1,\pi_2,\cdots)$ be the stationary distribution, then it satisfies $\pi \mathrm{A}=0$. By solving the equations, we have $\pi_1 = \frac{1}{2(e-2)}$ and $\pi_n = \frac{2\pi_1}{(n+1)!}$ for $n\geq1$. Then the expected number of blocks on the tip of the honest tree can be computed as
\begin{equation*}
    \mathbb{E}[W(t)] = \sum_{n=1}^{\infty} n \pi_n = 2\pi_1 \sum_{n=1}^{\infty}\frac{n}{(n+1)!} = 
    2\pi_1 \sum_{n=1}^{\infty}\big(\frac{1}{n!}-\frac{1}{(n+1)!}\big) = 2\pi_1 = \frac{1}{e-2}.
\end{equation*}

Let $T$ be the time for the height of the honest tree to grow by one, then we have
\begin{align}
    &\mathbb{P}(T \in (t,t+dt) \mid W(t)=k, T \geq t ) = k(1-\beta)dt  \nonumber\\
\Longrightarrow &\mathbb{P}(T \in (t,t+dt) \mid T \geq t) = \sum_{k=1}^{\infty} k(1-\beta)dt \mathbb{P}(W(t)=k) = \mathbb{E}[W(t)](1-\beta)dt  \nonumber\\
\Longrightarrow &\frac{\mathbb{P}(T \in (t,t+dt))}{\mathbb{P}(T \geq t)} = \mathbb{E}[W(t)](1-\beta)dt.
\label{eqn:pdf_cdf_eq}
\end{align}

Let $f_T(t)$ be the probability density function of random variable $T$, $F_T(t)$ be the cumulative distribution function of $T$, and $F_T^c(t) = 1-F_T(t)$, then we know $f_T(t) =-\dot{F_T^c(t)}$. Equation \eqref{eqn:pdf_cdf_eq} can be represented as
\begin{equation*}
    \frac{\dot{F_T^c(t)}}{F_T^c(t)} = -\mathbb{E}[W(t)](1-\beta)dt,
\end{equation*}
with $F_T^c(0) = 1$, and the solution of this differential equation is $F_T^c(t) = e^{\mathbb{E}[W(t)](1-\beta)}$, then we have
\begin{equation*}
    \mathbb{E}[T] = \int_0^\infty F_T^c(t)dt = \frac{1}{\mathbb{E}[W(t)](1-\beta)}.
\end{equation*}
Therefore we can conclude that the growth rate of the honest tree $A_1(1-\beta) = 1/\mathbb{E}[T] = \mathbb{E}[W(t)](1-\beta) = \frac{1-\beta}{e-2} \approx 1.39(1-\beta)$.

\subsection{Growth rate of $1$-distance-greedy Protocol under Balance Attack}
\label{app:balancedrate}

In this section we calculate the expected growth rate of honest tree in $1$-distance-greedy protocol under the balance attack in \S\ref{sec:D-greedy} (see pseudocode in \S\ref{app:Dbalancepseudocode}). 

We start with two chains of equal length and the distance between them is greater than $1$. In this situation,  honest nodes will split into two groups with equal stake proportion to mine on different chains under $1$-distance-greedy protocol.
We first consider the {\em steady state} behavior where   the adversary has ``sufficient" number of private chains under each public chain, that is, whenever the length of one public chain grows by one, the adversary can immediately reveal one block for the other chain to re-balance these two chains and keep the honest nodes split.
We analyze the {\em transient} stage, i.e., getting to steady state from genesis after the steady state analysis.

Let $U(t)$ and $V(t)$ be the number of blocks on the tip of the two trees. For very small values of $h$, we have
\begin{align*}
&\mathbb{P}\big((U(t+h),V(t+h))=(u+1,v) \mid (U(t),V(t))=(u,v)\big)  = 0.5(1-\beta)h + o(h)\\
&\mathbb{P}\big((U(t+h),V(t+h))=(u,v+1) \mid (U(t),V(t))=(u,v)\big)  = 0.5(1-\beta)h + o(h)\\
&\mathbb{P}\big((U(t+h),V(t+h))=(1,1) \mid (U(t),V(t))=(u,v)\big)  = 0.5(u+v)(1-\beta)h + o(h)\\
\end{align*}

Let $Z(t) = U(t)+V(t)$, then we have
\begin{align*}
&\mathbb{P}\big(Z(t+h)=k+1 \mid Z(t)=k\big)  =(1-\beta)h + o(h)\\
&\mathbb{P}\big(Z(t+h)=2 \mid Z(t)=k\big)  =0.5k(1-\beta)h + o(h)
\end{align*}

It follows that the infinitesimal generator of $Z(t)$ is
\begin{equation*}
\mathrm{A}=(1-\beta)\left(\begin{array}{ccccc}{-1} & {1} & {0}  & {0} & {\cdots} \\ 
1.5 & -2.5 & 1  & {~0~} & {\cdots} \\ 
~~2~~ & ~~0~~ & {~~-3~~} & ~~1~~  & {\cdots} \\ 
{\vdots} & {\vdots} & {\vdots}  & {\vdots} & {\ddots}\end{array}\right)
\end{equation*}
let $\pi = (\pi_2,\pi_3,\cdots)$ be the stationary distribution, then it satisfies $\pi \mathrm{A}=0$. By solving the equations, we have $\pi_2 = \frac{2}{3e^2-19}$ and $\pi_n = \frac{3\pi_1 2^{n+1}}{(n+2)!}$ for $n\geq1$. Then the expected number of blocks on the tip of the honest tree can be computed as
\begin{align*}
    \mathbb{E}[Z(t)] &= \sum_{n=2}^{\infty} n \pi_n = 3\pi_2 \sum_{n=2}^{\infty}\frac{2^{n+1}n}{(n+2)!} = 
    3\pi_2 \sum_{n=2}^{\infty}\frac{2^{n+1}(n+2-2)}{(n+2)!}  \\
    &= 3\pi_2 \sum_{n=2}^{\infty}\big(\frac{2^{n+1}}{(n+1)!}-\frac{2^{n+2}}{(n+2)!}\big) = 4\pi_2.
\end{align*}

Let $T$ be the time for the height of the honest tree to grow by one.  Then, similar to the analysis in \S\ref{app:1distance}, we have that $\mathbb{E}[T] = 2/(\mathbb{E}[Z(t)](1-\beta))$
Thus the growth rate of the honest tree 
$\tilde{A}_1 (1-\beta)= 1/\mathbb{E}[T] = \mathbb{E}[Z(t)](1-\beta)/2 = \frac{4}{e^2-19}(1-\beta) \approx 1.26(1-\beta) < A_1(1-\beta)$.

This implies that the growth rate of the public longest chain will be slowed down due to the balance attack. Therefore, when the adversary with $\beta$ smaller but close to $A_1/(A_1+e)$ succeeds in balancing the two chains for some time, the private chains will grow faster than the public chain, which gives the adversary a lot of future blocks in the ``bank'' to continue the balance attack.

\subsection{Pseudo Code for the Balance Attack on $1$-distance-greedy Protocol}
\label{app:Dbalancepseudocode}

Here we provide a formal pseudocode describing the balance attack algorithm on $1$-distance-greedy protocol for completeness. 

\begin{algorithm}
{\fontsize{10pt}{10pt}\selectfont
\caption*{Balance attack $(D=1, \beta)$}
\begin{algorithmic}[1]
\Procedure{Initialize}{ } 
\State ${\rm PrivateTree1} \gets {\rm genesis}$ 
\State ${\rm PrivateTree2} \gets {\rm genesis}$ 
\State ${\rm PublicTree1} \gets {\rm genesis}$ 
\State ${\rm PublicTree2} \gets {\rm genesis}$ 
\EndProcedure
\Procedure{Attack}{ } 
\For{$r = 1:r_{\rm max}$}
\If{height(PublicTree1) $==$ height(PublicTree2)}
    \State HonestMining(Poisson\_rate = $0.5(1-\beta)$,$D=1$,PublicTree1) 
    \State HonestMining(Poisson\_rate = $0.5(1-\beta)$,$D=1$,PublicTree2)
    \ElsIf{height(PublicTree1) $>$ height(PublicTree2)}
        \State HonestMining(Poisson\_rate = $1-\beta$, $D=1$,PublicTree1) 
        \ElsIf{height(PublicTree1) $<$ height(PublicTree2)}
            \State HonestMining(Poisson\_rate = $1-\beta$, $D=1$,PublicTree2)
\EndIf
\colorcomment{honest nodes work on blocks in the public view under $D$-distance-greedy protocol}
\State AdversaryMining(Poisson\_rate = $\beta$,PrivateTree1,PrivateTree2)           \colorcomment{adversary works privately on all blocks in the private view}
\If{height(PublicTree1) $==$ height(PublicTree2)}
		\State adversary reveals nothing 
    	\ElsIf{height(PublicTree1) $>$ height(PublicTree2)} 
		    \State PublicTree2(height(PublicTree2)+1:min(height(PrivateTree2),height(PublicTree1)) $\gets$ 1
		    \colorcomment{adversary reveals a chain from PrivateTree2 to match the height of the public trees if he can}
        	\ElsIf{height(PublicTree1) $<$ height(PublicTree2)}
		        \State PublicTree1(height(PublicTree1)+1:min(height(PrivateTree1),height(PublicTree2)) $\gets$ 1
		    \colorcomment{adversary reveals a chain from PrivateTree1 to match the height of the public trees if he can} 
\EndIf
\State height($r$) $\gets$ $\max$\{height(PublicTree1),height(PublicTree2)\}
\State diff($r$) $\gets$ height(PublicTree1) $-$ height(PublicTree2)
\EndFor
\State \Return height(last(diff $== 0$))  \colorcomment{return the length of the  longest fork from genesis}

\EndProcedure
\vspace{1mm}

\end{algorithmic}
}
\end{algorithm}

F

\section{Verifiable Random Functions}
\label{sec:vrf}

\begin{definition}[from \cite{vrf2}]
    \label{def:vrf}
    A function family $F_{(\cdot)}(\cdot): \{0,1\}^{a(\kappa)} \to \{0,1\}^{b(\kappa)}$ is a family of VRFs is there exists a
    probabilistic polynomial-time algorithm {\sc Gen} and deterministic algorithms {\sc VRFprove} and {\sc VRFverify} such that {\sc Gen}$(1^\kappa)$ outputs a pair of keys $(pk,sk)$; {\sc VRFprove}$(x,sk)$ computes $(F_{sk}(x),\pi_{sk}(x))$, where $\pi_{sk}(x)$ is the proof of correctness; and
    {\sc VRFverify}$(x,y,\pi,pk)$ verifies that $y=F_{sk}(x)$ using the proof $\pi$. Formally, we require
    \begin{enumerate}
        \item {\bf Uniqueness:} no values $(pk,x,y_1,y_2,\pi_1,\pi_2)$ can satisfy {\sc VRFverify}$ (x,y_1,\pi_1,pk)=${\sc VRFverify}$(x,y_2,\pi_2,pk)$ when $y_1 \neq y_2$.
        \item {\bf Provability:} if $(y,\pi)=${\sc VRFprove}$(x,sk)$, then
        {\sc VRFverify}$(x,y,\pi,sk)=1$.
        \item {\bf Pseudorandomness:} for any
        probabilistic polynomial-time algorithm $\mathcal{A}=(A_1,A_2)$, who does not query its oracle on $x$,
        \begin{eqnarray*}
            \Pr\left[z=z' \; \left|\; \begin{array}{l}
            (pk,sk) \gets \text{\sc Gen}(1^\kappa);\\
            (x,st) \gets A_1^{\text{\sc VRFprove}(\cdot)}(pk);\\
            y_0=F_{sk}(x);\\
            y_1\gets \{0,1\}^{b(k)}; \\
            z\gets \{0,1\};\\
            z'\gets A_2^{\text{\sc VRFprove}(\cdot)}(y_z,st)
            \end{array} \right.
            \right] \; \leq \; \frac12 + negl(\kappa)
        \end{eqnarray*}
    \end{enumerate}

\end{definition}

This ensures that the output of a VRF is computationally indistinguishable from a random number even if the public key $pk$ and the function {\sc VRFprove} is revealed.

\section{Prediction inspired bribing attacks} 
\label{app:prediction_bribery}
Proof-of-work (PoW) protocols such as Nakamoto's protocol for Bitcoin achieve high security while maintaining a high unpredictability as to which miners can propose  future blocks. A very attractive feature of this PoW protocol is that nodes that mine a valid block have no further ability to update the block after they have solved the mining puzzle, since the nonce seals the block making it tamper-proof. Thus no node knows whether they have the power to propose the block till they solve the puzzle, and once they solve the puzzle, they have no future rights to alter the content. 

This causality is reversed in proof-of-stake (PoS) protocols: usually, the node that is eligible to propose a block knows {\em a priori} of its eligibility before proposing a block. This makes PoS protocols vulnerable to a new class of serious attacks not found in the PoW setting. We will show that a set of miners controlling an infinitesimal fraction of the stake can potentially completely undermine the security of the protocol. We demonstrate that the longer the prediction window, the more serious the attack space is. This raises an important questions as to whether it is possible to design a secure proof-of-stake protocol which has minimal prediction window. 

We point out that an existing work \cite{brown2019formal} has already raised this issue that PoS protocols are forced make a tradeoff between predictability and NaS attacks. While that work mainly concerned itself with incentive attacks, our concern here is adversarial attacks that compromise consensus. We point out that even in the adversarial setting, all provably secure PoS protocols have a long prediction window; the main contribution of this paper is to  design such a protocol and show its security up to 50\% of adversarial stake.

\subsection{Adaptive Adversaries and the VRF Attack}

\begin{figure}[ht]
    \centering
\includegraphics[width=0.8\textwidth]{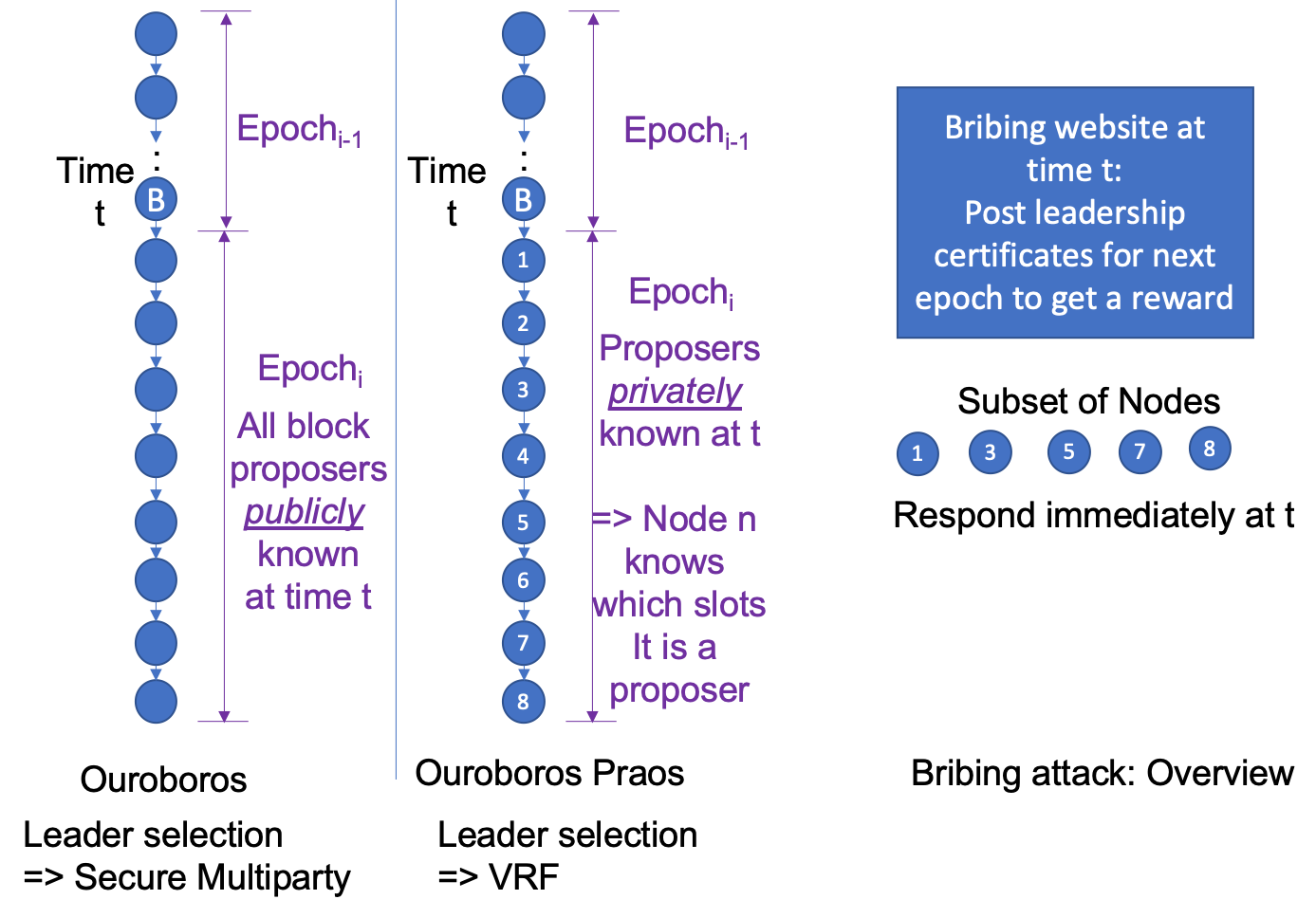}
\caption{Structure of bribing attack} \label{fig:bribing_structure}
\end{figure}

A popular model that has been proposed to capture the effect of future prediction in the PoS setting is the so-called adaptive adversary model \cite{david2018ouroboros, gilad2017algorand}. In the adaptive adversary model, a node remains honest unless corrupted by an adversary (who can change who they are corrupting based on the public state). The adversary has a bound on how many nodes it can corrupt at any given time. To defend against an adaptive adversary model, many protocols have moved from global predictability of future block proposers (i.e., everyone knows who the future block leaders are) \cite{kiayias2017ouroboros} to local predictability  (i.e., each miner knows when in the future they will propose a block) \cite{david2018ouroboros, gilad2017algorand}. The local predictability is achieved using a Verifiable Random Function (VRF) \cite{micali1999verifiable} based leader-election. 

However, the adaptive adversary model assumes that miners do not have any independent agency but rather only get corrupted based on an adversary's instructions. An adversary can easily circumvent this assumption by establishing a website where it can offer a bribe to anyone who posts their credentials for proposing blocks in an upcoming epoch of time. Thus even when the node's future proposer status is not public knowledge, this bribing website can solicit such information and help launch serious attacks (see Fig.~\ref{fig:bribing_structure}).

\subsection{Longest Chain protocols}
   We first consider longest-chain PoS protocols, in order to demonstrate our prediction attacks. We begin with a definition of prediction window of a protocol.
   
\begin{definition}[$W$-predictable]
Given a PoS protocol $\Pi_{\rm PoS}$, let $\mathcal{C}$ be a valid blockchain ending with block $B$ with a time stamp $t$. We say a block $B$ is $w(B)$-predictable, if
there exists a time $t_1>t$ and a block $B_1$ with a time stamp $t_1$ such that $(i)$ $B_1$ can be mined by miner (using its private state and the common public state) at time $t$; and 
$(ii)$ $B_1$ can be appended to $\mathcal{C'}$ to form a valid blockchain for any valid chain
$\mathcal{C'}$ that extends  $\mathcal{C}$ by appending  $w-1$ 
 valid blocks with time stamps  within the interval $(t,t_1)$. By taking the maximum over the prediction parameter over all blocks in $\Pi_{\rm PoS}$, i.e., let $W = \max_{B} w(B)$, we say $\Pi_{\rm PoS}$ is $W$-predictable.  $W$ is the size of the prediction window measured in units of number of blocks. 
\label{def:predapp}
\end{definition}

We note that our definition is similar to the definition of $W$-locally predictable protocols in \cite{brown2019formal}. We note furthermore that longest chain protocols also have a $\kappa$-deep confirmation policy, where a block embedded deep enough is deemed to be confirmed. 

\subsection{Prediction Attack on $W$-predictable protocols}
  Let us consider a $W$-predictable protocol, where the prediction window $W$ is longer than the confirmation window $\kappa$. We note that the prediction window of many existing protocols are quite large, as demonstrated in Table~\ref{tbl:intro} and therefore this is a reasonable assumption. We will consider the alternative case ($W << \kappa$ in the upcoming subsection).
  
\begin{figure}[ht]
    \centering
\includegraphics[width=0.8\textwidth]{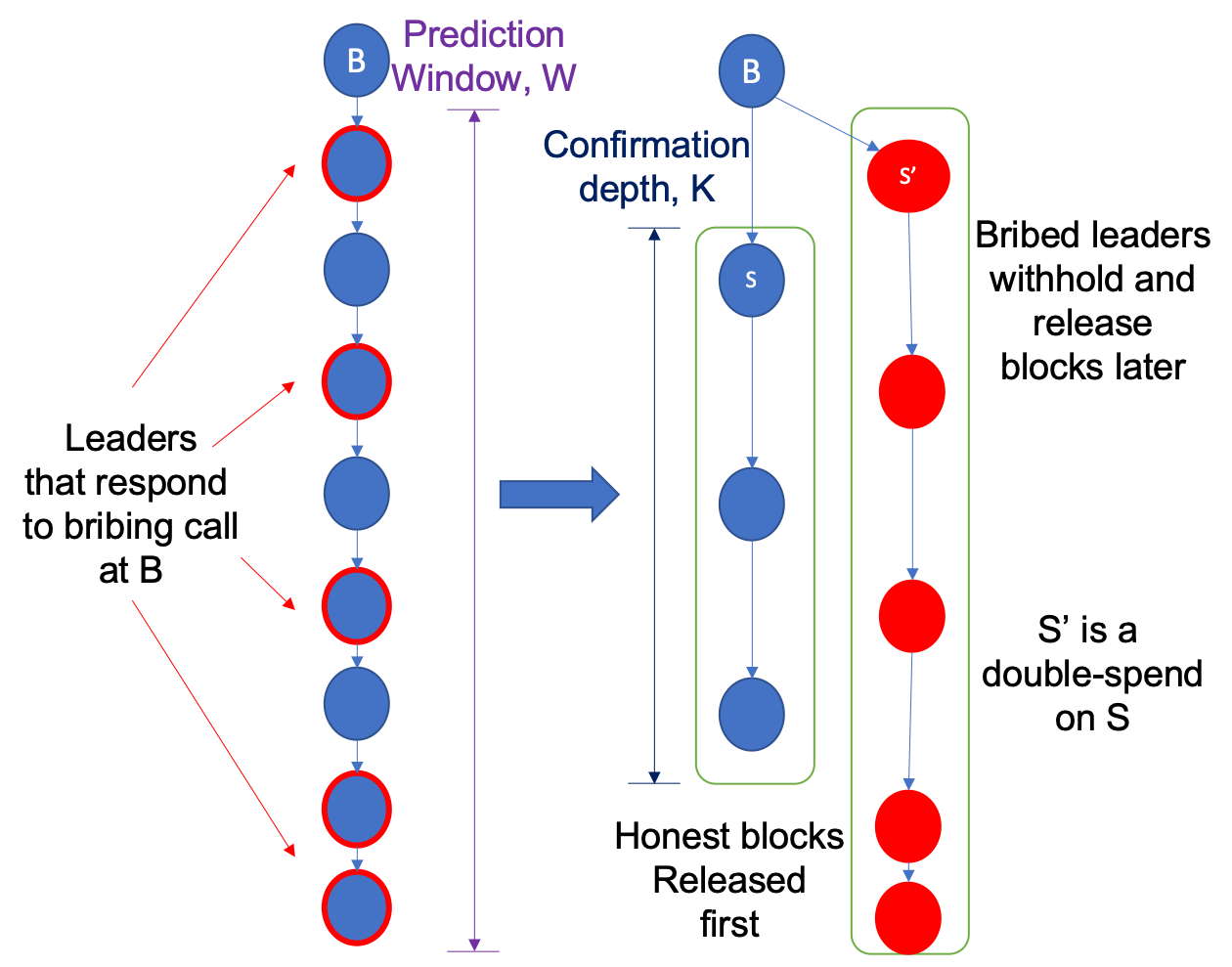}
\caption{Prediction attack on Ouroboros Praos.} \label{fig:ouroboros}
\end{figure}
    Consider block B that has been mined at time $t$ (assume that B is $W$-predictable, since such a block exists by definition). The adversary launches a prediction attack by launching a website where it announces a reward for miners which possess a future block proposal slot. Some of the leaders respond to this call (these are shown with a red outer circle in Fig.~\ref{fig:ouroboros}). We note that while the adversary requires $\kappa+1$ leaders out of $2\kappa+1$ slots to respond to the bribe, the total stake represented by these bribed leaders can be a very tiny fraction of the total stake. The adversary bribes these leaders to sign a forked version of the blockchain that it hoards till the block $s$ is confirmed by a $\kappa$-deep honest chain. After that point, the adversary releases the hoarded blockchain to all the users thus switching the longest chain and confirming a block $s'$ (which contains a double spend) instead of $s$. We note that this attack is indistinguishable from network delay since none of the bribed leaders sign multiple blocks with a single leadership certificate - thus nodes that participate in the bribing attack have plausible deniability.

\subsubsection{Bribing can enlarge prediction window }
\begin{figure}[ht]
    \centering
\includegraphics[width=0.6\textwidth]{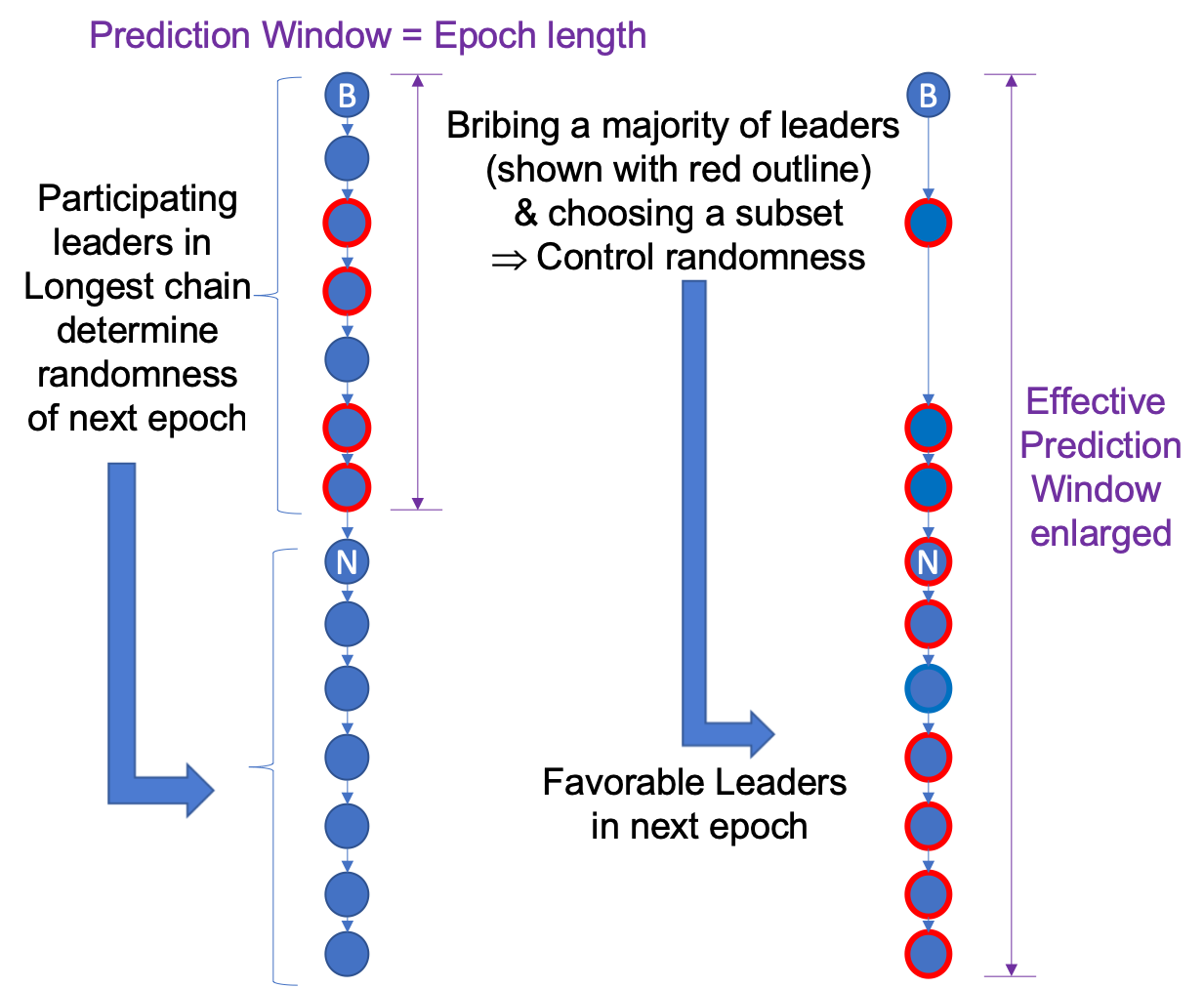}
\caption{Randomness grinding enlarges prediction ability.}
\end{figure}

We note that the previous attack applies whenever the prediction window $W$ is larger than the confirmation depth $\kappa$, this attack can be launched. However, here we will briefly note that even when $W$ is smaller than $\kappa$, the protocols still have a prediction problem. This is because in longest-chain PoS protocols such as Ouroboros, Praos, Snow White, the randomness is updated every epoch and a key assumption for the updated randomness to be unbiased is that a majority of the previous leaders were honest. However, by bribing the previous leaders, the adversary can bias the randomness (for example, by choosing a subset of proposers), thus leading itself to more favorable leadership slots in the upcoming epoch (and effectively enlarging the prediction window). These protocols offer no protection against these bribed randomness grinding attacks and hence their security parameter is limited to be of the same order as the epoch size. We note that while our proposed protocol has a similar structure as Ouroboros, our analysis proves security against adversarial randomness grinding (the so-called NaS attack).

\subsection{Prediction Attack on BFT-based PoS protocols}
\label{app:bftprediction}
While we have focused on longest chain PoS protocols in the previous section, here we consider the other large family of PoS protocols that are based on Byzantine Fault Tolerant consensus (BFT). Some PoS-based BFT protocols work with the same committee for many time-slots thus giving raise to prediction based attacks. Furthermore, in many BFT protocols, a single leader proposes blocks till evicted for wrong-doing \cite{yin2018hotstuff}, thus making the prediction problem worse. Among BFT-based PoS protocols, the one with the least prediction window is Algorand  \cite{chen2016algorand,gilad2017algorand}. We will demonstrate a fatal prediction attack on Algorand (other BFT protocols which are even more predictable are naturally attacked as well). 

\begin{figure}[ht]
    \centering
\includegraphics[width=0.8\textwidth]{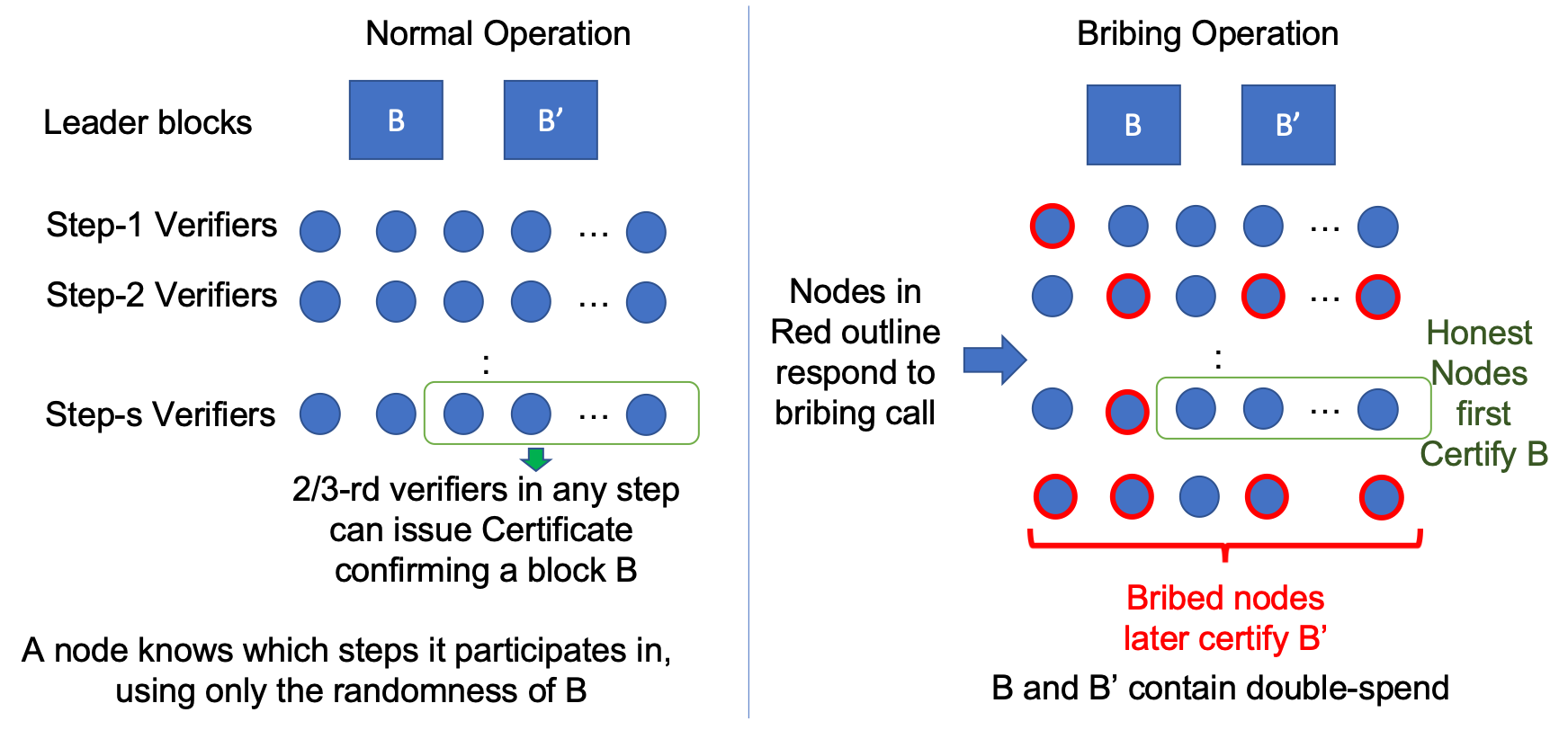}
\caption{Algorand Bribing Attack}
\end{figure}

In Algorand, at each round, a set of leaders and many sets of committees are elected using VRF from the previous finalized block. The leaders and committee members can construct their membership certificates from the previous finalized block's randomness, and others do not know their identities till they reveal themselves. The BFT consensus process proceeds in steps and each step is run by a different committee - a block is considered finalized if it is voted on by a $2/3$-majority of committee members at any step.

 This feature is used to prove that Algorand is secure against adaptive adversaries. However, we show that Algorand is not immune to the bribing-based prediction attack (similar to the one for longest-chain protocols). Since the protocol does not use a sequence of blocks for confirmation, the prediction window defined for longest-chain protocols is not the appropriate measure for prediction-attacks in this protocol. Rather, the appropriate measure is what fraction of the committee participants are locally known when a block is proposed. For Algorand, all of the committee participants are known to themselves. 

 Suppose in a given round, there are two leaders blocks $B$ and $B'$ (the latter block may contain a double-spend relative to $B$). The adversary solicits using a website the committee members to post their certificates. If the adversary is able to obtain a $2/3$-quorum in any step, then the adversary can use that quorum to sign a different certificate than the one the honest nodes signed. In particular, the adversary waits for a honest quorum to sign a certificate for $B$ in order to confirm the transaction and then signs a bribed quorum from a different step to certify $B'$. This enables the adversary to reverse a confirmation. We note that none of the bribed miners have to double sign a block since they would not have been elected in other steps of the quorum. 
 
 We note that establishing a stalling-attack is even easier in this model - the adversary needs a $1/3$-fraction of committee members in a step to remain silent in order to stall the progress of Algorand. Given that miners are randomly sampled, this quorum may hold a very small fraction of stake (which the adversary may compensate for using its bribe). This stalling attack can be used to launch extortion attacks demanding a large amount of money to un-stall the network.

\subsection{Summary of Prediction Attacks}
 We have demonstrated that both longest-chain and BFT based protocols are highly vulnerable to prediction-based security attacks (compromising the safety and liveness of the system). This motivates the design of a new PoS protocol that can only be predicted with a look ahead window, much shorter than the confirmation window.

\section{Proofs for \S5}
\label{app:proofs}
\subsection{Proof of Lemma \ref{lem:infinite_E}}
\label{app:proof_0_delay}

In this proof, we renormalize time such that $\lambda_h = 1$, and we set $\lambda_a = \lambda$ to simplify notations.

The random processes of interest start from time $0$. To look at the system in stationarity, let us extend them to $-\infty < t < \infty$. More specifically, define $\tau_{-1}, \tau_{-2}, \ldots$ such that together with $\tau_0, \tau_1, \ldots$ we have a double-sided infinite Poisson process of rate $1$. Also, for each $i < 0$, we define an  independent copy of a random adversarial tree $\tt_i$ with the same distribution as $\tt_0$. 

These extensions allow us to extend the definition of $E_{ij}$ to all $i,j$, $-\infty < i< j < \infty$, and define $E_j$ to be:
$$ E_j = \bigcap_{i< j} E_{ij}.$$

Note that $E_j \subset F_j$, so to prove that $F_j$ occurs for infinite many $j$'s with probability $1$, it suffices to prove that $E_j$ occurs for infinite many $j$'s with probability $1$. This is proved in the following.

Define $R_j = \tau_{j+1}- \tau_j$ and 
\begin{equation}
\zz_j = (R_j, \tt_j).    
\end{equation}
Consider the i.i.d. process $\{\zz_j\}_{-\infty < j < \infty}$.
Now,
\begin{eqnarray*}
E_{ij} &= & \mbox{event that $D_i(t - \tau_i) < A_h(t) - A_h(\tau_i)$ for all $t > \tau_j$}\\
&= & \mbox{event that $D_i(\tau_k - \tau_i) < A_h({\tau_k}^-) - A_h(\tau_i)$ for all $k > j$} \\
&= & \mbox{event that $D_i(\tau_k - \tau_i) < A_h(\tau_k) - A_h(\tau_i) - 1$ for all $k > j$}\\
&= & \mbox{event that $D_i(\tau_k - \tau_i) < k-i-1$ for all $k > j$}\\
&= & \mbox{event that $D_i(\sum_{m = i}^{k-1} R_m) < k - i - 1$ for all $k > j$}
\end{eqnarray*}
Hence $E_j = \cap_{i < j} E_{ij}$ has a time-invariant dependence on $\{\zz_i\}$. This means that $p = P(E_j)$ does not depend on $j$. Since $\{\zz_j\}$ is i.i.d. and in particularly ergodic,  with probability $1$, the long term fraction of $j$'s for which $E_j$ occurs is $p$, which is nonzero if $p \not = 0$. This is the last step to prove. 

Let
\begin{equation}
    E_0 = \mbox{event that $D_i(\sum_{m = i}^{k-1} R_m) < k - i-1$ for all $k > 0$ and $i < 0$}
\end{equation}
and 
\begin{equation}
\label{eq-Bik}
    B_{ik} = \mbox{event that $D_i(\sum_{m = i}^{k-1} R_m) \ge  k - i-1$}
\end{equation}
then
\begin{equation}
    E_0^c = \bigcup_{k >0, i < 0}  B_{ik}.
\end{equation}
Let us fix a particular $n >0$, and define:
\begin{equation}
    G_n = \mbox{event that $R_m < \frac{1}{n}$ for $m = -n, -n+1, \ldots, -1, 0, +1, \ldots, n-1$}
\end{equation}
Then 
\begin{eqnarray}
P(E_0) & \ge & P(E_0|G_n)P(G_n)\\
& = & \left ( 1 - P(\cup_{k>0,i<0}
B_{ik}|G_n) \right) P(G_n)\\
& \ge & \left ( 1 - \sum_{k>0,i<0} P(B_{ik}|G_n) \right) P(G_n)\\
& \ge &  ( 1 - a_n - 2 b_n - c_n) P(G_n)
\label{eq:e0_bound}
\end{eqnarray}
where
\begin{eqnarray}
a_n & := & \sum_{(i,k): -n \le i < 0 < k \le n} P(B_{ik}|G_n)\\
b_n & := & \sum_{(i,k):  -n \le i < 0 <  n < k}P(B_{ik}|G_n)\\
c_n & := & \sum_{(i,k):  i < -n <  n < k}P(B_{ik}|G_n).
\end{eqnarray}
The last inequality (\ref{eq:e0_bound}) comes from the fact $P(B_{ik}|G_n) = P(B_{-k,-i}|G_n)$.


Using (\ref{eq:Ofertheo}), we can bound $P(B_{ik}|G_n)$. Consider three cases:

\noindent
{\bf  Case 1:} $-n \le i <0 <  k \le n$:

For $k-i < \sqrt[4]{n}$, we have that
\begin{eqnarray}
P(B_{ik}|G_n) & = & P(D_i(\sum_{m = i}^{k-1} R_m) \ge  k - i - 1|G_n)  
\leq P(D_i(\frac{1}{\sqrt[4]{n^3}}) \geq 1)\nonumber\\
&= & P(\Gamma_\lambda \leq \frac{1}{\sqrt[4]{n^3}}) = 1-e^{-\frac{\lambda}{\sqrt[4]{n^3}}} <  \frac{\lambda}{\sqrt[4]{n^3}},
\end{eqnarray}
where $\Gamma_\lambda \sim \text{Exp}(\lambda)$.

For $k-i \geq \sqrt[4]{n}$, we have that
\begin{eqnarray}
P(B_{ik}|G_n) & = & P(D_i(\sum_{m = i}^{k-1} R_m) \ge  k - i - 1|G_n)  \nonumber \\
&\leq & P(D_i(\frac{k-i}{n}) \geq \sqrt[4]{n}-1) \nonumber \\
&\leq & P(D_i(2) \geq \sqrt[4]{n}-1) 
\leq  e^{-\sqrt[4]{n}+1 + 2e\lambda},
\end{eqnarray}
where the last inequality follows from (\ref{eq:Ofertheo}).  Summing these terms, we have:

\begin{eqnarray*}
a_n & = & \sum_{(i,k): -n \le i < 0 < k \le n} P(B_{ik}|G_n)\\
& \le & {\sum_{(i,k): -n \le i < 0 < k \le n, k-i < \sqrt[4]{n}} \frac{\lambda}{\sqrt[4]{n^3}}} +  {\sum_{(i,k): -n \le i < 0 < k \le n, k-i \geq \sqrt[4]{n}} e^{-\sqrt[4]{n}+1 + 2e\lambda}} \\
& \le & \frac{\lambda}{\sqrt[4]{n}} +  {\sum_{(i,k): -n \le i < 0 < k \le n, k-i \geq \sqrt[4]{n}} e^{-\sqrt[4]{n}+1 + 2e\lambda}} : = \bar{a}_n
\end{eqnarray*}
which is bounded and moreover $\bar{a}_n\rightarrow 0$ as $n \rightarrow \infty$.

\noindent
{\bf Case 2:} $-n \le i < 0 < n < k$:

    \begin{eqnarray*}
    &&P(B_{ik}|G_n)\\
    &\le & P\left (B_{ik}|G_n,2 + \sum_{m = n}^{k-1} R_m < (k-i-1) \frac{1+\lambda e}{2\lambda e} \right)\\
    && + P\left (2+  \sum_{m = n}^{k-1} R_m > (k-i-1)  \frac{1+\lambda e}{2\lambda e} \right) \\
    & \le &  e^{-\frac{1-\lambda e}{2}(k-i-1)} + A_1 e^{-\alpha (k-i-1)}
        \end{eqnarray*}
        for some positive  constants $A_1, \alpha$ independent of $n,k,i$. The first term in the last inequality follows from (\ref{eq:Ofertheo}), and the second term follows from the fact that $(1+\lambda e)/(2\lambda e) > 1$ and the $R_i$'s have mean $1$. 
Summing these terms, we have:
\begin{eqnarray*}
b_n & = & \sum_{(i,k):  -n \le i < 0 <  n < k}P(B_{ik}|G_n) \\
& \le  & \sum_{(i,k):  -n \le i < 0 <  n < k} \left [e^{-\frac{1-\lambda e}{2}(k-i-1)} + A_1 e^{-\alpha (k-i-1)}\right]: = \bar{b}_n 
\end{eqnarray*}
which is bounded and moreover $\bar{b}_n \rightarrow 0$ as $n \rightarrow \infty$.

\noindent
{\bf Case 3:} $i < -n <  n < k$:

    \begin{eqnarray*}
    &&P(B_{ik}|G_n)\\
    &\le & P\left (B_{ik}|G_n,2+ \sum_{m = -i}^{-n-1} R_m + \sum_{m = n}^{k-1} R_m < (k-i-1) \frac{1+\lambda e}{2\lambda e} \right)\\
    && + P\left (2+ \sum_{m = -i}^{-n-1} R_m + \sum_{m = n}^{k-1} R_m > (k-i-1)  \frac{1+\lambda e}{2\lambda e} \right) \\
    & \le &  e^{-\frac{1-\lambda e}{2}(k-i-1)} + A_2 e^{-\alpha (k-i-1)}
        \end{eqnarray*}
        for some positive  constants $A_2, \alpha$ independent of $n,k,i$. The first term in the last inequality follows from (\ref{eq:Ofertheo}), and the second term follows from the fact that $(1+\lambda e)/(2\lambda e) > 1$ and the $R_i$'s have mean $1$. 
        
Summing these terms, we have:
\begin{eqnarray*}
c_n & = & \sum_{(i,k):  i < -n <  n < k}P(B_{ik}|G_n) \\
& \le  & \sum_{ (i,k): i < -n <  n < k} \left [e^{-\frac{1-\lambda e}{2}(k-i-1)} + A_2 e^{-\alpha (k-i-1)}\right] := \bar{c}_n
\end{eqnarray*}
which is bounded and moreover $\bar{c}_n \rightarrow 0$ as $n \rightarrow \infty$.

Substituting these bounds in (\ref{eq:e0_bound}) we finally get:
\begin{equation}
    P(E_0) > [1- (\bar{a}_n+ 2\bar{b}_n+\bar{c}_n)]P(G_n)
\end{equation}
By setting $n$ sufficiently large such that $\bar{a}_n,\bar{b}_n$ and $\bar{c}_n$ are sufficiently small, we conclude that $P(E_0)> 0$.

\subsection{Proof of Lemma \ref{lem:infinite_many_F}}
\label{app:proof_delay}

In this proof, we fix $g = e^{-\lambda_h \Delta}$ and renormalize time such that $\lambda_h = 1$, and we set $\lambda_a = \lambda$ to simplify notations.

The random processes of interest start from time $0$. To look at the system in stationarity, let us extend them to $-\infty < t < \infty$. More specifically, define $\tau_{-1}, \tau_{-2}, \ldots$ such that together with $\tau_0, \tau_1, \ldots$ we have a double-sided infinite Poisson process of rate $1$. Also, for each $i < 0$, we define an  independent copy of a random adversarial tree $ \tt_i$ with the same distribution as $\tt_0$.

These extensions allow us to extend the definition of $\hat{E}_{ij}$ to all $i,j$, $-\infty < i< j < \infty$, and define $\hat{E}_j$ and $\hat{V}_j$ to be:
$$ \hat{E}_j = \bigcap_{i < j} \hat{E}_{ij}$$ and 
$$ \hat{V}_j = \hat{E}_j \cap U_j.$$

Note that $\hat{V}_j \subset \hat{U}_j$, so to prove that $\hat{U}_j$ occurs for infinite many $j$'s with probability $1$, it suffices to prove that $\hat{V}_j$ occurs for infinite many $j$'s with probability $1$. This is proved in the following.



Define $R_j = \tau_{j+1}- \tau_j$ and 
\begin{equation*}
\zz_j = (R_j, \tt_j).    
\end{equation*}
Consider the i.i.d. process $\{\zz_j\}_{-\infty < j < \infty}$.
Now,
\begin{eqnarray*}
U_j \cap \hat{E}_{ij} &= & U_j \cap \mbox{event that $D_i(t - \tau_i) < H_h(t-\Delta) - H_h(\tau_i)$ for all $t > \tau_j + \Delta$}\\
&= & U_j \cap \mbox{event that $D_i(t +\Delta - \tau_i) < H_h(t) - H_h(\tau_i)$ for all $t > \tau_j$}\\
&= & U_j \cap \mbox{event that $D_i({\tau_k}^- +\Delta - \tau_i) < H_h({\tau_k}^-) - H_h(\tau_i)$ for all $k > j$}\\
&= & U_j \cap \mbox{event that $D_i(\sum_{m = i}^{k-1} R_m + \Delta) < H_h(\tau_{k-1} ) - H_h(\tau_i)$ for all $k > j$}
\end{eqnarray*}
Hence $\hat{E}_j \cap U_j = \bigcap_{i < j} \hat{E}_{ij} \cap U_j$ has a time-invariant dependence on $\{\zz_i\}$, which means that $p = P(\hat{V}_j)$ does not depend on $j$. Since $\{\zz_j\}$ is i.i.d. and in particularly ergodic,  with probability $1$, the long term fraction of $j$'s for which $\hat{V}_j$ occurs is $p$, which is nonzero if $p \not = 0$. This is the last step to prove.


\begin{equation*}
    P(\hat{V}_0) = P(\hat{E}_0|U_0)P(U_0) = P(\hat{E}_0|U_0)P(R_0>\Delta)P(R_{-1}>\Delta) = g^2 P(\hat{E}_0|U_0).
\end{equation*}

It remains to show that $P(\hat{E}_0|U_0)>0$.

\begin{equation}
    \hat{E}_0 = \mbox{event that $D_i(\sum_{m = i}^{k-1} R_m + \Delta) < H_h(\tau_{k-1} ) - H_h(\tau_i)$ for all $k > 0$ and $i < 0$}
\end{equation}
Let 
\begin{equation}
    \hat{B}_{ik} = \mbox{event that $D_i(\sum_{m = i}^{k-1} R_m + \Delta) \ge  H_h(\tau_{k-1} ) - H_h(\tau_i)$}
\end{equation}
then
\begin{equation}
    \hat{E}_0^c = \bigcup_{k >0, i < 0}  \hat{B}_{ik}.
\end{equation}
Let us fix a particular $n > 2\Delta > 0$, and define:
\begin{equation}
    G_n = \mbox{event that $D_m(3n) = 0 $ for $m = -n, -n+1, \ldots, -1, 0, +1, \ldots, n-1,n$}
\end{equation}
Then 
\begin{eqnarray}
P(\hat{E}_0 | U_0) & \ge & P(\hat{E}_0|U_0,G_n)P(G_n|U_0)\\
& = & \left ( 1 - P(\cup_{k>0,i<0} \hat{B}_{ik}|U_0,G_n) \right) P(G_n|U_0)\\
& \ge & \left ( 1 - \sum_{k>0,i<0} P(\hat{B}_{ik}|U_0,G_n) \right) P(G_n|U_0)\\
& \ge &  ( 1 - a_n - b_n) P(G_n|U_0) \label{eqn:up_bound}
\end{eqnarray}
where
\begin{eqnarray}
a_n & := & \sum_{(i,k): -n \le i < 0 < k \le n} P(\hat{B}_{ik}|U_0,G_n)\\
b_n & := & \sum_{(i,k):  i < -n \text{~or~} k > n }P(\hat{B}_{ik}|U_0,G_n).
\end{eqnarray}

Using (\ref{eq:Ofertheo}), we can bound $P(\hat{B}_{ik}|U_0, G_n)$. Consider two cases:

\noindent
{\bf  Case 1:} $-n \le i <0 <  k \le n$:
\begin{eqnarray*}
P(\hat{B}_{ik}|U_0,G_n) &=& P(\hat{B}_{ik}|U_0, G_n,\sum_{m = i}^{k-1} R_m + \Delta \leq 3n) + P(\sum_{m = i}^{k-1} R_m + \Delta > 3n |U_0,G_n) \\
& \leq & P(\sum_{m = i}^{k-1} R_m + \Delta > 3n |U_0,G_n) \\
& \leq & P(\sum_{m = i}^{k-1} R_m > 5n/2 |U_0) \\
& \leq & P(\sum_{m = i}^{k-1} R_m  > 5n/2)/P(U_0)  \\
& \leq & A_1 e^{-\alpha_1 n} 
\end{eqnarray*}
for some positive  constants $A_1, \alpha_1$ independent of $n,k,i$. The last inequality follows from the fact that $R_i$'s are iid 
exponential random variables of  mean $1$. Summing these terms, we have:

\begin{eqnarray*}
a_n & = & \sum_{(i,k): -n \le i < 0 < k \le n} P(B_{ik}|U_0, G_n)
\leq \sum_{(i,k): -n \le i < 0 < k \le n} A_1 e^{-\alpha_1 n} : = \bar{a}_n,
\end{eqnarray*}
which is bounded and moreover $\bar{a}_n\rightarrow 0$ as $n \rightarrow \infty$.

\noindent {\bf  Case 2:} $k>n \text{~or~} i<-n$:

For $0<\varepsilon<1$, let us define event $W^{\varepsilon}_{ik}$ to be:
\begin{equation*}
    W^{\varepsilon}_{ik} = \mbox{event that $H_h(\tau_{k-1}) - H_h(\tau_i) \geq (1-\varepsilon)g(k-i-1)$}.
\end{equation*}
Then we have
\begin{equation*}
    P(\hat{B}_{ik}|U_0,G_n) \leq P(\hat{B}_{ik}|U_0, G_n,W^{\varepsilon}_{ik}) + P({W^{\varepsilon}_{ik}}^c |U_0,G_n).
\end{equation*}

Let $X_j$ be a Bernoulli random variable such that $X_j = 1$ if and only if $R_{j-1} > \Delta$, i.e., the $j$-th honest block is a non-tailgater. Since $R_j$'s are i.i.d exponential random variables with mean $1$, we have that $X_j$'s are also i.i.d and $P(X_j = 1)=g$.
By the definition of $H_h(\cdot)$, we have $H_h(\tau_{k-1}) - H_h(\tau_i) = \sum_{j=i+1}^{k-1} X_j$, then
\begin{eqnarray}
    P({W^{\varepsilon}_{ik}}^c |U_0,G_n) &=& P(\sum_{j=i+1}^{k-1} X_j < (1-\varepsilon)g(k-i-1)|X_0 = 1, X_{1} = 1 )  \nonumber \\
    & = &P(\sum_{j=i+1}^{-1} X_j + \sum_{j=2}^{k-1} X_j < (1-\varepsilon)g(k-i-1) - 2 )  \nonumber \\
    & \leq & P(\sum_{j=i+1}^{-1} X_j + \sum_{j=2}^{k-1} X_j < (1-\varepsilon)g(k-i-3)  )  \nonumber \\
    & \leq & A_2 e^{-\alpha_2 (k-i-3)}
    \label{eqn:prob_event}
\end{eqnarray}
for some positive  constants $A_2, \alpha_2$ independent of $n,k,i$. The last inequality follows from the Chernoff bound.

Meanwhile, we have 
\begin{eqnarray*}
&~&P(\hat{B}_{ik}|U_0, G_n,W^{\varepsilon}_{ik})  \\
&\leq& P(D_i(\sum_{m = i}^{k-1}R_m + \Delta) \geq (1-\varepsilon)g(k-i-1)) | U_0,G_n,W^{\varepsilon}_{ik})  \\
& \leq & P(D_i(\sum_{m = i}^{k-1}R_m + \Delta) \geq (1-\varepsilon)g(k-i-1))\\
&&\qquad\qquad \qquad \qquad | \; U_0,G_n,W^{\varepsilon}_{ik},\sum_{m = i}^{k-1}R_m + \Delta \leq (k-i-1)\frac{g+\lambda e}{2\lambda e}) \\
&+& P(\sum_{m = i}^{k-1}R_m + \Delta > (k-i-1)\frac{g+\lambda e}{2\lambda e}|U_0,G_n,W^{\varepsilon}_{ik}) \\
&\leq& e^{-\frac{(1-2\varepsilon)g-\lambda e}{2} (k-i-1)} + P(\sum_{m = i}^{k-1}R_m + \Delta > (k-i-1)\frac{g+\lambda e}{2\lambda e}|U_0,G_n,W^{\varepsilon}_{ik})
\end{eqnarray*}
where the first term in the last inequality follows from (\ref{eq:Ofertheo}), and the second term can also be bounded:

\begin{eqnarray*}
&~&P(\sum_{m = i}^{k-1}R_m + \Delta > (k-i-1)\frac{g+\lambda e}{2\lambda e}|U_0,G_n,W^{\varepsilon}_{ik})\\
&=& P(\sum_{m = i}^{k-1}R_m + \Delta > (k-i-1)\frac{g+\lambda e}{2\lambda e}|U_0, W^{\varepsilon}_{ik}) \\
&\leq& P(\sum_{m = i}^{k-1}R_m + \Delta > (k-i-1)\frac{g+\lambda e}{2\lambda e})/P(U_0, W^{\varepsilon}_{ik}) \\
&\leq& A_3 e^{-\alpha_3(k-i-1)}
\end{eqnarray*}
for some positive  constants $A_3, \alpha_3$ independent of $n,k,i$. The last inequality follows from the fact that $(g+\lambda e)/(2\lambda e) > 1$ and the $R_i$'s have mean $1$, while $P(U_0, W^{\varepsilon}_{ik})$ is a event with high probability as we showed in (\ref{eqn:prob_event}).

Then we have 
\begin{equation*}
     P(\hat{B}_{ik}|U_0,G_n) \leq A_2 e^{-\alpha_2 (k-i-3)} + e^{-\frac{(1-2\varepsilon)g-\lambda e}{2} (k-i-1)} + A_3 e^{-\alpha_3(k-i-1)}.
\end{equation*}
Summing these terms, we have:
\begin{eqnarray*}
b_n & = & \sum_{(i,k):  i<-n \text{~or~} k > n}P(\hat{B}_{ik}|U_0,G_n) \\
& \le  & \!\!\!\!
\sum_{ (i,k): i<-n \text{~or~} k > n} \left [A_2 e^{-\alpha_2 (k-i-3)} + e^{-\frac{(1-2\varepsilon)g-\lambda e}{2} (k-i-1)} + A_3 e^{-\alpha_3(k-i-1)}\right] := \bar{b}_n
\end{eqnarray*}
which is bounded and moreover $\bar{b}_n \rightarrow 0$ as $n \rightarrow \infty$ when we set $\varepsilon$ to be small enough such that $(1-2\varepsilon)g-\lambda e > 0$.

Substituting these bounds in (\ref{eqn:up_bound}) we finally get:
\begin{equation}
    P(\hat{E}_0|U_0) > [1- (\bar{a}_n+ \bar{b}_n)]P(G_n|U_0)
\end{equation}
By setting $n$ sufficiently large such that $\bar{a}_n$ and $\bar{b}_n$ are sufficiently small, we conclude that $P(\hat{V}_0)> 0$.

{
}

\section{Growth rate of $c$-correlated adversary tree}
\label{app:growth_c}


We set $\lambda_a = \lambda$ in this section to simplify notations.

The adversary is growing a private chain over the genesis block, under the $c$-correlation.
As illustrated in Fig.~\ref{fig:ccorrelation}, the common source of randomness at a block
is only updated when the depth is a multiple of $c$
(Algorithm~\ref{alg:PoS} line:\ref{algo:c-correlation}).
We refer to such a block $b$ with ${\rm depth}(b)\%c = 0$ as
a {\em godfather-block}.
\begin{align*}
&{\rm RandSource}(b):= \nonumber\\ 
&\begin{cases}
\text{\sc VRFprove}\big({\rm RandSource}(\text{parent}(b)), {\rm time}, {\sf VRF}.sk(b)\big) , &\quad \text{if } {\rm depth}\big(b\big)\%c=0,\\
{\rm RandSource}(\text{parent}(b)), &\quad \text{otherwise}
\end{cases}
\end{align*}
The randomness of a block changes only at \textit{godfather-blocks}. In other words, for $n\in\mathbb{Z}$, blocks along a chain at depths $\big\{nc, nc+1, nc+2,\cdots,nc+c-1\big\}$ share a common random number.
Two blocks are called {\em siblings-blocks} if they have the same parent block.
Given this shared randomness, the adversary now has a freedom to choose where to place the newly generated blocks. The next theorem provides a dominant strategy, that creates the fastest growing private tree.

\begin{lemma}
\label{lemma:optimal_strategy_ccorrelation}
Under $c$-correlation,
the optimal adversarial strategy to grow the tree fast is to only fork at the parents of godfather-blocks. 
\end{lemma}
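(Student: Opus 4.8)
The plan is to argue by an exchange/coupling argument that any adversarial placement of a forking block at a position other than the parent of a godfather-block can be modified to a placement at the nearest upstream parent-of-godfather without decreasing the growth rate of the private tree. First I would set up the relevant structure precisely: under $c$-correlation, a maximal run of $c$ consecutive depths $\{nc, nc+1,\dots,nc+c-1\}$ shares one common random source, and the godfather-block at depth $nc$ is the block whose hash seeds the random source for the next run $\{(n+1)c,\dots\}$. The key observation is that, conditioned on which random source governs a given depth, the set of coins that win the election at that depth (and hence the ability to extend) depends only on that shared random number and the candidate block's header (time, ${\sf VRF}.sk$). In particular, two sibling-blocks sharing the same parent and the same random source are interchangeable as ``platforms'' for all downstream growth within the current run — so forking off the parent of a godfather-block, rather than off some block deeper in a run, gives the adversary strictly more independent random sources to grind on.

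The core step is the following exchange argument. Suppose the optimal strategy forks at some block $b$ whose depth is $nc + r$ with $1 \le r \le c-1$ (i.e.\ $b$ is not the parent of a godfather-block, which would require depth $\equiv c-1 \pmod c$). Let $b'$ be the ancestor of $b$ at depth $nc$ — the godfather-block beginning $b$'s run (or more precisely, I would take $b'$ to be the parent of the godfather-block governing $b$'s run). All blocks from $b'$ down to $b$ share the same random source, so any chain the adversary could grow starting from the fork at $b$ can be ``lifted'' to start from a fork at $b'$'s child: the same VRF evaluations remain valid because the random source is unchanged, and the only change is that the new subtree is now rooted $r$ levels higher, and it reaches a fresh godfather-block (and hence a fresh, independently grind-able random source) $r$ levels sooner. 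Because each fresh godfather-block multiplies the adversary's options, moving the fork upstream to the parent of a godfather-block dominates; iterating this exchange over all forks proves the claim. I would formalize ``dominates'' by coupling the two tree-growth processes so that the depth process of the forked-upstream tree stochastically dominates that of the original at every time $t$.

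The main obstacle I anticipate is making the coupling rigorous in the presence of the branching structure: the adversary is not growing a single chain but an entire tree, and a fork deep inside a run can in principle be ``cheaper'' in real time because fewer blocks must be generated before reaching the fork point. I would handle this by comparing the two strategies at the level of the \emph{number of distinct random sources reachable by a given depth}, showing that the forked-upstream strategy reaches any target depth with a (stochastically) no-larger appearance time, using the fact that the first $r$ blocks of the lifted subtree can reuse the elections the original strategy had already spent reaching $b$. A secondary subtlety is the boundary case near the genesis block and the dependence of godfather-block randomness on the block's own ${\sf VRF}.sk$ and time; I would note that since the adversary controls all blocks in the private tree, it may freely re-sign and re-time intermediate blocks, so these degrees of freedom only help the forked-upstream strategy. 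Combining the exchange argument with this monotonicity in the number of available random sources yields that the optimal strategy forks only at parents of godfather-blocks.
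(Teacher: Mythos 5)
Your central observation---that all blocks in a run of $c$ consecutive depths carry one common random source, so leader elections on sibling non-godfather platforms are completely dependent and those platforms are interchangeable---is exactly the heart of the matter, and it is what the paper's proof rests on. But the exchange argument you build on it goes in the wrong direction and its key justification is false. You take a fork at a block $b$ at depth $nc+r$ and ``lift'' the forked subtree so that it is rooted $r$ levels shallower (at the child of the block at depth $nc$, or of its parent), claiming the lifted subtree ``reaches a fresh godfather-block $r$ levels sooner.'' It is the opposite: rooted $r$ levels shallower, the subtree needs $r$ \emph{more} block arrivals before it hits the next depth that is a multiple of $c$, so fresh, independently grindable randomness arrives later, and every block of the re-rooted subtree sits at a smaller absolute depth, so this transformation can only decrease, never increase, the depth $D(t)$ of the adversarial tree---it does not dominate. (Two smaller slips compound the confusion: your range $1\le r\le c-1$ includes $r=c-1$, which is precisely the parent-of-godfather case you intend to exclude; and the hash of the godfather at depth $nc$ governs the elections that create depths $nc+1,\dots,(n+1)c$, not the run beginning at $(n+1)c$.) If you repair the timing by letting the lifted branch ``reuse the elections already spent reaching $b$,'' those reused wins have the same headers and the same admissible parents, so the lifted branch simply reproduces the original path block for block and the exchange becomes vacuous rather than an improvement. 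Also, the adversary cannot ``freely re-time'' intermediate blocks: the slot is part of the election, so only slots in which it actually won are usable.

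The step that actually closes the lemma is a pruning argument, not a re-rooting one. Since the election header is (parent's random source, time) and a run has a single random source, there is only one election per slot for extending \emph{any} block of the run; a win can at best be replicated as blocks with identical headers appended to different platforms of the run, and such copies cannot raise the maximum depth. Consequently a sibling of a non-godfather block, together with its entire descendant subtree, is depth-for-depth a duplicate of blocks already obtainable inside the retained sibling's subtree, so deleting it leaves $D(t)$ unchanged: forks strictly inside a run may simply be removed. By contrast, sibling godfather blocks created at different slots have different hashes, hence independent downstream election processes, so forking at parents of godfather blocks genuinely increases growth. That dichotomy---dependent copies inside a run, independent sources only at the run boundary---is the whole proof; once you state it, no stochastic-domination coupling is needed, because the strategies being compared achieve literally the same set of depths at the same times.
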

\begin{proof}
Note that under the security model,
several types of grinding attacks are plausible.
First, at depth multiple of $c$, the adversary can grind on the header of the parent of the godfather block, and run an independent election in every round.
Secondly, for blocks sharing the same source of randomness, once an adversary is elected a leader, it can generate multiple blocks of the same header but
appending on different blocks.
However, adding multiple blocks with the same header cannot
make the tree any higher than the optimal scheme.

\begin{figure}[h]
    \centering
    \includegraphics[width=\textwidth]{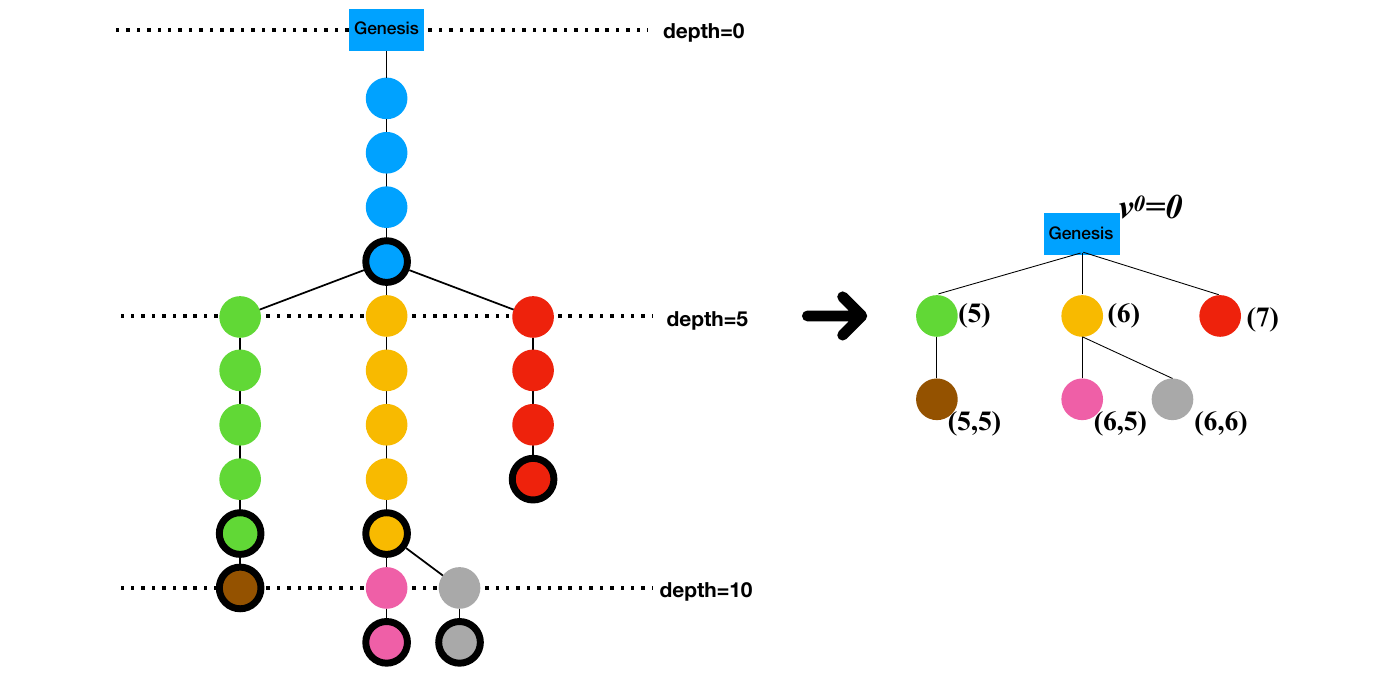}
    \caption{An example of $T^{(a)}(t)$ with $c=5$ under the optimal strategy to grow the private NaS tree. Blocks forking from the same godfather-block share the same common source of randomness, as shown by the colors. To grow the tree fast, it is optimal to grow a single chain until the next godfather block. Circles with black outlines indicate blocks that are currently mined on.}
    \label{fig:ccorrelation}
\end{figure}

Sibling non-godfather blocks share a common source of randomness and thus ‘‘mining events'' on these blocks are completely dependent.
Specifically, for sibling non-godfather blocks, there is only one leader election in each round.
As a result, for a particular non-godfather block $b$, the child-block of $b$'s any younger sibling (a sibling block mined after block $b$) should share the same header and identical source of randomness with one of $b$'s child-block.
Thus it is not necessary for a non-godfather block to have sibling blocks, that is, mining a sibling to a non-godfather block does not increase the growth rate of the longest chain in the adversarial block tree.
However, sibling god-father blocks have independent sources of randomness and thus mining multiple such block increase the growth rate of the longest chain.
\end{proof}

Here we use the representation in Fig.~\ref{fig:ccorrelation}, whose growth rate is the same as the full NaS adversarial tree as shown in Lemma~\ref{lemma:optimal_strategy_ccorrelation}. We can transform the tree $T^{(a)}(t)$ in Fig.~\ref{fig:ccorrelation} into a new random tree $T^0(t)$.  Every $c$ generations in $T^{(a)}(t)$ we can view as a single generation in $T^0(t)$; in the example of Fig.~\ref{fig:ccorrelation}, we have depth $0$, $5$, $10$ etc (i.e., all the godfather depths) as the generations in $T^0(t)$. $T^0(t)$ corresponds to a branching random walk. 
For example, the genesis block $B_0$ is the root of $T^0(t)$ at depth 0 with arrival time $0$. The children blocks of $B_0$ in $T^0(t)$ are the descendant blocks at depth $5$ in $T^{(a)}(t)$. We can order these children blocks in their arrival times. 
Consider block $B_1$ to be the first such block, then the arrival time of block $B_1$ is
\begin{equation*}
    S_1 = X_1 + X_2 + \ldots + X_{c},
\end{equation*}
where $X_i$ is the inter-arrival time between block at depth $i-1$ and block at depth $i$ in $T^{(a)}(t)$. Note that all the $X_i$'s are exponential with parameter $\lambda$, and they are all independent. Similarly, the arrival time $S_i$ of the $i$-th child of root in $T^0(t)$ is a sum of $i + c - 1$ i.i.d exponential random variables with parameter $\lambda$. 

Let the depth of the tree $T^a(t)$ and $T^0(t)$ be denoted by $D^a(t)$ and $D(t)$ respectively defined as the maximum depth of the blocks in the tree.

Similar to \S\ref{sec:analysis}, each vertices at generation $k$ can be labelled as a $k$ tuple of positive integers $(i_1,\ldots,i_k)$ with $i_j \geq c$ for $1\leq j \leq k$: the vertex $v = (i_1,\ldots,i_k)\in \I_k$ 
is the $(i_k-c+1)$-th child of vertex $(i_1,\ldots,i_{k-1})$
at level $k-1$.
Let $\I_k =\{(i_1,\ldots,i_k): i_j \geq c \rm{~for~} 1\leq j \leq k\}$, and  set $\I=\cup_{k>0} 
\I_k$. 
For such $v$ we also let $v^j=(i_1,\ldots,i_j)$, 
$j=1,\ldots,k$, denote the 
ancestor of $v$ at level $j$, with $v^k=v$. 
For notation convenience, we set $v^0=0$ as the root of the tree.

Next, let $\{\E_v\}_{v\in \I}$ be an i.i.d. family of exponential 
random variables
of parameter $\lambda$. For $v=(i_1,\ldots,i_k)\in \I_k$,
let $\W_v=\sum_{j\leq i_k} \E_{(i_1,\ldots,i_{k-1},j)}$ and
let $S_v=\sum_{j\leq k} \W_{v^j}$. This creates a labelled tree, with the following
interpretation: for $v=(i_1,\ldots,i_j)$,  the $W_{v^j}$ are the waiting
for $v^j$ to appear, measured from the appearance of $v^{j-1}$, and 
$S_v$ is the appearance time of $v$.

Let $S^*_k=\min_{v\in \I_k} S_v$. Note that $S^*_k$ is the time of appearance
of a block at level $k$ and therefore we have
\begin{equation}
  \label{eq:c_grow}
  \{D^a(t)\geq ck\}=\{D(t)\geq k\}=\{S^*_k\leq  t\}.
\end{equation}

$S^*_k$ is the minimum of a standard BRW. Introduce, for $\theta_c<0$,
the moment generating 
function
\begin{eqnarray*}
\Lambda_c(\theta_c)&=&\log \sum_{v\in \I_1} E(e^{\theta_c S_v})=
\log \sum_{j=c}^\infty E (e^{\sum_{i=1}^j \theta_c \E_i})\\
&=&\log \sum_{j=c}^\infty (E(e^{\theta_c \E_1}))^j  
=\log \frac{E^c(e^{\theta_c \E_1})}{1-E(e^{\theta_c\E_1})}.
\end{eqnarray*}
Due to the exponential law of $\E_1$,
$E(e^{\theta_c \E_1})= \frac{\lambda}{\lambda-\theta_c}$ and therefore
$\Lambda_c(\theta_c)=\log(-\lambda^c/\theta_c(\lambda-\theta_c)^{c-1})$. 

An important role is played by $\theta_c^*$, which is the negative solution to the equation $\Lambda_c(\theta_c) = \theta_c {\dot \Lambda_c(\theta_c)}$ and let $\eta_c$ satisfy that
$$\sup_{\theta_c<0} \left(\frac{\Lambda_c(\theta_c)}{\theta_c}\right)= \frac{\Lambda_c(\theta_c^*)}{\theta_c^*}=\frac{1}{\lambda \eta_c}.$$
Indeed, see \cite[Theorem 1.3]{shi}, we have the following.
\begin{proposition}
  \label{prop:prop-c}
$$\lim_{k\to\infty} \frac{S^*_k}{k}= \sup_{\theta_c<0} \left(\frac{\Lambda_c(\theta_c)}{\theta_c}\right)=\frac{1}{\lambda \eta_c}, \quad a.s.$$
\end{proposition}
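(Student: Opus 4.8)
The statement to prove, Proposition~\ref{prop:prop-c}, is a direct invocation of the classical law of large numbers for the minimal displacement of a branching random walk. The plan is therefore to verify that the random tree $T^0(t)$ constructed above genuinely fits the hypotheses of the general BRW result quoted (\cite[Theorem 1.3]{shi}), and then to identify the variational constant $\sup_{\theta_c<0}(\Lambda_c(\theta_c)/\theta_c)$ with $1/(\lambda\eta_c)$. First I would make precise the point-process description of one generation: from the root, the offspring positions are $\{S_v : v\in\I_1\}$, i.e.\ the arrival times $S_j = \E_1+\dots+\E_j$ for $j\geq c$, which is exactly a Poisson arrival process observed from its $c$-th point onward. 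This is the reproduction law $\mathcal L$, and $T^0(t)$ is by construction the BRW generated by i.i.d.\ copies of $\mathcal L$ (the independence of the $\E_v$ across $v\in\I$ gives the branching property). The displacement of a vertex $v$ at level $k$ is $S_v=\sum_{j\le k}\W_{v^j}$, and $S^*_k=\min_{v\in\I_k}S_v$ is the leftmost position at generation $k$; since all displacements are positive, this is a BRW on $[0,\infty)$ drifting to $+\infty$, so ``leftmost'' is the natural object and \cite[Theorem 1.3]{shi} applies in the form $S^*_k/k \to \gamma$ a.s.\ where $\gamma = \sup_{\theta<0}\Lambda_c(\theta)/\theta$ (equivalently the inverse of the rate function's relevant value).

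The second step is to compute $\Lambda_c$ and check it is finite on a left-neighbourhood of $0$, which is needed for the theorem to give a finite nonzero $\gamma$. The excerpt already records $\Lambda_c(\theta_c)=\log\sum_{j\ge c}(E e^{\theta_c\E_1})^j = \log\big(E^c(e^{\theta_c\E_1})/(1-E(e^{\theta_c\E_1}))\big)$, and with $\E_1\sim\mathrm{Exp}(\lambda)$ one has $E e^{\theta_c\E_1} = \lambda/(\lambda-\theta_c)$, valid for all $\theta_c<\lambda$, in particular all $\theta_c<0$. This yields $\Lambda_c(\theta_c)=\log\!\big(-\lambda^c/(\theta_c(\lambda-\theta_c)^{c-1})\big)$, finite and smooth on $(-\infty,0)$. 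I would then note that $\theta_c\mapsto\Lambda_c(\theta_c)/\theta_c$ is finite throughout $(-\infty,0)$, tends to $0^+$ as $\theta_c\to-\infty$ (the numerator grows only logarithmically) and blows up as $\theta_c\to 0^-$, so the supremum is attained at an interior critical point; setting the derivative to zero gives the tangency condition $\Lambda_c(\theta_c^*) = \theta_c^*\,\dot\Lambda_c(\theta_c^*)$, which is precisely the defining equation for $\theta_c^*$ in the statement, and defining $\eta_c$ by $\Lambda_c(\theta_c^*)/\theta_c^* = 1/(\lambda\eta_c)$ closes the identification. (For $c=1$ this recovers $\theta^*=-e\lambda$, $\eta_1=e$, consistent with Lemma~\ref{prop-1}.)

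The remaining item is a bookkeeping lemma converting the $T^0$ statement back to the original adversarial tree: the depth relabelling \eqref{eq:c_grow}, $\{D^a(t)\ge ck\}=\{D(t)\ge k\}=\{S^*_k\le t\}$, is immediate from the construction since godfather-blocks in $T^{(a)}$ at depth $ck$ are exactly the generation-$k$ vertices of $T^0$ and the inter-godfather waiting time is a sum of $c$ independent $\mathrm{Exp}(\lambda)$ variables; from $S^*_k/k\to 1/(\lambda\eta_c)$ one reads off $D(t)/t\to\lambda\eta_c$ and $D^a(t)/t\to c\lambda\eta_c/?$—actually one should be careful here: I would state the Proposition only for $S^*_k$, as written, and defer the $D^a(t)$ growth-rate corollary. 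I do not expect a serious obstacle; the one point requiring genuine care is confirming that the hypotheses of \cite[Theorem 1.3]{shi} (in particular the mild integrability/non-degeneracy conditions ensuring the BRW is supercritical with the convexity making the variational formula valid and the a.s.\ limit nonzero) are all met by this specific Poisson-derived offspring law — this is routine but must be checked rather than asserted, and it is the only place where the argument is more than a translation of notation.
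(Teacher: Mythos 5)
Your proposal is correct and follows essentially the same route as the paper, which proves Proposition~\ref{prop:prop-c} simply by citing \cite[Theorem 1.3]{shi} after setting up the labelled-tree/BRW representation and computing $\Lambda_c(\theta_c)=\log\bigl(-\lambda^c/\theta_c(\lambda-\theta_c)^{c-1}\bigr)$, exactly the verification you spell out (your checking of the offspring law, finiteness of $\Lambda_c$ on $(-\infty,0)$, and the interior tangency condition $\Lambda_c(\theta_c^*)=\theta_c^*\dot\Lambda_c(\theta_c^*)$ is in fact more detailed than the paper's one-line citation). The only nit is that as $\theta_c\to 0^-$ the ratio $\Lambda_c(\theta_c)/\theta_c$ tends to $-\infty$ (not $+\infty$), and the supremum is positive and attained at an interior point because $\Lambda_c(\theta_c)<0$ for $\theta_c$ sufficiently negative; this does not affect your conclusion.
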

In fact, much more is known, see e.g. \cite{hushi}. 
\begin{proposition}
  \label{prop:prop-c_tight}
  There exist explicit constants $c_1>c_2>0$ 
so that the sequence
$ S^*_k-k/\lambda \eta_c-c_1\log k$ is tight,
and
$$\liminf_{k\to \infty} S^*_k-k/\lambda \eta_c-c_2\log k=\infty, a.s.$$
\end{proposition}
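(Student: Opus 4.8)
The plan is to recognize Proposition~\ref{prop:prop-c_tight} as the standard logarithmic-correction (second-order) asymptotics for the minimal displacement of a branching random walk, and to deduce it --- exactly as Lemma~\ref{prop-2} is deduced from Lemma~\ref{prop-1} in the $c=1$ case --- from the now-classical results on such walks, once one records that the walk $\{S_v\}_{v\in\I}$ built in this appendix falls within their scope. Recall from \eqref{eq:c_grow} that $\{D^a(t)\ge ck\}=\{S^*_k\le t\}$, so a two-term expansion of $S^*_k$ is exactly what is needed downstream; and recall that, for $v$ at generation $k$, $S_v$ is a sum of $k$ i.i.d.\ copies of the generation-one displacement, so $\{S_v\}$ is a genuine time-homogeneous branching random walk --- here with an infinite but fully explicit offspring point process: from any node the $i$-th child ($i\ge c$) sits at position $\sum_{j=1}^{i}\E_j$, a $\mathrm{Gamma}(i,\lambda)$ variable built from i.i.d.\ $\mathrm{Exp}(\lambda)$ increments $\E_j$.

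First I would pass to the exponential tilt by $\theta_c^*$: set $v_c:=\Lambda_c(\theta_c^*)/\theta_c^*=1/(\lambda\eta_c)$ and $V(v):=-\theta_c^*\bigl(S_v-v_c|v|\bigr)$, so that $S^*_k=v_c k+M^V_k/|\theta_c^*|$ with $M^V_k:=\min_{|v|=k}V(v)$. Writing $\psi(t):=\log E\sum_{|v|=1}e^{-tV(v)}$, one checks $\psi(1)=\Lambda_c(\theta_c^*)-\theta_c^* v_c=0$ by the choice of $v_c$, and $\psi'(1)=\theta_c^*\bigl(\Lambda_c'(\theta_c^*)-v_c\bigr)=0$ by the very relation that defines $\theta_c^*$, namely $\theta_c^*\Lambda_c'(\theta_c^*)=\Lambda_c(\theta_c^*)$ (equivalently, $\theta_c^*$ being the critical point of $\theta\mapsto\Lambda_c(\theta)/\theta$). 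Hence $\{V(v)\}$ is in the \emph{boundary case}. It then remains to verify the mild auxiliary hypotheses of the branching-random-walk minimum theorems: finiteness and strict positivity of the variance $\sigma^2:=E\sum_{|v|=1}V(v)^2e^{-V(v)}$, an $X(\log X)^2$-type integrability condition on the offspring point process, and non-latticeness of the displacements. These all hold essentially for free here: because $E(e^{\theta_c^*\E_1})=\lambda/(\lambda-\theta_c^*)\in(0,1)$, each such quantity is a convergent series $\sum_{i\ge c}(\text{polynomial in }i)\,(\lambda/(\lambda-\theta_c^*))^{i}$ (or an integral against a Gamma density), and non-latticeness is immediate since the displacements have densities.

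With the boundary-case normalization and these conditions in hand, I would invoke the results of Hu and Shi \cite{hushi} on the minimal position of a boundary-case branching random walk --- whose upper bounds, together with A\"id\'ekon's convergence in law of the recentered minimum \cite{aidekon}, give the tightness statement, and whose a.s.\ lower bounds give the $\liminf$ statement --- to conclude that $S^*_k-k/(\lambda\eta_c)-c_1\log k$ is tight and $\liminf_k\bigl(S^*_k-k/(\lambda\eta_c)-c_2\log k\bigr)=+\infty$ a.s., for explicit constants $c_1>c_2>0$ (one may take $c_1=3/(2|\theta_c^*|)$ and any $c_2<1/(2|\theta_c^*|)$). The one piece of genuine work --- and the step I expect to be the main obstacle --- is the bookkeeping needed to match the infinite-offspring walk here to the precise hypotheses under which the cited theorems are stated (several formulations exist, with slightly different moment demands), and to control the lower tail of the generation-one displacement used in those arguments. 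The latter is reassuring to have in hand: since each $\E_j>0$ and the child positions increase in $i$, the generation-one point process is bounded below by a single $\mathrm{Gamma}(c,\lambda)$ atom, which furnishes exactly the kind of lower-tail control those proofs require, so the adaptation is routine despite the non-finite offspring distribution.
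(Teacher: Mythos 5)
Your proposal is correct and follows essentially the same route as the paper, which proves this proposition purely by appeal to the branching-random-walk literature (the second-order results of Hu--Shi \cite{hushi}, with \cite{aidekon} for the recentered minimum), exactly as for the $c=1$ case in Lemma~\ref{prop-2}. The only difference is that you actually carry out the reduction the paper leaves implicit --- tilting by $\theta_c^*$ to check the boundary-case normalization $\psi(1)=\psi'(1)=0$, verifying the moment/non-lattice hypotheses despite the infinite offspring via the geometric decay of $\lambda/(\lambda-\theta_c^*)$, and identifying $c_1=3/(2|\theta_c^*|)$ and any $c_2<1/(2|\theta_c^*|)$ --- which is a faithful filling-in of the cited results rather than a different argument.
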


Note that Propositions \ref{prop:prop-c},\ref{prop:prop-c_tight}
and \eqref{eq:c_grow} imply in particular
that  $D^a(t)\leq c\eta_c\lambda t$ for all large $t$, a.s., and also that
\begin{equation}
  \label{eq:c_converse}
\mbox{\rm  if
$c\eta_c\lambda>1$ then } D^a(t)>t\; \mbox{\rm for all large $t$, a.s.}.
\end{equation}

Let us define $\phi_c := c\eta_c$, then $\phi_c \lambda$ is the growth rate of private $c$-correlated NaS tree. One can check that $\phi_c$ is the solution to the same equation as in \cite{pos_revisited_arxiv}, where the same problem is solved with a differential equation approach. \cite{pos_revisited_arxiv} also proves the uniqueness of $\phi_c$ and provides an approximation for large $c$: $\phi_c  =  1+ \sqrt\frac{\ln c}{c}  + o\Big( \sqrt\frac{\ln c}{c} \Big)$.

We will need also tail estimates for the event
$D(t)>\eta_c\lambda t+x$. While such estimates can be read from 
\cite{shi}, we bring instead a quantitative statement suited 
for our
needs.
\begin{theorem}
  \label{thm:c-tail}
  For $x>0$ so that $\eta_c\lambda t+x$ is
  an integer,
  \begin{equation}
    \label{eq:c-tail}
    P(D^a(t)\geq \phi_c\lambda t+cx) = P(D(t)\geq \eta_c\lambda t+x)\leq  e^{\Lambda_c(\theta_c^*)x}.
  \end{equation}
\end{theorem}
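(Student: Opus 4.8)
The plan is to mirror exactly the proof of Lemma~\ref{theo-tail} (the $c=1$ tail bound), which is the direct ancestor of this statement, and simply carry the extra bookkeeping that the $c$-correlation introduces into the branching random walk. First I would record the union bound over level-$k$ vertices: by the identity \eqref{eq:c_grow}, $\{D(t)\geq k\} = \{S^*_k \leq t\} = \{\min_{v\in\I_k} S_v \leq t\}$, so $P(D(t)\geq k) \leq \sum_{v\in\I_k} P(S_v \leq t)$. Here $\I_k = \{(i_1,\dots,i_k): i_j\geq c\}$, and for $v=(i_1,\dots,i_k)$ the appearance time $S_v$ has, by the construction of the $\W$'s and $\E$'s, the same law as a sum of $|v| := i_1+\cdots+i_k$ i.i.d.\ $\mathrm{Exp}(\lambda)$ variables. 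This is the one structural difference from the $c=1$ case: each "step" in the BRW now costs at least $c$ exponential clocks rather than at least $1$.

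Next I would apply Chernoff/Chebyshev at the optimal tilt $\theta_c^*<0$. For $v\in\I_k$ with $|v|=\ell\geq ck$,
\[
P(S_v\leq t) \leq E e^{\theta_c^* S_v}\, e^{-\theta_c^* t} = \Bigl(\tfrac{\lambda}{\lambda-\theta_c^*}\Bigr)^{\ell} e^{-\theta_c^* t}.
\]
Then I would sum over $\I_k$. Writing $r := \lambda/(\lambda-\theta_c^*) \in (0,1)$, the sum factorizes:
\[
\sum_{v\in\I_k} r^{|v|} = \Bigl(\sum_{i\geq c} r^i\Bigr)^k = \Bigl(\tfrac{r^c}{1-r}\Bigr)^k = e^{k\,\Lambda_c(\theta_c^*)},
\]
using precisely the definition $\Lambda_c(\theta_c)=\log\bigl(E^c(e^{\theta_c\E_1})/(1-E(e^{\theta_c\E_1}))\bigr) = \log(r^c/(1-r))$ at $\theta_c=\theta_c^*$. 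Combining, $P(D(t)\geq k) \leq e^{k\Lambda_c(\theta_c^*) - \theta_c^* t}$.

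Finally I would substitute $k = \eta_c\lambda t + x$ (an integer by hypothesis) and exploit the defining property of $\theta_c^*$, namely that it is the maximizer, so $\Lambda_c(\theta_c^*)/\theta_c^* = 1/(\lambda\eta_c)$, equivalently $\eta_c\lambda\,\Lambda_c(\theta_c^*) = \theta_c^*$. Then
\[
k\,\Lambda_c(\theta_c^*) - \theta_c^* t = (\eta_c\lambda t + x)\Lambda_c(\theta_c^*) - \theta_c^* t = \eta_c\lambda t\,\Lambda_c(\theta_c^*) - \theta_c^* t + x\Lambda_c(\theta_c^*) = x\,\Lambda_c(\theta_c^*),
\]
which gives $P(D(t)\geq \eta_c\lambda t + x) \leq e^{\Lambda_c(\theta_c^*)x}$, and the left equality $P(D^a(t)\geq \phi_c\lambda t + cx) = P(D(t)\geq \eta_c\lambda t + x)$ is immediate from \eqref{eq:c_grow} with $\phi_c = c\eta_c$. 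I do not anticipate a genuine obstacle here — the argument is a verbatim transcription of Lemma~\ref{theo-tail}. The only points requiring care are: (i) confirming that $S_v$ is indeed distributed as a sum of exactly $|v|$ i.i.d.\ exponentials under the labeling convention with $i_j\geq c$ (the offset in "the $(i_k-c+1)$-th child" must be tracked so that the waiting-time sums line up), and (ii) checking that $r^c/(1-r) < 1$ at $\theta_c^*$ so that $\Lambda_c(\theta_c^*)<0$ and the bound is genuinely decaying in $x$ — this follows since $\theta_c^*$ is characterized by $\Lambda_c(\theta_c^*) = \theta_c^*\dot\Lambda_c(\theta_c^*)$ with $\theta_c^*<0$ and $\dot\Lambda_c>0$, forcing $\Lambda_c(\theta_c^*)<0$.
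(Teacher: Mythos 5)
Your proposal is correct and coincides with the paper's own proof: the same union bound over $\I_m$ via \eqref{eq:c_grow}, the same Chernoff bound at $\theta_c^*$ using that $S_v$ is a sum of $|v|$ i.i.d.\ $\mathrm{Exp}(\lambda)$ variables, the same geometric summation giving $e^{\Lambda_c(\theta_c^*)m}$, and the same cancellation from $\eta_c\lambda\,\Lambda_c(\theta_c^*)=\theta_c^*$. Your two side-checks (the labeling convention with $i_j\geq c$ and the sign of $\Lambda_c(\theta_c^*)$) are both sound and only make explicit what the paper leaves implicit.
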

\begin{proof}
  We use a simple upper bound. Write $m=\eta_c\lambda t+x$.
  Note that by \eqref{eq:c_grow},
  \begin{equation}
    \label{eq:c-bound}
    P(D(t)\geq m)=P(S^*_{ m}\leq t)
  \leq \sum_{v\in \I_{m}} P(S_v\leq t).
\end{equation}
  For $v=(i_1,\ldots,i_k)$, set $|v|=i_1+\cdots+i_k$. Then,
  we have that $S_v$ has the same law as 
  $\sum_{j=1}^{|v|} \E_j$. Thus, by Chebycheff's inequality,
  for $v\in \I_{m}$,
  \begin{equation}
    \label{eq:upper_1}
    P(S_v\leq  t)\leq Ee^{\theta_c^* S_v} e^{-\theta_c^*t}=\left(\frac{\lambda}
  {\lambda-\theta_c^*} \right)^{|v|} e^{-\theta_c^* t}.
\end{equation}
  And
  \begin{eqnarray}
    \label{eq:upper_2}
    \sum_{v\in \I_{m}}\left(\frac{\lambda}
  {\lambda-\theta_c^*} \right)^{|v|}
  &=&\sum_{i_1\geq c,\ldots,i_{m}\geq c} 
  \left(\frac{\lambda}
  {\lambda-\theta_c^*} \right)^{\sum_{j=1}^m i_j}\nonumber \\
  &=&\left(\sum_{i\geq c}\left(\frac{\lambda}
  {\lambda-\theta_c^*} \right)^i\right)^m= e^{\Lambda_c(\theta_c^*){m}}.
\end{eqnarray}
Combining \eqref{eq:upper_1}, \eqref{eq:upper_2} and \eqref{eq:c-bound}
yields \eqref{eq:c-tail}.
\end{proof}

\section{Nakamoto-PoS Protocol Pseudocode}
\label{app:pseudocode}
\begin{algorithm}
	{\fontsize{10pt}{10pt}\selectfont \caption{Nakamoto-PoS $(c, s, \delta)$}
	\label{alg:PoS}
\begin{algorithmic}[1]
\Procedure{Initialize}{ } 
\State ${\rm BlkTree} \gets {\rm genesis}$ \colorcomment{Blocktree}
\State ${\rm parentBk}$ $\gets {\rm genesis}$ \colorcomment{Block to mine on}
\State  unCnfTx $  \gets \phi$ \colorcomment{Blk content: Pool of unconfirmed {\sf tx}s}
\EndProcedure
\vspace{1mm}
\Procedure{PosMining}{coin} 
\While{True}
\State $\textsc{SleepUntil}({\rm SystemTime}\;\%\;\delta == 0)$ \colorcomment{System time is miner's machine time}
\State time  $\gets$  SystemTime
\State ({\sf KES}.$vk$,{\sf KES}.$sk$),({\sf VRF}.$pk$,{\sf VRF}.$sk$)
$\gets$ coin.\textsc{Keys()}
\State \maincolorcomment{Update the stake according to the stake distribution in the $s$-th last block in the main chain.}
\State stakeBk $\gets$  {\sc SearchChainUp}$({\rm parentBk},s$)
\State ${\rm stake} \gets {\rm coin}.\textsc{Stake}({\rm stakeBk})$ \label{algo:stake}
\State $\rho \gets ${\sc UpdateGrowthRate}(parentBk)
\State \maincolorcomment{Three sources of randomness: a common source (parentBk.content.RandSource),
a private source (coinSecretKey), and time.}
\State ${\rm header} \gets \langle$parentBk.content.RandSource ,\;time$\rangle$
\State $\langle {\rm hash}, {\rm proof} \rangle \gets
\text{\sc VRFprove}({\rm header},\text{{\sf VRF}.$sk$})$
    \label{algo:vrfproof}
	\colorcomment{Verifiable Random Function}
\If {hash  $< \rho \times {\rm stake}$}
        \label{algo:bias}
        \colorcomment{Block generated}
    \State \maincolorcomment{Update common source of randomness every $c$-th block in a chain as per $c$-correlation scheme}
    \If {parentBk.{\sc Height}() \;$\%\; c == c-1$} RandSource $\gets$ hash \label{algo:c-correlation}
    \Else
        { RandSource $\gets$ parentBk.content.RandSource}
    \EndIf
    \State {\rm state} $\gets$ {\sc Hash}({\rm parentBk})
    \State content $ \gets$ $\langle$ {\rm unCnfTx}, {\rm coin}, {\rm RandSource},  {\rm hash},  {\rm proof},
    state $\rangle$ \textbf{and} \textit{break}
    \label{algo:sign}
\EndIf
\EndWhile
\maincolorcomment{Return header along with signature on content}
\State \Return $\langle {\rm header}, {\rm content}, \textsc{Sign}({\rm content}, \text{{\sf KES}.$sk$})  \rangle$

\EndProcedure
\vspace{1mm}
\\
\maincolorcomment{Function to listen messages and update the blocktree}
\Procedure{ReceiveMessage}{\texttt{X}} \colorcomment{Receives messages from network}
\If {\texttt{X} is a valid {\sf tx}}
    \State undfTx $  \gets $ unCnfTx $\cup\; \{\texttt{X}\}$

	\ElsIf{ {\sc IsValidBlock}(\texttt{X})}
		\State $ \texttt{X}_{\rm fork} \gets$ the highest block shared by the main chain and the chain leading to \texttt{X}
		\State $L_{\rm fork} \gets$ min(parentBk.{\sc Height}(), \texttt{X}.{\sc Height()}) - ${\texttt{X}_{\rm fork}}$.{\sc Height()}
		\If{$L_{\rm fork}  < s$} \colorcomment{If the fork is less than $s$ blocks}
		    \If { parentBk.{\sc Height}() $< \texttt{X}$.{\sc Height}() }
			
            			\State \textsc{ChangeMainChain(\texttt{X})}
            	\colorcomment{If the new chain is longer}
    			\EndIf
    		
    		\Else
			\label{algo:lc}
			\colorcomment{check $s$-truncated longest chain rule}
			\State MainChainBk $\gets$   {\sc SearchChainDown}$({\rm parentBk},\texttt{X}_{\rm fork},s$)
			\colorcomment{find the $s$-th block down the main chain from fork}
			\State NewChainBk $\gets$  {\sc SearchChainDown}$(\texttt{X},\texttt{X}_{\rm fork},s$)
			\colorcomment{find the $s$-th block down the new chain from fork}
			\If { NewChainBk.{\rm header}.{\rm time} $<$ MainChainBk.{\rm header}.{\rm time} }
            			\State \textsc{ChangeMainChain(\texttt{X})}
            			\colorcomment{If the new chain is denser}
			\EndIf
    			\EndIf
		\EndIf

\EndProcedure

\vspace{1mm}
\Procedure{IsValidBlock}{\texttt{X}} \colorcomment{returns true if a block is valid}
	\If { {\bf not} {\sc IsUnspent}(\texttt{X}.content.coin)}
	 \Return False
	\EndIf
    	\If{  \texttt{X}.header.time $>$ SystemTime} \Return False
	\EndIf
	\If{{\sc VRFverify}(\texttt{X}.header,
	\texttt{X}.content.hash,\texttt{X}.content.proof,\texttt{X}.content.coin.\text{{\sf VRF}.$pk$}) } \label{algo:vrfverify}
        		\State \Return True
   	\Else
        		\State \Return False
	\EndIf
\EndProcedure


\vspace{1mm}
\Procedure{Main}{ }
    \State \textsc{Initialize}()
    \State \textsc{StartThread(ReceiveMessage)} 
    \While{True}
        \State block =  \textsc{PosMining}(coin)
        \State\textsc{SendMessage}(block)  \colorcomment{Broadcast to the whole network}

    \EndWhile
\EndProcedure

\end{algorithmic}
}
\end{algorithm}

\end{document}